\documentclass[11pt]{article}

\sloppy

\usepackage{amsmath,amsthm}
\usepackage{algorithm}
\usepackage{algpseudocode}
\usepackage{graphicx} 
\usepackage{nicefrac}
\usepackage{braket}
\newcommand{\id}{\mathsf{id}}

\newcommand{\CONGEST}{\textsf{CONGEST}}
\newcommand{\E}{\mathbb{E}}
\DeclareMathOperator{\poly}{poly}
\DeclareMathOperator{\polylog}{polylog}
\newcommand{\F}{{\mathsf{OUT}}}
\renewcommand{\H}{{\mathsf{IN}}}

\newcommand{\eps}{\varepsilon}
\newcommand{\leader}{v_{\mathsf{lead}}}

\newcommand{\setup}{\mathsf{Setup}}
\newcommand{\checking}{\mathsf{Checking}}
\newcommand{\find}{\mathsf{found}}

\newcommand{\qGamma}{q}

\newtheorem{theorem}{Theorem}
\newtheorem{lemma}{Lemma}

\newtheorem{claim}{Claim}
\newtheorem{fact}{Fact}
\newenvironment{claimproof}[1]{\par\noindent\emph{Proof of claim.}\space#1}{\hfill$\diamond$\medskip}


\newboolean{all}
\newboolean{short}
\setboolean{all}{false}
\setboolean{short}{false}
\ifthenelse{\boolean{all}}{
\usepackage[bibliography=common]{apxproof}
\newcommand{\withappendix}[1]{{\color{red} #1}}
\newcommand{\withoutappendix}[1]{{\color{cyan} #1}}
}{
\ifthenelse{\boolean{short}}{
\usepackage[bibliography=common]{apxproof}
\newcommand{\withappendix}[1]{#1}
\newcommand{\withoutappendix}[1]{}
}
{
\usepackage[appendix=inline,bibliography=common]{apxproof}
\newcommand{\withappendix}[1]{}
\newcommand{\withoutappendix}[1]{#1}
}
}

\usepackage{fullpage,authblk,enumitem,amssymb,xcolor}
\setlist{itemsep=0em,topsep=.5em,parsep=.2em} 

\bibliographystyle{plain}

\begin{document}

\title{Even-Cycle Detection in the Randomized and Quantum CONGEST Model%
\thanks{Research supported in part by the 
European QuantERA project QOPT (ERA-NET Cofund 2022-25),
the French ANR projects DUCAT (ANR-20-CE48-0006) and QUDATA (ANR-18-CE47-0010), and
the French PEPR integrated project EPiQ (ANR-22-PETQ-0007). This work has received support under the program "Investissement d'Avenir" launched by the French Government and implemented by ANR, with the "ANR‐21‐CMAQ-0001, FQPS".}}
\author[1]{Pierre Fraigniaud}
\author[1]{Maël Luce}
\author[1]{Frédéric Magniez}
\author[2]{Ioan Todinca}
\affil[1]{
  {Université Paris Cité, CNRS, IRIF},
  {Paris},
  {France}
}
\affil[2]{
  {Université d'Orléans, INSA-Centre Val de Loire, LIFO},
 {Orléans},
  {France}   
}
\date{}

\maketitle 

\begin{abstract}
We show that, for every $k\geq 2$, $C_{2k}$-freeness can be decided in $O(n^{1-1/k})$ rounds in the \CONGEST{} model by a randomized Monte-Carlo distributed algorithm with one-sided error probability $\nicefrac13$. This matches the best round-complexities of previously known algorithms for $k\in\{2,3,4,5\}$ by Drucker et al. [PODC'14] and Censor-Hillel et al. [DISC'20], but improves the complexities of the known algorithms for $k>5$ by  Eden et al. [DISC'19], which were essentially of the form $\tilde O(n^{1-2/k^2})$. Our algorithm uses colored BFS-explorations with threshold, but with an original \emph{global} approach that enables to overcome a recent impossibility result by Fraigniaud et al. [SIROCCO'23] about using colored BFS-exploration with \emph{local} threshold for detecting cycles.

We also show how to quantize our algorithm for achieving a round-complexity $\tilde O(n^{\nicefrac{1}{2}-\nicefrac{1}{2k}})$ in the quantum setting for deciding $C_{2k}$ freeness. Furthermore, this allows us to  improve the known quantum complexities of the simpler problem of detecting cycles of length \emph{at most}~$2k$ by van Apeldoorn and de Vos [PODC'22]. Our quantization is in two steps. First, the congestion of our randomized algorithm is reduced, to the cost of reducing its success probability too. Second, the success probability is boosted using a new quantum framework derived from sequential algorithms, namely Monte-Carlo quantum amplification.
\end{abstract}

\section{Introduction}\label{sec2}

For every fixed graph $H$, $H$-freeness is the problem consisting in deciding whether any given graph~$G$ contains $H$ as a subgraph. In the distributed setting, in which each vertex of~$G$ is a computing element, the decision rule is specified as: $G$ is $H$-free if and only if all nodes of $G$ accept. That is, if $H$ is a subgraph of~$G$, then at least one node of~$G$ must reject, otherwise all nodes must accept. 

$H$-freeness has been extensively studied in various distributed models, including the standard \CONGEST\/ model (see the recent survey~\cite{Censor-Hillel21,Censor-Hillel22}). Recall that, in this latter model, a network is modeled as a simple connected $n$-vertex graph~$G$, where the computing nodes occupy the vertices of~$G$, and they exchange messages along the edges of~$G$. All nodes start computing at the same time, and they perform in lockstep, as a sequence of synchronous rounds. At each round, every node sends a (possibly different) message to each of its neighbors in~$G$, receives the message sent by its neighbors, and performs some local computation. The main constraint imposed on computation by the \CONGEST\/ model is on the size of the messages exchanged during each round, which must not exceed $O(\log n)$ bits in $n$-node networks~\cite{Peleg2000}.

The specific case of \emph{cycle-detection}, i.e., deciding $C_k$-freeness for a fixed $k\geq 3$, has  been the focus of several contributions to the \CONGEST\/ model (cf. Table~\ref{tab:summary}) --- unless specified otherwise, all the algorithms mentioned there are in the randomized setting, i.e., they are Monte-Carlo algorithms with 1-sided error probability~$\nicefrac13$.
More generally, for $\eps\in(0,1)$, a randomized distributed algorithm~$\mathcal{A}$ solves $C_{k}$-freeness with one-sided error probability $\eps$ if, for every input graph~$G$,
\begin{itemize}
    \item If $G$ contains a cycle $C_{k}$ as a subgraph, then, with probability at least $1-\eps$, at least one node of $G$ outputs reject; 
    \item Otherwise, all nodes of $G$ output accept with probability~$1$.
\end{itemize}
For odd cycles of length at least~5, the problem is essentially solved. Indeed, for every $k\geq 3$, $C_k$-freeness can be decided in $O(n)$ rounds by a \emph{deterministic} algorithm~\cite{KorhonenR17}, and it was shown that, for every odd $k\geq 5$, deciding $C_k$-freeness requires $\tilde\Omega(n)$ rounds, even for randomized algorithms~\cite{DruckerKO13}. The case of triangles, i.e., deciding $C_3$-freeness, is however still open. The best known upper bound on the round-complexity of deciding $C_3$-freeness is $\tilde O(n^{1/3})$~\cite{chang2019improved}. However, it was shown that any polynomial lower bound for this problem, i.e., any bound of the form $\Omega(n^\alpha)$ with $\alpha>0$ constant, would imply major breakthroughs in circuit complexity~\cite{EdenFFKO19}.

\begin{table}[ht]
\vspace*{-0.5em}\centering
    \begin{tabular}{c|c|c|c|}
    Reference & Cycle & Complexity & Framework\\
    \hline 
    \cite{chang2019improved} & $C_3$ & $\tilde O(n^{1/3})$ & rand.    \\
    \cite{DruckerKO13,KorhonenR17} & $C_{2k+1}, k\geq 2$ & $\tilde\Theta(n)$  & $O$ det. / $\tilde\Omega$ rand. \\
    \hline 
    \cite{DruckerKO13} & $C_4$ & $\tilde\Theta(\sqrt{n})$  & $O$ det. / $\tilde\Omega$ rand. \\
    \cite{KorhonenR17} & $C_{2k}, k\geq 2$ & $\tilde\Omega(\sqrt{n})$  & rand.\\
    \cite{Censor-HillelFG20} & $C_{2k}, k\in\{2,3,4,5\}$ & $O(n^{1-1/k})$ & rand. \\
    \cite{EdenFFKO19} & $C_{2k}$, $k\geq 6$ even &  $\tilde O(n^{1-2/(k^2-2k+4)})$ & rand. \\
    \cite{EdenFFKO19} & $C_{2k}$, $k\geq 7$ odd & $\tilde O(n^{1-2/(k^2-k+2)})$  & rand. \\
     \cite{Censor-HillelFG20} & $\{C_\ell \mid 3 \leq \ell \leq 2k\}, k\geq 2$ & $\tilde O(n^{1-1/k})$ & rand.\\
     \hline 
    this paper & $C_{2k}, k\geq 2$ & $O(n^{1-1/k})$  &  rand.\\
     \hline 
    \cite{Censor-HillelFG22} & $C_3$ & $\tilde{O}(n^{1/5})$ & quant.\\
    \cite{censor2023discussion} & $C_4$ & $\tilde O(n^{1/4})$ & quant.\\
    \cite{ApeldoornV22} & $\{C_\ell \mid 3 \leq \ell \leq 2k\}, k\geq 2$ & $\tilde O(n^{\nicefrac12-\nicefrac{1}{4k +2}})$ & quant.\\
\hline 
   this paper & $C_{2k}, k\geq 2$ & $\tilde O(n^{\nicefrac{1}{2}-\nicefrac{1}{2k}})$  & quant. \\
   this paper & $C_{2k}, k\geq 2$ & $\tilde\Omega(n^{1/4})$  &  quant.\\
    this paper & $C_{2k+1}, k\geq 2$ & $\tilde\Theta(\sqrt{n})$  & quant.\\
   this paper & $\{C_\ell \mid 3 \leq \ell \leq 2k\}, k\geq 2$ & $\tilde O(n^{\nicefrac{1}{2}-\nicefrac{1}{2k}})$ & quant.\\
 \hline 
    \end{tabular}
    \vspace*{-0.6em}
    \caption{\vspace*{-2em}Summary of the results about deciding $C_k$-freeness in \CONGEST.}
    \label{tab:summary}
\end{table}

The landscape of deciding the presence of an even-size cycle is more contrasted. It was first shown that $C_4$-freeness can be decided in $O(\sqrt{n})$ rounds, and that this complexity is essentially optimal thanks to a lower bound of $\tilde\Omega(\sqrt{n})$ rounds~\cite{DruckerKO13}. It was then shown that the lower bound of $\tilde\Omega(\sqrt{n})$ rounds applies to deciding $C_{2k}$-freeness too, for every $k\geq 2$~\cite{KorhonenR17}. However, any lower bound of the form $\Omega(n^{1/2+\alpha})$ with $\alpha>0$ for deciding $C_{2k}$-freeness  for some $k\geq 3$ would imply new lower bounds in circuit complexity, which are considered hard to obtain (see~\cite{Censor-HillelFG20}). On the positive side, it was first proved that, for every $k\geq 2$,  $C_{2k}$-freeness can be decided in $O(n^{1-1/k(k-1)})$ rounds~\cite{FischerGKO18}. This was later improved in~\cite{EdenFFKO19}, where it is proved that $C_{2k}$-freeness can be decided in $O(n^{1-2/(k^2-2k+4)})$ for even $k\geq 4$, and in $O(n^{1-2/(k^2-k+2)})$ for odd $k\geq 3$. 
However, better results were obtained for small cycles, with algorithms for deciding $C_{2k}$-freeness performing in $O(n^{1-1/k})$ rounds, for $k\in\{3,4,5\}$~\cite{Censor-HillelFG20}. These algorithms were obtained by analyzing the congestion caused by colored BFS-explorations, and by showing that if the congestion exceeds a certain threshold then there must exist a $2k$-cycle in the network. Unfortunately, it was later shown (see~\cite{FraigniaudLT23}) that this technique does not extend to larger cycles, of length $2k$ for $k\geq 6$. 

A problem related to deciding $C_k$-freeness is deciding whether there is a cycle of length \emph{at most}~$2k$. Surprisingly, this problem is actually simpler, because deciding $C_k$-freeness can benefit from the fact that there are no cycles of length less than~$k$. This fact was exploited in \cite{Censor-HillelFG20} to design an algorithm deciding $\{C_\ell \mid 3 \leq \ell \leq 2k\}$-freeness in $O(n^{1-1/k})$ rounds, for any $k\geq 2$. 

Moreover, the \CONGEST\/ model can take benefits of quantum effects, as far as detecting subgraphs is concerned. Indeed, it was proved that deciding $C_3$-freeness (respectively, $C_4$-freeness) can be decided in $\tilde{O}(n^{1/5})$ rounds~\cite{Censor-HillelFG22} (resp., $\tilde{O}(n^{1/4})$ rounds~\cite{censor2023discussion}) in the quantum \CONGEST\/ model. Similarly,  $\{C_\ell \mid 3 \leq \ell \leq 2k\}$-freeness can be decided in $\tilde O(n^{\nicefrac{1}{2}-\nicefrac{1}{4k+2}})$ rounds~\cite{ApeldoornV22} in the quantum \CONGEST\/ model. However, since any lower bound on the round-complexity of a problem in quantum \CONGEST\/ is also a lower bound on the round-complexity of the same problem in the classical version of \CONGEST, the hardness of designing lower bounds for triangle-freeness and for $C_{2k}$-freeness for $k\geq 3$ in the classical setting also holds in the quantum setting. 

\subsection{Our Results}
\label{subsec:OurResults}

Our results are summarized in Table~\ref{tab:summary}. 
We considerably simplify the landscape of results about deciding $C_{k}$-freeness, and we extend it to the quantum \CONGEST\/ framework. 

Our first main contribution shows that, for every $k\geq 2$, the complexity of deciding $C_{2k}$-freeness is $O(n^{1-1/k})$, hence extending the results of Drucker et al.~\cite{DruckerKO13} and Censor-Hillel et al.~\cite{Censor-HillelFG20} to all $k\geq 6$, and improving the complexity of the algorithms by Eden et al.~\cite{EdenFFKO19}. 
Specifically, we show the following. 

\begin{theorem}\label{classical_theo}
For every  integer $k\geq 2$,  and every real $\eps>0$, there is an algorithm that solves $C_{2k}$-freeness with one-sided error probability $\eps$ in $O( \log^2(1/\eps) \cdot 2^{3k}\, k^{2k+3} \cdot n^{1-1/k})$ rounds in the \CONGEST\/ model.
\end{theorem}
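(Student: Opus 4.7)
The plan is to detect a $2k$-cycle by searching, from sampled source vertices, for two internally vertex-disjoint paths of length exactly $k$ ending at a common target. First, I would randomly color the vertices with $k$ colors, $\chi : V \to \{1,\ldots,k\}$, so that any fixed $2k$-cycle becomes ``properly layered'' (the $i$-th vertex along each of its two antipodal halves carries color $i$) with probability at least $1/k^{2k}$. In parallel, I would sample a set $S \subseteq V$ of candidate sources at rate roughly $n^{-1/k}$, ensuring that with constant probability at least one vertex of any given $2k$-cycle belongs to $S$.

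From each source $s \in S$, the algorithm launches a layered BFS of depth $k$: in round $i$, a vertex of color $i$ propagates a token carrying the identifier of $s$ only to its neighbors of color $i+1$. If two distinct tokens bearing the same source $s$ reach a common vertex $u$ at depth $k$ via two different last edges, then two internally disjoint paths of length $k$ from $s$ to $u$ have been witnessed, yielding a $C_{2k}$; in this case $s$ (or $u$) rejects. The bottleneck is congestion: naively running BFS from $|S| = \Theta(n^{1-1/k})$ sources floods each edge with too many source identifiers to fit into $O(\log n)$-bit messages per round.

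The key innovation, and what distinguishes the argument from the local-threshold approach of Censor-Hillel et al.\ (which was shown not to extend beyond $k=5$), is to apply a \emph{global} threshold: if the total number of tokens traversing the whole network in any single round exceeds a bound $\tau \approx c_k \cdot n^{1-1/k}$, the algorithm directly declares the presence of $C_{2k}$, without completing the BFS. The heart of the proof is a combinatorial lemma asserting that exceeding this global threshold does witness a $2k$-cycle. I expect this to be the main obstacle: the argument should be in the spirit of the Bondy--Simonovits extremal bound $\mathrm{ex}(n, C_{2k}) = O(n^{1+1/k})$, relating the number of length-$k$ rooted paths (equivalently, the total BFS congestion accumulated over $k$ rounds) to the existence of vertex pairs connected by two internally disjoint such paths.

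Once this combinatorial lemma is established, one execution of the full procedure detects a $C_{2k}$, when one exists, with probability at least $\Omega(1/(2^{3k} k^{2k+2}))$, where the exponential factors account for the random coloring of the two halves, the random sampling of $S$, and a union bound over the $O(k)$ possible rotations of the layer partition along the cycle. The target one-sided failure probability $\eps$ is then reached by $O(2^{3k} k^{2k+2} \log(1/\eps))$ independent repetitions, each of cost $O(k \cdot n^{1-1/k})$ rounds. The extra factor $\log(1/\eps)$ (yielding the $\log^2(1/\eps)$ in the stated bound) comes from an inner amplification needed so that each BFS phase can certify with high confidence that no token collision occurred. Summing these contributions yields the announced complexity of $O(\log^2(1/\eps) \cdot 2^{3k} k^{2k+3} \cdot n^{1-1/k})$ rounds.
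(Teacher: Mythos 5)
Your high-level plan (color coding plus BFS from a random sample of sources, with a threshold of order $n^{1-1/k}$ replacing the local threshold of Censor-Hillel et al.) points in the right direction, but two of your concrete steps fail, and the missing repairs are exactly the content of the paper's algorithm. First, sampling each vertex into $S$ at rate $n^{-1/k}$ does \emph{not} put a vertex of a fixed $2k$-cycle into $S$ with constant probability --- that event has probability only $O(k\,n^{-1/k})$. The paper never relies on $S$ hitting the cycle. Instead it splits into three cases: (i) cycles consisting only of \emph{light} vertices (degree $\le n^{1/k}$), detected by a BFS sourced at all light vertices, where the degree bound alone caps each node's load at $n^{(k-1)/k}\le\tau$; (ii) cycles meeting $S$, detected by a BFS sourced at $S$ (load trivially $\le|S|$); and (iii) the hard case of a cycle containing a \emph{heavy} vertex but avoiding $S$. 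There, the heavy vertex has $\ge n^{1/k}$ neighbors, hence with constant probability at least $k^2$ neighbors in $S$, so it lies in the set $W$ of such vertices, and the BFS is sourced at $W$ inside $G[V\smallsetminus S]$. Your proposal has no analogue of $W$ or of this case split, so heavy cycles avoiding $S$ are simply not detected.

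Second, your threshold mechanism is wrong in both the quantity measured and the action taken. You propose to monitor the \emph{total} number of tokens in the network per round and to \emph{declare a cycle} when it exceeds $c_k n^{1-1/k}$; but already in the first round the $\Theta(n^{1-1/k})$ sources of $S$ generate that many tokens in any graph, cycle or not (and aggregating a global count costs diameter-many rounds). The paper's threshold is \emph{per node}: a node receiving more than $\tau$ identifiers silently discards them. Correctness of case (iii) then rests on the Density Lemma: if in the BFS from $W$ some node at layer $i$ would accumulate more than $2^{i-1}(k-1)|S|\le\tau$ identifiers, then $G$ contains a $2k$-cycle \emph{through $S$}, which case (ii) already catches. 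This is not a Bondy--Simonovits count of rooted paths; it is a bespoke sparsification of the bipartite graph between $S$ and the accumulated sources, iteratively stripping low-degree vertices to extract a long alternating path that can be closed into a $2k$-cycle by two vertex-disjoint well-colored paths back to the congested node --- ensuring that disjointness is the delicate part and the reason the naive extremal argument does not suffice. A smaller issue: coloring with $k$ colors so that both halves of the cycle carry colors $1,\dots,k$ does not make the two paths meeting at depth $k$ internally disjoint (they may share a vertex at an earlier layer while arriving by different last edges); the paper uses $2k$ colors precisely so the two halves occupy disjoint color classes. Finally, the second $\log(1/\eps)$ factor comes from scaling the sampling probability $p$ (hence $|S|$ and $\tau$) by $\ln(3/\eps)$ so that heavy vertices land in $W$ with probability $1-\eps/3$, not from an inner amplification of the collision test.
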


Our second main contribution is related to analysing the speedup that can be obtained by allowing nodes to handle entangled quantum resources, yet exchanging at most $O(\log n)$ qubits at each round. We show that quantum resources enable to obtain a quadratic speedup for deciding cycle-freeness. Specifically, we show the following.

\begin{theorem}\label{quantum_theo_simplified}
In the quantum \CONGEST\/ model, the following holds:
\begin{itemize}
\item For every  integer $k\geq 2$, there is a quantum algorithm deciding $C_{2k}$-freeness with one-sided error probability $1/\mbox{\rm\small poly}(n)$ in $k^{O(k)}\cdot \mbox{\rm polylog}(n)\cdot n^{\nicefrac12-\nicefrac{1}{2k}}$ rounds, and any algorithm deciding $C_{2k}$-freeness with one-sided error probability at most $1/3$ performs in at least $\tilde\Omega(n^{1/4})$ rounds. 

\item For every  integer $k\geq 1$, there is a quantum algorithm deciding $C_{2k+1}$-freeness with one-sided error probability $1/\mbox{\rm\small poly}(n)$ in $\tilde O(\sqrt{n})$ rounds, and, for every  integer $k\geq 2$, any quantum algorithm deciding $C_{2k+1}$-freeness with one-sided error probability at most  $1/3$ performs in at least $\tilde\Omega(\sqrt{n})$ rounds. 
\end{itemize}
\end{theorem}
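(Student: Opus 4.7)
The plan is to establish the upper bound for $C_{2k}$-freeness by quantizing the classical algorithm of Theorem~\ref{classical_theo}, following the two-step recipe outlined in the abstract. At a high level, the classical algorithm is a Monte-Carlo routine that, in $O(n^{1-1/k})$ rounds, tries to certify the absence of a $2k$-cycle using colored BFS-explorations together with a global congestion threshold. In the first step of the quantization, I would reduce the congestion of each exploration by random subsampling (of colors, starting vertices, or the BFS layers themselves), at the price of lowering the per-invocation detection probability from a constant to a polynomially small $1/\poly(n)$; a suitable choice of sampling rate drives the round count down to roughly $\tilde O(n^{(1-1/k)/2}) = \tilde O(n^{1/2-1/(2k)})$. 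In the second step, I would boost that polynomially small success probability back to $1-1/\poly(n)$ using the Monte-Carlo quantum amplification framework, which applies amplitude amplification on top of a sequential-style randomized procedure while remaining compatible with the per-round $O(\log n)$ qubit budget of \CONGEST. Since amplification pays only polylogarithmically many extra rounds, the two steps compose multiplicatively to yield the stated $\tilde O(n^{1/2-1/(2k)})$ quantum round complexity; the very same algorithm also takes care of $\{C_\ell : 3\leq \ell \leq 2k\}$-freeness.

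For the odd-cycle upper bound $\tilde O(\sqrt{n})$ for $C_{2k+1}$-freeness, I would use a direct BFS-plus-quantum-search approach: each candidate centre vertex gathers its BFS ball of constant radius, and the network then quantumly searches over the polynomially many potential closing edges that would close a $(2k+1)$-cycle, exploiting standard quantum routing primitives and Grover-type amplitude amplification. Because $k$ is a constant, the bandwidth of these searches can be paid for in $\tilde O(\sqrt n)$ quantum rounds.

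For the lower bounds I would use the now-standard reduction from two-party set-disjointness to distributed subgraph detection, but plug in the quantum communication lower bound $\Omega(\sqrt{N})$ for disjointness on inputs of size $N$ in place of the classical $\Omega(N)$. For $C_{2k+1}$-freeness with $k\geq 2$, the classical hard instance encodes a disjointness input of size $\Theta(n)$ across a polylogarithmic cut, so the quantum disjointness bound immediately yields $\tilde\Omega(\sqrt{n})$ rounds, matching the upper bound up to polylogarithmic factors. For $C_{2k}$-freeness, applying this template to the known $\tilde\Omega(\sqrt n)$-hard instance only gives a trivial quantum bound (the cut is already $\tilde O(\sqrt n)$), so I would instead construct a denser hard gadget: a graph on $\Theta(n)$ vertices in which $\Theta(n^{3/2})$ bits of a disjointness instance are encoded, while the cut between the two parties remains $\tilde O(\sqrt n)$. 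The quantum disjointness lower bound then yields $\tilde\Omega(\sqrt{n^{3/2}}/\sqrt{n}) = \tilde\Omega(n^{1/4})$ rounds.

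The main obstacle, on the upper-bound side, is the congestion-reduction step: the classical algorithm of Theorem~\ref{classical_theo} relies on a \emph{global} thresholding argument whose detection guarantee is delicate, and randomly subsampling the colored BFS-explorations without destroying that guarantee is where most of the technical work has to go. A secondary subtle point is to cast the reduced-congestion subroutine in a form amenable to quantum amplification: one needs a \emph{sequential} Monte-Carlo abstraction of the \CONGEST{} algorithm that can be invoked as a unitary subroutine, so that amplitude amplification does not require running many parallel copies and blowing up the per-round qubit budget. On the lower-bound side, the main novelty is the denser gadget for $C_{2k}$; by contrast, the $C_{2k+1}$ lower bound is essentially a quantum reread of the existing classical reduction.
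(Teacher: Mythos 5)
Your proposal follows the paper's general outline (congestion reduction, then quantum Monte-Carlo amplification, then set-disjointness reductions for the lower bounds), but two of your steps contain genuine errors.

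First, the accounting of the two-step quantization is backwards. You claim that subsampling drives the \emph{round count} of the subroutine down to $\tilde O(n^{1/2-1/(2k)})$ while the success probability drops to $1/\poly(n)$, and that amplification then costs only polylogarithmically many extra rounds. Amplitude amplification of a one-sided success probability $\eps$ costs a multiplicative $\Theta(1/\sqrt{\eps})$, which for $\eps=1/\poly(n)$ is polynomial, not polylogarithmic; as written, your composition would give $n^{1/2-1/(2k)}\cdot\sqrt{\poly(n)}$ rounds. What actually works (and what the paper does) is to push the subroutine all the way down to \emph{constant} congestion and round complexity $k^{O(k)}$, accepting success probability $\eps=\Theta(1/\tau)=\Theta(n^{-(1-1/k)})$; the factor $n^{1/2-1/(2k)}$ then comes entirely from the $1/\sqrt{\eps}$ of the amplification (Theorem~\ref{theo:distampli}). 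You also omit the additive $D/\sqrt{\eps}$ term incurred by the distributed amplification, which must be killed by the diameter-reduction technique of~\cite{eden2022sublinear}; without it the bound fails on high-diameter graphs. Your odd-cycle routine (gathering constant-radius BFS balls and searching over closing edges) has an unaddressed congestion problem at high-degree vertices; the paper instead reuses the same subsampled color-BFS with initiation probability $1/n$.

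Second, the lower bounds do not follow from the plain $\Omega(\sqrt{N})$ quantum communication bound for set-disjointness. With the known gadgets (cut $\Theta(n)$ and $N=\Theta(n^{3/2})$ for $C_4$; cut $\Theta(\sqrt n)$ and $N=\Theta(n)$ for $C_{2k}$, $k\geq3$; cut $\Theta(n)$ and $N=\Theta(n^2)$ for odd cycles), plugging in $\Omega(\sqrt N)$ yields only trivial bounds. The paper instead invokes the \emph{round--communication trade-off} of Braverman et al.: any $r$-round quantum protocol for disjointness needs $\Omega(r+N/r)$ qubits, which with $r=T(n)$ gives $T(n)\cdot(\mbox{cut})\cdot\log n=\Omega(N/T(n))$ and hence the quadratic gain in $T(n)$. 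Your alternative fix --- a new gadget encoding $\Theta(n^{3/2})$ disjointness bits across a cut of only $\tilde O(\sqrt n)$ --- is an unproven construction that is not needed and is not known to exist; your description of the odd-cycle gadget ($\Theta(n)$ bits over a polylogarithmic cut) also does not match the actual construction.
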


Therefore, our quantum algorithm for deciding $C_4$-freeness performing in $\tilde O(n^{1/4})$ rounds is, up to logarithmic multiplicative factors, optimal. For $k\geq 3$, we face the same difficulty as in the classical framework for designing lower bounds of deciding $C_{2k}$-freeness.  For odd cycles, we show that quantum algorithms enable quadratic speedup too, and this is essentially optimal, up to logarithmic factors. That is, the complexity of deciding $C_{2k+1}$-freeness in the quantum \CONGEST\/ model is $\tilde\Theta(\sqrt{n})$ rounds for all $k\geq 2$. As for the (classical) \CONGEST\/ model, the case of $C_3$-freeness remains open in absence of any non trivial lower bound. However, we point out that our new quantum framework enables to improve the quantum complexity of detecting cycles of length \emph{at most} $2k$ from~\cite{ApeldoornV22}. For every $k\geq 2$, our algorithm for $\{C_\ell \mid 3 \leq \ell \leq 2k\}$-freeness performs in $\tilde O(n^{\nicefrac{1}{2}-\nicefrac{1}{2k}})$ instead of $\tilde O(n^{\nicefrac{1}{2}-\nicefrac{1}{4k +2}})$. 

\subsubsection{Our Technique for \CONGEST}

As in previous papers about deciding $C_{2k}$-freeness~\cite{Censor-HillelFG20,eden2022sublinear}, we separate the case of detecting \emph{light} cycles from the case of detecting \emph{heavy} cycles, where the former are $2k$-cycles containing solely nodes the degrees of which do not exceed a specific bound~$d_{max}$, and the latter are $2k$-cycles that are not light. We use the same technique as~\cite{Censor-HillelFG20} for detecting light cycles, with the same bound $d_{max}=n^{1/k}$ on the degrees. Our main contribution is a new technique for detecting heavy cycles, i.e., cycles containing at least one node of degree larger than~$n^{1/k}$. 

A central technique for detecting $2k$-cycles is \emph{color coding}~\cite{AlonYZ16}, which is implemented in the distributed setting as the so-called \emph{colored BFS-exploration}~\cite{EvenFFGLMMOORT17,FraigniaudMORT17}. Roughly, it consists in each node~$v$ picking a color $c(v)\in \{0,\dots,2k-1\}$ uniformly at random (u.a.r.), and the goal is to check whether there is a cycle $C=(u_0,\dots,u_{2k-1})$ with $c(u_i)=i$ for every $i\in \{0,\dots,2k-1\}$. For this purpose, some nodes colored~0 launch a BFS-exploration, sending their identifiers to their neighbors colored~1 and $2k-1$. Then, at every round $i\in \{1,\dots,k-1\}$, nodes colored $i$ forward all the identifiers received from neighbors colored $i-1$ to their neighbors colored $i+1$, while nodes colored $2k-i$ forward all the identifiers received from neighbors colored $2k-(i-1)$ to their neighbors colored $2k-(i+1)$. If a node colored~$k$ receives the same identifier from a neighbor colored~$k-1$ and $k+1$, a $2k$-cycle has been detected. By repeating about $(2k)^{2k}$ times the random choice of the colors, the probability that an existing $2k$-cycle is well colored is at least~$2/3$, which guarantees the desired success probability. Using this approach, the issue is to control congestion, that is, to control the number of distinct identifiers that a same node has to forward during the BFS-exploration. 

For controlling the congestion in colored BFS-exploration, an elegant approach has been presented in~\cite{Censor-HillelFG20}, that we refer to as \emph{local threshold}. In essence, it consists in selecting a single source node~$s$ u.a.r., which triggers all its neighbors colored~0, asking them to launch a colored BFS-exploration. A key point is that, as shown in~\cite{Censor-HillelFG20}, for $k\in\{2,3,4,5\}$, there exists a (constant) threshold $\tau_k\geq 1$ such that, a constant fraction of sources~$s$ are either in a $2k$-cycle or will not cause any node to receive more than $\tau_k$ identifiers. Therefore, each attempt to find a $2k$-cycle triggered by a selected source~$s$ takes a constant number of rounds (namely, at most $k\cdot \tau_k$ rounds). Now, the probability that the selected source $s$ has a neighbor in a cycle is at least $1/n^{1/k}$, and thus it is sufficient to repeat $O(n^{1-1/k})$ times the random choice of~$s$ for finding a $2k$-cycle with constant probability (if it exists). Unfortunately, the local threshold technique was proved to suffer from some limitation~\cite{FraigniaudLT23}, namely it is not  extendable to the detection of larger $2k$-cycles, i.e., to $k>5$. 

To overcome the inherent limitation of the local threshold technique, we adopt a \emph{global threshold} approach. Specifically, instead of repeating $O(n^{1-1/k})$ times the choice of a single source~$s$ that triggers the colored BFS-exploration, we directly select a set $S$ of $O(n^{1-1/k})$ random sources, and we only need to repeat $O(1)$ times this choice. Now,  we show that there exists a (global) threshold $\tau_k(n)=O(n^{1-1/k})$ such that, for each choice of~$S$, if a node has to forward more than $\tau_k(n)$ identifiers at some round, then there must exist a $2k$-cycle. Establishing this result is the key point in our algorithm. It requires to analyze in detail the successive rounds of the BFS-exploration which may eventually cause a node~$v$ to receive a set $I_v$ of more that $\tau_k(n)$ identifiers, and to use this analysis for constructing a $2k$-cycle. This cycle is itself the union of a path $P$ alternating between nodes in~$S$ and nodes whose identifiers are in~$I_v$, and two vertex-disjoint paths connecting the two end points of~$P$ to~$v$. 

\subsubsection{Our Technique for Quantum \CONGEST} 

One of the main tools to speed-up distributed algorithm in the quantum framework is the \emph{distributed Grover search}~\cite{GallM18}. Indeed, (sequential) Grover search enables solving problems such as minimizing a function, or searching for a preimage, in a time-complexity that is often quadratically faster than the classical (i.e., non quantum) time-complexity. Transferred to the quantum \CONGEST\/ model,  Grover search was first used to design  sublinear algorithms for computing the diameter of a graph~\cite{GallM18}.
A nested version of Grover search was also used recently for detecting cliques~\cite{Censor-HillelFG22} and a parallel one for cycles~\cite{ApeldoornV22}.

We take one step further in the use of Grover search, by defining an encapsulated quantum framework for distributed computing that we call \emph{distributed quantum Monte-Carlo amplification} (Theorem~\ref{theo:distampli}). We show that, given any distributed (quantum, or randomized) Monte-Carlo algorithm with small one-sided \emph{success} probability~$\eps$ and round-complexity~$R$,  there exists a quantum algorithm with constant (e.g., $\nicefrac23$)  one-sided \emph{success} probability, whose round-complexity is roughly~$\sqrt{\nicefrac{1}{\eps}}\cdot R$. Observe that boosting the success probability to a constant would have required
${1}/{\eps}$ iterations in a non-quantum setting. Our boosting technique is quadratically faster.
There is however a cost to pay. Indeed, this is ignoring an additional term 
${D}/\sqrt{\eps}$, where $D$ is the diameter of the graph, that appears in the overall quantum complexity. This may not be an issue for problems whose round-complexity inherently depends on the diameter (e.g., computing an MST), but this is an issue for ``local problems'' such as $H$-freeness. Nevertheless, we can eliminate the diameter dependence by employing the standard \emph{diameter reduction} technique~\cite{eden2022sublinear}.

In order to apply our amplification technique, we first  \emph{decrease} the success probability of our classical algorithm. This allows us to decrease its congestion too.
Specifically, our classical algorithm has constant success probability, and congestion $O(n^{1-1/k})$. 
Indeed, in colored BFS-exploration performed by the classical algorithm, every node participating to the exploration forwards information that comes from at most $\tau_k(n)$ nodes, where $\tau_k(n)=O(n^{1-1/k})$ is the global threshold. This yields a congestion at most~$\tau_k(n)$, and thus a round complexity at most~$\tau_k(n)$. By not activating systematically the nodes of the tested set in the colored BFS-exploration, but by activating each of them independently with probability~$\eps=O(\nicefrac{1}{\tau_k(n)})$, we show that the congestion decreases to $O(\eps\,\tau_k(n))=O(1)$, and the success probability drops down to~$\eps$. At this point, we can apply our Monte-Carlo amplification technique to get one sided-error probability $\nicefrac{1}{\mathrm{poly}(n)}$ with round-complexity $\tilde O(D\cdot \sqrt{\tau_k(n)})$ rounds, 
where $D$ is the  diameter of the graph. Finally, the diameter dependence is eventually removed, up to $\mathrm{polylog}$ factors, by using the technique in~\cite{eden2022sublinear}.

\subsection{Related Work}
\label{subsec:RelatedWork}

There is a vast literature on distributed algorithms for deciding whether the input graph~$G$ includes a  fixed given graph~$H$ as a subgraph, or as an induced subgraph. We refer to the recent survey \cite{Censor-Hillel21,Censor-Hillel22} for a detailed description of the recent progress in the matter, including the typical techniques. We just clarify here a few points that sometimes cause confusion. First, deciding $H$-freeness is also referred to as \emph{detecting}~$H$. Second, subgraph detection has a more demanding variant, called subgraph \emph{listing}. In the latter, each occurrence of~$H$ must be reported by at least one node. Subgraph detection and subgraph listing each have their \emph{local} variants. Local detection imposes that each node outputs accept or reject based on whether it is a part of a copy of $H$ or not, and local listing imposes that each node outputs the list of occurrences of the subgraph~$H$ it belongs to. In this paper, we focused on deciding $C_k$-freeness, i.e., cycle detection. Besides cycles, two families of graphs have been extensively studied in the framework of distributed $H$-freeness, trees~\cite{KorhonenR17} and cliques~\cite{Censor-HillelCG21,CzumajK18,EdenFFKO19}. Subgraph detection as well as subgraph listing have been studied in other distributed computing models, such as \textsf{CONGESTED CLIQUE} (see, e.g., \cite{DolevLP12}). The \emph{testing} variant of the problem has also been considered, in which the goal is to decide whether the input graph is $H$-free or contains \emph{many} copies of~$H$ (measured, e.g., by the number of edges of $G$ that must be deleted to obtain a graph that is $H$-free). In this latter framework, it is possible to design decision algorithms performing in~$O(1)$ rounds in \CONGEST\/ (see, e.g., \cite{EvenFFGLMMOORT17,FraigniaudO19}).

In the quantum setting, it was shown in~\cite{Elkin+PODC14}  that the quantum \CONGEST\/ model is not more powerful that the classical \CONGEST\/ model for many important graph-theoretical problems. Nonetheless, it was later shown in~\cite{GallM18} that computing the diameter of the network can be solved more efficiently in the quantum setting. Since then, other quantum speed-ups have been discovered, including subgraph detection~\cite{Censor-HillelFG22,ApeldoornV22}. We also mention that a speed-up similar than ours has already been established for detecting~$C_4$, with round complexity $\tilde O(n^{1/4})$ in an unpublished work~\cite{censor2023discussion}. This latter contribution is directly using a distributed Grover search~\cite{GallM18}, and not our Monte-Carlo amplification.  The implementation of Grover in~\cite{censor2023discussion} is thus a bit more involved. In particular, a leader node is used for deciding which nodes are activated or not, based on a token sampling in $[1,\sqrt{n}]$. While the token sampling is decentralized, the activation of the nodes for a given token is centralized before applying  the distributed Grover search, leading to a multiplicative term equal to the diameter~$D$. Finally, we note that in the \textsf{CONGESTED CLIQUE} model as well, quantum algorithms faster than the best known classical algorithms have been discovered, starting for the All-Pair Shortest Path problem~\cite{Izumi+PODC19}, and recently for clique-detection~\cite{Censor-HillelFG22}.  In the \textsf{LOCAL} model, which is another fundamental model in distributed computing, separations between the computational powers of the classical and quantum versions of the model have been reported for some problems~\cite{Gavoille+DISC09,LeGall+STACS19}, but other papers have also reported that quantum effects do not bring significant benefits for other problems (e.g., approximate graph coloring~\cite{CoiteuxRoy2024}).

\section{Cycle Detection in the Classical CONGEST Model}\label{sec3}

This section is entirely dedicated to proving Theorem~\ref{classical_theo}. For this purpose, we first describe the algorithm claimed to exist in 
the statement of the theorem
(Algorithm~\ref{algo-C2k-freeness}), then we analyze it.

\subsection{Algorithm description}

 In Algorithm~\ref{algo-C2k-freeness}, $k\geq 2$ and $\epsilon>0$ are fixed parameters, and $G=(V,E)$ is the input graph. The only prior knowledge given to each node~$v\in V$, apart from $k$, $\epsilon$, and the identifier $\id(v)$, is the size~$n=|V|$ of the input graph. Algorithm~\ref{algo-C2k-freeness} is using a variant of \emph{color-BFS with threshold}~\cite{Censor-HillelFG20}, displayed as Procedure \textsf{color-BFS} in Algorithm~\ref{algo-C2k-freeness}, detailed next. 

\subsubsection{Procedure Color-BFS with Threshold} \label{subsec:color-BFS}

 The syntax of a call to color-BFS with threshold is 
$$
\textsf{color-BFS}(k,H,c,X,\tau),
$$
where $k$ is the fixed parameter, $H$ is a subgraph of the input graph~$G$, $c:V(H)\to \{0,\dots,2k-1\}$ is a (non-necessarily proper) coloring of the vertices of~$H$, $X\subseteq V(H)$ is a set of vertices, and $\tau\geq 0$ is an integer called \emph{threshold}. 
In all calls to $\textsf{color-BFS}$, graph $H$ will be an induced subgraph of $G$, and in particular every node of $G$ will locally know whether it belongs to $H$ or not. Similarly, every node will know whether it belongs to set $X$ or not.

In our variant of color-BFS, only the nodes $x\in X$ initiate the search for a $2k$-cycle, and the search is performed in~$H$ only, i.e., not necessarily in the whole graph~$G$. Specifically,  every node $x\in X$ colored~0, i.e., every node $x\in X$ for which $c(x)=0$, launches the search by sending its identifier $\id(x)$ to all its neighbors in~$H$ (cf. Instruction~\ref{ins:BFS-initialization}). 

Then, as specified in the for-loop at Instruction~\ref{ins:BFS-for-loop}, for every $i=1,\dots,k-1$, every node $v\in V(H)$ colored~$i$ having received a collection of identifiers $I_v\subseteq \{\id(x)\mid x\in X\}$ from its neighbors in $H$ colored~$i-1$ aims at forwarding $I_v$ to all its neighbors in $H$ colored~$i+1$. However, it does forward $I_v$ to its neighbors only if $I_v$ is not too large, namely only if the number  $|I_v|$ of identifiers in~$I_v$ does not exceed the threshold~$\tau$. Instead, if $v$ has collected too many identifiers, i.e., if $|I_v|>\tau$, then $v$ simply discards $I_v$, and forwards nothing. 

Similarly, for every $i=2k-1,\dots,k+1$, every node~$v\in V(H)$ colored~$i$ having received  a collection $I_v$ of identifiers from its neighbors in~$H$  colored~$i+1 \bmod 2k$ forwards $I_v$ to all its neighbors in $H$ colored~$i-1$ whenever $|I_v|\leq \tau$, and discards~$I_v$ otherwise. 

Eventually, at Instruction~\ref{ins:BFS-reject}, if a node $v\in V(H)$ colored~$k$ receives a same identifier $\id(x)$ of some node $x\in X$ from a neighbor in $H$ colored~$k-1$, and from a neighbor in $H$ colored~$k+1$, then $v$ rejects. Observe that when $v$ rejects, then by construction there is a $2k$-cycle containing $v$ and $x$. Indeed $\id(x)$ traveled to $v$ along two paths $(x=u_0,u_1,\dots,u_k=v)$ and $(x=u_0,u_{2k-1},\dots,u_k=v)$ of length $k$, with different internal vertices.

{
\begin{algorithm}[ht]
\caption{Deciding $C_{2k}$-freeness in $G=(V,E)$ 
with one sided-error $\eps$}
\label{algo-C2k-freeness}
\begin{algorithmic}[1]
\State $U\leftarrow \{u\in V \mid \deg(u)\leq n^{1/k}\}$; \Comment{$U$ is the set of ``light'' nodes}\label{ins:C2k-light-nodes}
\State $\hat\eps \gets \ln(3/\eps)$; \;  $p\gets \hat\eps \cdot 2k^2 / n^{1/k}$; \Comment{Setting of the selection probability $p$}
\State every node $u\in V$ selects itself with probability $p$; \label{inst:rdmS}
\State $S\gets \{\mbox{selected nodes}\}$; \Comment{In expectation, $|S|=np $}\label{inst:rdmS2}   
\State $W \leftarrow \{u\in V\smallsetminus S \mid |N_G(u)\cap S| \geq k^2\}$; \Comment{$u\in W$ iff $u$ has at least $k^2$ selected neighbors}\label{ins:Set-W}
\State $K\gets \hat\eps  \cdot (2k)^{2k}$; \; $\tau\gets k2^k \cdot np$; \Comment{Setting  of \#repetitions $K$, and threshold $\tau$}\label{ins:set-threshold} 
\For{$r=1$ \textbf{to} $K$}
\Comment{A sequence of $K$ search phases}\label{inst:nb_colorings}
    \State every node $u\in V$ picks a color $c(u)\in\{0,\dots,2k-1\}$ u.a.r.;\label{inst:coloring}
    \State $\textsf{color-BFS}(k,G[U],c,U,\tau)$; \label{inst:colorbfs-U} 
           \Comment{Search for $2k$-cycles  with light nodes only}
    \State $\textsf{color-BFS}(k,G,c,S,\tau)$; \label{inst:colorbfs-S}
          \Comment{Search for $2k$-cycles  with at least one selected node}
    \State $\textsf{color-BFS}(k,G[V\smallsetminus S],c,W,\tau)$; \label{inst:colorbfs-W}
          \Comment{Search for other $2k$-cycles}
\EndFor
\State every node that has not output \emph{reject} at a previous round outputs \emph{accept}.
\medskip
\Procedure{\rm\sf color-BFS}{$k,H,c,X,\tau$}
\State every node $x\in X$ with $c(x)=0$ sends its ID to its neighbors in~$H$ \label{ins:BFS-initialization}
\For{$i=1$ to $k-1$}\label{ins:BFS-for-loop}
\For{every node $v \in V(H)$ with $c(v)=i$ (resp., $c(v)=2k-i$) }
    \State $I_v\gets \{\mbox{IDs received from neighbors in~$H$ colored $i-1$ (resp. $2k-i-1)$}\}$ \label{ins:Iv}
    \If{$|I_v|\leq \tau$ }\label{ins:I_size}
        \State $v$ forwards $I_v$ to its neighbors in~$H$ colored~$i+1$ (resp. $2k-i-1)$
    \EndIf
\EndFor
\EndFor
\For{every node $v\in V(H)$ colored $k$} 
    \If {$v$ receives a same ID from two neighbors respectively colored~$k-1$ and~$k+1$}\label{ins:BFS-reject}
        \State $v$ outputs \emph{reject} 
    \EndIf
\EndFor
\EndProcedure
\end{algorithmic}
\end{algorithm}
}

\subsubsection{The Cycle Detection Algorithm}\label{subsec:algo}

Our cycle detection algorithm is detailed in Algorithm~\ref{algo-C2k-freeness}. It essentially consists of three calls to $\textsf{color-BFS}(k,H,c,X,\tau)$, for three different graphs~$H$, three different sets~$X$, and one threshold~$\tau$. The first set is the set $U$ of \emph{light} nodes in~$G$ (see Instruction~\ref{ins:C2k-light-nodes}), i.e., the set of nodes
\[
U=\{u\in V \mid \deg(u)\leq n^{1/k}\}.
\]
The second set is denoted by~$S$.  It is constructed randomly at Instructions~\ref{inst:rdmS}-\ref{inst:rdmS2}, by having each node independently deciding whether to enter set~$S$ with probability $p=\Theta(1/n^{1/k})$. In other words, if $B_p$ denotes the random variable with Bernoulli distribution of parameter~$p$, 
\[
S=\{u\in V \mid B_p(u)=1\}.
\]
Note that the expected size of $S$ is $\Theta(n^{1-1/k})$. As we shall show later, the probability~$p=\Theta(1/n^{1/k})$ can be set so that, for every node~$u$ of degree at least~$n^{1/k}$, the probability that~$u$ has at least a constant number $|N_G(u)\cap S|$ of neighbors in~$S$ is constant. This leads us to setting our third set, $W$, as the set of nodes that have at least $k^2$ selected neighbors (see Instruction~\ref{ins:Set-W}), that is, 
\[
W = \{u\in V\smallsetminus S \mid  |N_G(u)\cap S| \geq k^2\}.
\] 
Once these three sets $U$, $S$, and $W$ have been constructed, they remain fixed for the rest of the algorithm. Finally, a threshold 
\[
\tau=\Theta(n^{1-1/k})
\]
is set at Instruction~\ref{ins:set-threshold}. This threshold will be used in all color-BFS calls in the for-loop at Instruction~\ref{inst:nb_colorings}.  

The for-loop at Instruction~\ref{inst:nb_colorings} is performed a constant number $K$ of times. It will be shown later  that choosing $K$ as a sufficiently large constant will be sufficient to guarantee that, if there is a $2k$-cycle $C$ in~$G$, then, with constant probability, its nodes will be colored consecutively from $0$ to $2k-1$ in a run of the loop. Indeed, at each iteration of the for-loop, every node $u\in V$ picks a color $c(u)\in \{0,\dots,2k-1\}$ uniformly at random, at Instruction~\ref{inst:coloring}. Provided with this coloring~$c$, the algorithm proceeds as a sequence of three different color-BFS calls. 

The first color-BFS (see Instruction~\ref{inst:colorbfs-U}) aims at detecting the presence of a $2k$-cycle containing  light nodes only. Therefore, $X$ is set to~$U$,  the graph~$H$ is merely the subgraph $G[U]$ of~$G$ induced by~$U$, and the threshold is~$\tau$. Then, since $G[U]$ has maximum degree~$n^{1/k}$, as we shall see in Lemma~\ref{lemma:success_U}, the threshold cannot be exceeded. As a consequence, thanks to Fact~\ref{fact:link_I_X}, if the coloring~$c$ does color the nodes of a $2k$-cycle consecutively from~$0$ to~$2k-1$, then this cycle will be detected. 

The second color-BFS (see Instruction~\ref{inst:colorbfs-S}) aims at detecting the presence of a $2k$-cycle containing at least one selected node, i.e., at least one node in~$S$. Therefore, $X$ is set to~$S$, and the graph~$H$ is the whole graph~$G$. The threshold is again set to~$\tau$, which exceeds the expected size of~$S$. Therefore, thanks to Fact~\ref{fact:link_I_X}, if there is a $2k$-cycle including a node in~$S$, this cycle will be detected. 

The third color-BFS (see Instruction~\ref{inst:colorbfs-W})  is addressing the ``general case'', that is the detection of a cycle containing at least one heavy node (i.e., with at least one node not in~$U$), and not containing any selected nodes (i.e., with no nodes in~$S$). The search is therefore performed in the subgraph $G[V\smallsetminus S]$ of~$G$. The crucial point in the third color-BFS is the choice of the set $X$ initializing the search. We set it to the aforementioned set $W = \{u\in V\smallsetminus S \mid |N_G(u)\cap S| \geq k^2\}$. That is, the search is activated only by nodes neighboring sufficiently many, i.e., at least $k^2$, selected nodes. As for the first two color-BFS calls, the threshold is set to~$\tau$. This may appear counter intuitive as $W$ may be larger than $O(n^{1-1/k})$. Nevertheless, we shall show that if a node $v\in G[V\smallsetminus S]$ has to forward more than $O(n^{1-1/k})$ identifiers of nodes in~$W$, then there must exist a $2k$-cycle in~$G$ passing through $S$, which would have been detected at the previous step.

Finally, each node having performed $K$ iterations of the for-loop, without rejecting during any execution of the $3K$ color-BFS procedures, accepts. This completes the description of the algorithm. 

\subsection{Analysis of Algorithm~\ref{algo-C2k-freeness}}

\withoutappendix{Let us start the analysis of Algorithm~\ref{algo-C2k-freeness} by a collection of basic technical facts.}

\subsubsection{Preliminary Results}

\begin{toappendix}
\withappendix{\subsection{Statistical facts}}
We first show that when a $2k$-cycle exists, by setting the constant $K$ large enough, there will be a run of the loop at Instruction~\ref{inst:nb_colorings} of Algorithm~\ref{algo-C2k-freeness} for which the nodes of the cycle are colored consecutively from~0 to $2k-1$ by the coloring~$c$.

\begin{fact}\label{fact:good_colors}
Let $\alpha>0$, and
$K\geq \alpha (2k)^{2k}$.
Let $(u_0,u_1,\dots,u_{2k-1})$ be  any sequence of $2k$ nodes of $G$.
Then, with probability at least $1-e^{-\alpha}$,
there is at least one iteration of the loop at
Instruction~\ref{inst:nb_colorings} in Algorithm~\ref{algo-C2k-freeness}
such that $c(u_i)=i$, for $i=0,\dots,2k-1$.
\end{fact}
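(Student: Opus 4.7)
\medskip
\noindent\emph{Plan for the proof.}
The plan is to apply the standard color-coding union bound, once we observe that the $K$ iterations of the loop at Instruction~\ref{inst:nb_colorings} draw colorings independently, and within each iteration the colors $c(u)$ are drawn independently across nodes. First I would fix the sequence $(u_0,\dots,u_{2k-1})$; in the intended application (when this sequence is the vertex sequence of a $2k$-cycle) these $2k$ nodes are pairwise distinct, so the event $\{c(u_i)=i \text{ for all } i=0,\dots,2k-1\}$ is an intersection of $2k$ independent events, each of probability $1/(2k)$. Hence in a single iteration this event occurs with probability exactly $(2k)^{-2k}$.

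Next I would use independence of the $K$ iterations: the probability that none of them produces the desired coloring is
\[
\left(1-(2k)^{-2k}\right)^K.
\]
Applying the elementary inequality $1-x\le e^{-x}$ with $x=(2k)^{-2k}$, this is bounded by $\exp\!\bigl(-K/(2k)^{2k}\bigr)$. Plugging in the hypothesis $K\ge \alpha(2k)^{2k}$ yields an upper bound of $e^{-\alpha}$, so the complementary event — that at least one iteration satisfies $c(u_i)=i$ for every $i$ — occurs with probability at least $1-e^{-\alpha}$, as required.

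There is essentially no obstacle: the only subtlety to address explicitly is the distinctness of the $u_i$'s, which should be stated as part of the setup (or one should remark that the statement is only non-trivial when the $u_i$'s are pairwise distinct, which is the case whenever $(u_0,\dots,u_{2k-1})$ is the cyclic order of a $2k$-cycle in~$G$). Everything else reduces to the inequality $1-x\le e^{-x}$ and the independence of the colorings across iterations and across nodes.
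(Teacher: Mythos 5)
Your proof is correct and follows essentially the same route as the paper's: per-iteration success probability $(2k)^{-2k}$ by independence of the node colors, then $(1-(2k)^{-2k})^K \le \exp(-K/(2k)^{2k}) \le e^{-\alpha}$ by independence across iterations. Your added remark that the $u_i$ must be pairwise distinct for the per-iteration probability to be $(2k)^{-2k}$ is a fair (and correct) clarification of the paper's implicit assumption, which holds in the intended application to cycles.
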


\begin{proof}
    For each iteration, the probability that this event occurs  is $(\nicefrac{1}{2k})^{2k}$, due to the independent choices of the colors of the nodes.
    The choices of coloring being also independent at each  iteration of the loop,  the probability that the event never occurs for any of the $K$ iterations is $$\left(1-\frac{1}{(2k)^{2k}}\right)^K\leq \exp\left({-\frac{K}{(2k)^{2k}}}\right)
    \leq e^{-\alpha},$$
    as claimed. 
\end{proof}

We carry on with two standard claims stating that, with high probability, the set $S$ is not too large, and  vertices with large degree (in $G$) have sufficiently many neighbors in~$S$.

\begin{fact}\label{fact:size_S}
Let $\alpha > 0$.
If every node selects itself with probability $p=\alpha/n^{1/k}$ at Instruction~\ref{inst:rdmS} of Algorithm~\ref{algo-C2k-freeness}, then 
    \[
    \Pr[|S|\leq 2\alpha n^{1-1/k}]\geq 1- e^{-\alpha/3}.
    \]
\end{fact}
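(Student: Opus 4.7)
The plan is a direct application of a multiplicative Chernoff bound to the random variable $|S|$. First I would observe that $|S| = \sum_{u \in V} B_p(u)$ is a sum of $n$ independent Bernoulli random variables of parameter $p = \alpha/n^{1/k}$, so its expectation is
\[
\mu \;=\; \E[|S|] \;=\; np \;=\; \alpha\, n^{1-1/k}.
\]

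Next, I would invoke the standard upper-tail Chernoff bound, which states that for a sum $X$ of independent $\{0,1\}$-valued random variables with mean $\mu$, and for any $\delta \in (0,1]$,
\[
\Pr[X \geq (1+\delta)\mu] \;\leq\; \exp\!\bigl(-\delta^{2}\mu/3\bigr).
\]
Applying this with $\delta = 1$ yields
\[
\Pr[|S| \geq 2\alpha\, n^{1-1/k}] \;\leq\; \exp\!\bigl(-\mu/3\bigr) \;=\; \exp\!\bigl(-\alpha\, n^{1-1/k}/3\bigr).
\]

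Finally, since $k \geq 2$ and $n \geq 1$, one has $n^{1-1/k} \geq 1$, hence $\exp(-\alpha\, n^{1-1/k}/3) \leq \exp(-\alpha/3)$. Taking complements gives $\Pr[|S| \leq 2\alpha\, n^{1-1/k}] \geq 1 - e^{-\alpha/3}$, as claimed. There is no real obstacle here: the statement is a textbook Chernoff estimate, and the only minor point is noticing that the bound in the exponent can be weakened from $\alpha\, n^{1-1/k}/3$ to $\alpha/3$ to match the form stated in the fact.
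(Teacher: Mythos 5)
Your proof is correct and follows essentially the same route as the paper: both apply a multiplicative Chernoff upper-tail bound with $\delta=1$ to the binomial variable $|S|$ of mean $\alpha n^{1-1/k}$, obtaining the bound $\exp(-\alpha n^{1-1/k}/3)$ and then weakening it to $e^{-\alpha/3}$ via $n^{1-1/k}\geq 1$. The only cosmetic difference is the exact form of the Chernoff inequality quoted ($\exp(-\delta^2\mu/3)$ for $\delta\in(0,1]$ versus the paper's $\exp(-\delta^2\mu/(2+\delta))$), which coincide at $\delta=1$.
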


\begin{proof}
The cardinality of $S$ is a random variable that follows a binomial distribution of parameters $(n,\nicefrac{\alpha}{n^{1/k}})$.
We  use the following Chernoff Bound: for any $\delta\geq 0$, 
$$\Pr\left[|S|\geq (1+\delta)\E[|S|]\right]\leq 
\exp\left(-\frac{\delta^2 \E[|S|]}{2+\delta}\right).$$
With $\delta=1$ and  $\E[|S|]=\alpha n^{1-1/k}$,
we conclude:
$$\Pr\left[|S|\geq 2\alpha n^{1-1/k}]\right]\leq 
\exp\left(-\frac{\alpha n^{1-1/k}}{3}\right)
\leq e^{-\alpha/3}, $$
as claimed.
\end{proof}

\begin{fact}\label{fact:deg_S_u_0}
Let  $\alpha>0$, and let $v\in V$ such that $\deg(v) > n^{\nicefrac{1}{k}}$.
If every node selects itself with probability $p=\nicefrac{\alpha}{n^{1/k}}$ at Instruction~\ref{inst:rdmS} of Algorithm~\ref{algo-C2k-freeness}, then
    \[
    \Pr[|N_G(v)\cap S| \geq \alpha /2]\geq 1-e^{-{\alpha}/{8}}.
    \]
\end{fact}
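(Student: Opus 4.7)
The plan is to apply a standard lower-tail Chernoff bound to the binomial random variable $X = |N_G(v) \cap S|$. Since each node is independently placed in $S$ with probability $p = \alpha/n^{1/k}$, and $\deg(v) > n^{1/k}$, the expectation is
\[
\mu := \E[X] = \deg(v) \cdot p > n^{1/k} \cdot \frac{\alpha}{n^{1/k}} = \alpha.
\]

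I would use the multiplicative Chernoff bound in the form $\Pr[X \leq (1-\delta)\mu] \leq \exp(-\delta^2 \mu/2)$ for $\delta \in (0,1)$. The target threshold $\alpha/2$ is at most $\mu/2$ (since $\mu > \alpha$), so I pick $\delta = 1 - \alpha/(2\mu)$, which satisfies $\delta \geq 1/2$ because $\mu > \alpha$. Then $(1-\delta)\mu = \alpha/2$, and plugging in gives
\[
\Pr\!\left[|N_G(v)\cap S| < \alpha/2\right] \;\leq\; \exp\!\left(-\frac{\delta^2 \mu}{2}\right) \;\leq\; \exp\!\left(-\frac{(1/4)\cdot \alpha}{2}\right) \;=\; e^{-\alpha/8},
\]
using $\delta^2 \geq 1/4$ and $\mu \geq \alpha$. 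Taking the complement yields the claimed bound.

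There is essentially no obstacle: the only mild subtlety is the direction of the Chernoff bound (lower tail rather than upper tail as in Fact~2), and the fact that $\deg(v) > n^{1/k}$ is used precisely to guarantee $\mu > \alpha$, so that the target $\alpha/2$ lies well below the mean and the exponent $\delta^2 \mu/2$ dominates $\alpha/8$. I would present the computation in one short paragraph mirroring the style of the proof of Fact~\ref{fact:size_S}.
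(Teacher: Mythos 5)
Your proof is correct and follows essentially the same route as the paper: both apply the lower-tail multiplicative Chernoff bound to the binomial variable $|N_G(v)\cap S|$, using $\deg(v)>n^{1/k}$ to get $\mu\geq\alpha$ and arriving at the exponent $\alpha/8$. The only cosmetic difference is that the paper fixes $\delta=1/2$ and bounds $\Pr[X\leq\alpha/2]\leq\Pr[X\leq\mu/2]$, while you choose $\delta$ so that $(1-\delta)\mu=\alpha/2$ and then bound $\delta^2\geq 1/4$; these are the same computation.
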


\begin{proof}
The degree $|N_G(v)\cap S|$ of $v$ in $S$ follows a binomial law of parameters 
 $(\deg(v),p)$.  
We  use the following Chernoff Bound: for any $0<\delta<1$, 
$$\Pr\left[|N_G(v)\cap S|\leq (1-\delta)\E[|N_G(v)\cap S|]\right]\leq 
\exp\left(-\frac{\delta^2 \E[|N_G(v)\cap S|]}{2}\right).$$
For $\delta=1/2$, we have $\E[|N_G(v)\cap S|]=p\deg(v)\geq \alpha$, from which it follows that
$$\Pr\left[|N_G(v)\cap S|\leq \alpha/2\right]\leq
\Pr\left[|N_G(v)\cap S|\leq p\deg(v)/2\right]\leq 
\exp\left(-\frac{ p\deg(v)}{8}\right)
\leq 
e^{-{\alpha}/{8}},
$$
as desired.
\end{proof}
\end{toappendix}

Let us \withoutappendix{now}\withappendix{first} introduce some notations used for analysing  Procedure \textsf{color-BFS}($k,H,c,X,\tau$) in 
Algorithm~\ref{algo-C2k-freeness}.
Let $$X_0=\{x\in X \mid c(x)=0\}$$ be
the set of nodes that send their identifiers at Instruction~\ref{ins:BFS-initialization} of $\textsf{color-BFS}(k,H,c,X,\tau)$.
We define, for any node $v$ in $H$ colored $i$ or $2k-i$ with $i\in\{1,\dots,k-1\}$, 
the subset $X_0(v)\subseteq X_0$ of nodes connected to $v$ by a ``well colored'' path in $H$, as follows.
\begin{itemize}    
    \item If $c(v)\in\{1,\dots,k-1\}$ then 
    \begin{equation}\label{eq:X_0_small_color}
    \begin{split}
        X_0(v)=   \{ & x\in X_0 \mid  \mbox{there exists a path 
     $(x,v_1,\dots,v_{c(v)-1},v)$ in $H$ such that,}  \\
     &\mbox{for every $j\in\{1,\dots,c(v)-1\}$, $c(v_j)=j$}\}.
    \end{split}
    \end{equation}

    \item If $c(v)\in\{2k-1,\dots,k+1\}$ then 
    \begin{equation}\label{eq:X_0_big_color}
    \begin{split} X_0(v)=\{ & x\in X_0 \mid \mbox{there exists a path $(x,v_{2k-1},\dots,v_{c(v)+1},v)$ in $H$ such that}  \\
     &\mbox{for every $j\in\{c(v)-1,\ldots,2k-1\}$, $c(v_j)=j$}\}.
     \end{split}\end{equation}
\end{itemize}
With this construction in hand, we can state a first general fact for any instance of \textsf{color-BFS}($k,H,c,X,\tau$)
from Algorithm~\ref{algo-C2k-freeness}.

\begin{fact}\label{fact:link_I_X}
Procedure \textsf{color-BFS}($k,H,c,X,\tau$) satisfies the following two properties.
\begin{itemize}
    \item  Let $(v_1,v_2,\ldots,v_{k-1})$ be a path in $H$ such that $c(v_i)=i$ for all $i\in\{1,\dots,k-1\}$. If $|X_0(v_{k-1})|\leq\tau$, then, for every $i\in\{1,\dots,k-1\}$, $|I_{v_i}|\leq \tau$ and $I_{v_i} = X_0(v_i)$. 

    \item Let $(v_{2k-1},v_{2k-2},\ldots,v_{k+1})$ be a path in $H$ such that $c(v_i)=i$ for all $i\in\{k+1,\dots,2k-1\}$. If $|X_0(v_{k+1})|\leq\tau$, then, for every $i\in \{k+1,\dots,2k-1\}$, $|I_{v_i}|\leq \tau$ and  $I_{v_i} = X_0(v_i)$.
\end{itemize}
\end{fact}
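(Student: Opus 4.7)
The plan is to prove the first property; the second is symmetric, since the ``backward'' branch of Procedure~\textsf{color-BFS} (colors $2k-1,\ldots,k+1$) is handled by the same code as the ``forward'' branch, with the roles of neighbors relabelled. The natural inductive statement is slightly stronger than what is asked: for every $w \in V(H)$ with $c(w) \in \{1,\ldots,k-1\}$ such that $|X_0(w)| \leq \tau$, one has $I_w = X_0(w)$ (and hence $|I_w| \leq \tau$). The hypothesis $|X_0(v_{k-1})| \leq \tau$ together with a monotonicity property of $X_0$ will propagate this bound to every $v_i$ of the stated path, and the conclusion follows.

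The first ingredient is a monotonicity lemma: if $(w,w') \in E(H)$ with $c(w') = c(w)+1 \in \{2,\ldots,k-1\}$, then $X_0(w) \subseteq X_0(w')$. Indeed, any well-colored path witnessing $x \in X_0(w)$ uses vertices of colors $0,\ldots,c(w)$, so one may extend it by the edge $ww'$ without creating a repeated vertex, yielding a well-colored path ending at $w'$. Iterating along $(v_1,\ldots,v_{k-1})$ gives $X_0(v_1) \subseteq \cdots \subseteq X_0(v_{k-1})$, hence $|X_0(v_i)| \leq \tau$ for every $i$. In parallel, the easy inclusion $I_w \subseteq X_0(w)$ for arbitrary $w$ is proved by induction on $c(w)$: each $\id(x) \in I_w$ was forwarded by some neighbor $w'$ of color $c(w)-1$, so $x \in X_0(w')$ by the induction hypothesis, and appending the edge $w'w$ to the witnessing path gives $x \in X_0(w)$.

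The core of the proof is the reverse inclusion $X_0(w) \subseteq I_w$ under the hypothesis $|X_0(w)| \leq \tau$, again by induction on $c(w)$. The base case $c(w)=1$ is immediate from Instruction~\ref{ins:BFS-initialization}, which yields $I_w = X_0 \cap N_H(w) = X_0(w)$. For the induction step, given $x \in X_0(w)$ with witnessing path $(x, u_1, \ldots, u_{c(w)-1}, w)$, the monotonicity lemma applied along this path gives $|X_0(u_{c(w)-1})| \leq |X_0(w)| \leq \tau$; the induction hypothesis then provides $I_{u_{c(w)-1}} = X_0(u_{c(w)-1})$ and $|I_{u_{c(w)-1}}| \leq \tau$. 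The threshold test at Instruction~\ref{ins:I_size} therefore passes at $u_{c(w)-1}$, which forwards $\id(x)$ to its neighbor $w$, placing $\id(x)$ in $I_w$.

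The main subtlety, and the reason for strengthening the inductive statement, is that $I_{v_i}$ depends on the sets forwarded by \emph{all} neighbors of $v_i$ of color $i-1$, not only by $v_{i-1}$. A direct induction along the fixed path $(v_1,\ldots,v_{k-1})$ therefore does not close, because one has no a priori control over what those other neighbors did. Strengthening the statement to quantify over every vertex with $|X_0(\cdot)| \leq \tau$, combined with monotonicity, is precisely what ensures that on every witnessing path the intermediate vertices remain below the threshold and hence actually forward the identifiers they have received.
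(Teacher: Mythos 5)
Your proof is correct and follows essentially the same route as the paper's (the inclusion $I_v\subseteq X_0(v)$, the monotonicity $X_0(v_i)\subseteq X_0(v_{k-1})$ along well-colored paths, and the observation that the threshold test then always passes). In fact your version is more careful than the paper's rather terse argument: the explicit strengthening of the induction to \emph{every} vertex $w$ with $|X_0(w)|\leq\tau$ is exactly what is needed to handle identifiers reaching $v_i$ via witnessing paths other than $(v_1,\dots,v_{i-1})$, a point the paper's proof glosses over.
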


\begin{proof}
We prove the result  for nodes with colors between 1 and $k-1$ only, as the result for nodes with colors between $2k-1$ and $k+1$ holds by the same arguments.
From Instructions~\ref{ins:BFS-initialization} and~\ref{ins:Iv},  we get that for every node $v$ in $H$, every identifier in~$I_v$ is the one of some node in $X_0(v)$. All along a path $(v_1,\dots,v_{k-1})$, if $|X_0(v_i)|\leq \tau$ for every $i\in\{1,\dots,k-1\}$, then the condition of Instruction~\ref{ins:I_size} is always satisfied for $I_{v_i}$, and the identifiers of all the nodes $X_0(v_i)$ are in $I_{v_i}$. The result follows after noticing that  $X_0(v_i)\subseteq X_0(v_{k-1})$ for every $i\in\{1,\dots,k-1\}$.
\end{proof}

\subsubsection{Analysis of the first two $\textsf{color-BFS}$} 

The next lemma states that, if $G$ contains a $2k$-cycle composed 
of consecutively colored nodes which are all light, that is, with small degree, then the first call of $\textsf{color-BFS}$ leads at least one node to reject.

\begin{lemma}\label{lemma:success_U}
    Suppose that $G$ contains a $2k$-cycle $C=(u_0,\dots,u_{2k-1})$ in $G[U]$, and that $c$ is a coloring for which $c(u_i)=i$ for every $i\in\{0,\dots,2k-1\}$. 
    Then $u_k$ rejects in $\textsf{color-BFS}(k,G[U],c,U,\tau)$.
\end{lemma}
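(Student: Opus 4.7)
The plan is to combine two ingredients: the maximum-degree bound inside the subgraph $G[U]$, and the path-wise guarantee of Fact~\ref{fact:link_I_X}. The threshold $\tau=k\,2^k\cdot np$ was chosen exactly so that the ``well-colored'' back-cones $X_0(v)$ of every node in $G[U]$ stay below~$\tau$, which is the only condition needed to propagate identifiers all the way from $u_0$ to~$u_k$ along both halves of~$C$.

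First, I would observe that inside $G[U]$ every node has degree at most $n^{1/k}$. Consequently, for any $v \in V(G[U])$ with $c(v)=i\in\{1,\dots,k-1\}$, the set $X_0(v)$ defined in~(\ref{eq:X_0_small_color}) is bounded by the number of walks of length $i$ in $G[U]$ ending at~$v$, hence
\[
|X_0(v)| \le (n^{1/k})^{i} \le n^{(k-1)/k} = n^{1-1/k}.
\]
Because $\hat\eps=\ln(3/\eps)\ge \ln 3 >1$, the threshold satisfies $\tau = k\,2^k\cdot np = 2^{k+1}\hat\eps k^3\, n^{1-1/k} \ge n^{1-1/k}$, so $|X_0(v)|\le\tau$ for every such~$v$. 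The same bound applies symmetrically to every $v$ with $c(v)\in\{k+1,\dots,2k-1\}$ through~(\ref{eq:X_0_big_color}).

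Next, I would apply Fact~\ref{fact:link_I_X} to the two half-paths of $C$: $(u_1,\dots,u_{k-1})$ and $(u_{2k-1},u_{2k-2},\dots,u_{k+1})$, each of which lies in $G[U]$ and is consecutively well-colored by assumption. Since $|X_0(u_{k-1})|\le \tau$ and $|X_0(u_{k+1})|\le\tau$, the fact gives $I_{u_{k-1}}=X_0(u_{k-1})$ and $I_{u_{k+1}}=X_0(u_{k+1})$, and both of these sets contain $\id(u_0)$ because $u_0\in U=X$ with $c(u_0)=0$, and the paths $(u_0,u_1,\dots,u_{k-1})$ and $(u_0,u_{2k-1},\dots,u_{k+1})$ are well-colored in $G[U]$. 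Since $|I_{u_{k-1}}|,|I_{u_{k+1}}|\le\tau$, both nodes pass the test at Instruction~\ref{ins:I_size} and forward their set of identifiers to their neighbors colored~$k$, in particular to $u_k$.

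Finally, $u_k$ receives $\id(u_0)$ from its neighbor $u_{k-1}$ (colored $k-1$) and from its neighbor $u_{k+1}$ (colored $k+1$), so the rejection condition at Instruction~\ref{ins:BFS-reject} is triggered and $u_k$ outputs \emph{reject}. There is no real obstacle here beyond checking the constants: the only quantitative step is verifying that $n^{(k-1)/k}\le\tau$, which follows directly from the setting of $\tau$ and the lower bound $\hat\eps\ge\ln 3$.
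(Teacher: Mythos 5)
Your proof is correct and follows essentially the same route as the paper's: bound $|X_0(v)|\le (n^{1/k})^i\le n^{1-1/k}\le\tau$ using the degree cap in $G[U]$, then invoke Fact~\ref{fact:link_I_X} on the two well-colored half-paths so that $\id(u_0)$ reaches $u_k$ from both sides. The only difference is that you explicitly verify the constant in $\tau$, which the paper leaves implicit.
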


\begin{proof}
    Every node in $U$ has degree at most~$n^{1/k}$. By a direct induction on $i=1,\dots, k-1$, we get that $|U_0(u_i)|\leq n^{\nicefrac{i}{k}}\leq\tau$ for every $i\in\{0,\dots,k-1\}$, and the same holds for $u_{2k-i}$. Fact~\ref{fact:link_I_X} implies that $|I_{u_i}|\leq\tau$ and $|I_{u_{2k-i}}|\leq\tau$ for every $i\in\{1,\dots,k-1\}$. Therefore the identifier of $u_0$ is forwarded along the two paths $(u_0,u_1,\dots,u_{k-1},u_k)$ and $(u_0,u_{2k-1},\dots,u_{k+1},u_k)$, and $u_k$ rejects.
\end{proof}

The next lemma establishes that if $S$ is of size at most $\tau$,  and if $G$ contains a well colored cycle for which the node colored~0 is in~$S$, then the second call of $\textsf{color-BFS}$ leads the node colored~$k$ in the cycle to reject.

\begin{lemma}\label{lemma:success_S}
Suppose that $G$ contains a $2k$-cycle $C=(u_0,\dots,u_{2k-1})$ with $u_0\in S$, and let $c$ be a coloring such that $c(u_i)=i$ for every $i\in\{0,\dots,2k-1\}$.  If $|S|\leq \tau$ then $u_k$ rejects in $\textsf{color-BFS}(k,G,c,S,\tau)$.
\end{lemma}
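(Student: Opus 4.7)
The plan is to mirror the structure of the proof of Lemma~\ref{lemma:success_U}, replacing its degree-based induction by the single global bound $|S|\leq\tau$. The call under consideration is $\textsf{color-BFS}(k,G,c,S,\tau)$, so here $X=S$ and $H=G$. Consequently the set $X_0=\{x\in S\mid c(x)=0\}$ of initiators is contained in $S$, and for every node $v$ with $c(v)\in\{1,\dots,k-1\}\cup\{k+1,\dots,2k-1\}$, the set $X_0(v)$ defined by~\eqref{eq:X_0_small_color} or~\eqref{eq:X_0_big_color} is by construction a subset of $X_0$. Hence $|X_0(v)|\leq |X_0|\leq |S|\leq\tau$ uniformly. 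This is the analogue of the degree-induction bound used in Lemma~\ref{lemma:success_U}, but obtained for free from the hypothesis of the lemma.

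I would then apply Fact~\ref{fact:link_I_X} to the two half-cycles of $C$, namely the path $(u_1,u_2,\dots,u_{k-1})$ and the path $(u_{2k-1},u_{2k-2},\dots,u_{k+1})$. Both paths are well coloured by assumption on~$c$, and both endpoints $u_{k-1}$ and $u_{k+1}$ satisfy the hypothesis $|X_0(u_{k-1})|\leq\tau$ and $|X_0(u_{k+1})|\leq\tau$ established above. Fact~\ref{fact:link_I_X} therefore guarantees that $I_{u_i}=X_0(u_i)$ for every $i\in\{1,\dots,k-1\}\cup\{k+1,\dots,2k-1\}$, so no message is ever dropped at Instruction~\ref{ins:I_size} along either half of~$C$.

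It remains to note that $u_0\in S$ with $c(u_0)=0$ means $u_0\in X_0$, and the two paths $(u_0,u_1,\dots,u_{k-1})$ and $(u_0,u_{2k-1},\dots,u_{k+1})$ in $G$ witness $u_0\in X_0(u_{k-1})$ and $u_0\in X_0(u_{k+1})$. Hence $\id(u_0)\in I_{u_{k-1}}$ is forwarded by $u_{k-1}$ to $u_k$, and $\id(u_0)\in I_{u_{k+1}}$ is forwarded by $u_{k+1}$ to $u_k$, so $u_k$ receives the same identifier from two neighbours coloured $k-1$ and $k+1$ and rejects at Instruction~\ref{ins:BFS-reject}. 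There is no real obstacle here: unlike Lemma~\ref{lemma:success_U}, where bounding the $|X_0(\cdot)|$ required an induction on the light-degree structure of $U$, the uniform bound is immediate from $|S|\leq\tau$, and the rest is a direct invocation of Fact~\ref{fact:link_I_X}.
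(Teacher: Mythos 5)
Your proof is correct and follows essentially the same route as the paper's: bound $|X_0(v)|\leq|S|\leq\tau$ uniformly, invoke Fact~\ref{fact:link_I_X} on the two well-coloured halves of the cycle, and observe that $\id(u_0)$ reaches $u_k$ from both sides. The extra detail you give on why $u_0\in X_0(u_{k-1})\cap X_0(u_{k+1})$ is a fine elaboration of what the paper leaves implicit.
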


\begin{proof}
 As $S_0(v)\subseteq S$ for every node~$v\in G$, $|S_0(v)|\leq |S|$, and thus in particular for every node $v\in\{u_1,\dots,u_{k-1}\}\cup\{u_{2k-1},\dots,u_{k+1}\}$.  Fact~\ref{fact:link_I_X} then implies that $|I_{u_i}|\leq\tau$ and $|I_{u_{2k-i}}|\leq\tau$ for every $i\in\{1,\dots,k-1\}$. Therefore, the identifier of node $u_0$ is forwarded along the two branches of the cycle, and $u_k$ rejects.
\end{proof}

\subsubsection{Analysis of the third $\textsf{color-BFS}$}

The purpose of this section is to prove the following lemma which, combined with Lemmas~\ref{lemma:success_U} and~\ref{lemma:success_S}, ensures the correctness of our algorithm. Informally, Lemma~\ref{lemma:success_W} below states that if $G$ contains no cycle passing through $S$, but contains a well colored  cycle passing through some $u_0$ having sufficiently many neighbours in~$S$, then the third call to $\textsf{color-BFS}$ rejects.
This lemma is crucial as it does not only provide the key of the proof of Theorem~\ref{classical_theo}, but it is also central in the design of our algorithm for quantum \CONGEST.

\begin{lemma}\label{lemma:success_W}
Suppose that $G$ contains a $2k$-cycle $C=(u_0,\dots,u_{2k-1})$ with $c(u_i)=i$ for every $i\in \{0,\dots,2k-1\}$,
which is not intersecting~$S$, but such that $u_0$ has at least $k^2$ neighbors in $S$.  If  $|S|\leq \tau/(k2^{k-1})$ then $u_k$ rejects in $\textsf{color-BFS}(k,G[V\smallsetminus S],c,W,\tau)$.
\end{lemma}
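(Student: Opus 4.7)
The plan is to show that in $\textsf{color-BFS}(k,G[V\setminus S],c,W,\tau)$, the identifier of $u_0$ reaches $u_k$ along both halves of the cycle $C$, forcing $u_k$ to reject. Since $u_0\in W$ and $c(u_0)=0$, we have $u_0\in W_0$, and the cycle $C$ being entirely in $G[V\setminus S]$ and well-colored places $u_0$ in both $W_0(u_{k-1})$ and $W_0(u_{k+1})$ via the two cycle-halves $(u_0,u_1,\dots,u_{k-1})$ and $(u_0,u_{2k-1},\dots,u_{k+1})$. Applying Fact~\ref{fact:link_I_X} to each half, provided $|W_0(u_{k-1})|\leq\tau$ and $|W_0(u_{k+1})|\leq\tau$, we get $u_0\in I_{u_{k-1}}$ and $u_0\in I_{u_{k+1}}$; both sides then forward $u_0$'s identifier to $u_k$, which receives the same identifier from two neighbors colored $k-1$ and $k+1$, and rejects.

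The core of the proof thus amounts to establishing $|W_0(u_{k-1})|\leq\tau$ (the bound on $|W_0(u_{k+1})|$ being symmetric), which I would prove by contrapositive. Assuming $|W_0(u_{k-1})|>\tau$, since each $w\in W_0(u_{k-1})\subseteq W$ has at least $k^2$ neighbors in $S$, double counting yields at least $\tau k^2$ edges between $W_0(u_{k-1})$ and $S$. Combined with $|S|\leq\tau/(k\,2^{k-1})$, pigeonhole exhibits some $s^*\in S$ adjacent to at least $k^3\cdot 2^{k-1}$ nodes of $W_0(u_{k-1})$.

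Exploiting this large pool of common $W$-neighbors of $s^*$, I would greedily select two nodes $w_a,w_b\in N(s^*)\cap W_0(u_{k-1})$ whose well-colored length-$(k{-}1)$ paths $P_a,P_b$ to $u_{k-1}$ in $G[V\setminus S]$ are internally vertex-disjoint. Concatenating them yields a $2k$-cycle $C'=w_a\to P_a\to u_{k-1}\to P_b^{\mathrm{rev}}\to w_b\to s^*\to w_a$ in $G$ passing through $s^*\in S$---exactly the ``alternating $W$--$S$ path attached to two vertex-disjoint branches at a common endpoint'' structure highlighted in the paper's overview. The multiplicative factor $2^{k-1}$ hidden in the threshold $\tau=k\,2^k\,np$ is precisely what provides enough candidates to avoid the $O(k)$ vertices already committed to the partial construction while picking $P_a$ and $P_b$ disjointly; making this greedy choice robust against arbitrary overlaps between well-colored paths to $u_{k-1}$ is the main combinatorial subtlety.

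The last---and, in my view, hardest---step is to convert the existence of $C'$ into the desired contradiction. A natural route is to observe that $C'$ should already have been caught by the preceding call $\textsf{color-BFS}(k,G,c,S,\tau)$ via Lemma~\ref{lemma:success_S}, whose size hypothesis $|S|\leq\tau$ is satisfied a fortiori. The obstacle is that $C'$ has two nodes colored $0$ (namely $w_a$ and $w_b$), so it is not automatically well-colored by the given $c$: one must either reconcile its color pattern with the template required by Lemma~\ref{lemma:success_S}, or instead use the abundance of witnesses in $W_0(u_{k-1})$ to route some \emph{alternative} identifier through both $u_{k-1}$ and $u_{k+1}$ simultaneously, so that $u_k$ rejects directly. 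This color bookkeeping, together with the vertex-disjointness enforcement in the greedy step, is the principal technical challenge.
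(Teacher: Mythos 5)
Your high-level frame matches the paper's: reduce the lemma to the bounds $|W_0(u_{k-1})|\leq\tau$ and $|W_0(u_{k+1})|\leq\tau$ and finish with Fact~\ref{fact:link_I_X}. The gaps are in the combinatorial core, and they are precisely the two steps you flag as unresolved. First, the greedy selection of $w_a,w_b\in N(s^*)\cap W_0(u_{k-1})$ with internally disjoint well-colored paths to $u_{k-1}$ does not follow from having many candidates. Membership in $W_0(u_{k-1})$ only guarantees that \emph{some} well-colored path to $u_{k-1}$ exists; all $k^3 2^{k-1}$ of these paths could be forced through a single bottleneck vertex of color~$1$ (say), in which case no two of them are internally disjoint and no $2k$-cycle of your shape exists. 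No multiplicative constant hidden in $\tau$ repairs this, because the obstruction is a single vertex, not a counting deficit. The paper's sparsification machinery exists to overcome exactly this: each vertex $v$ records a bounded-degree certificate $\F(v)\subseteq E(S,W_0)$ for its paths (Lemma~\ref{lem:path_F}, Fact~\ref{fact:degmax_OUT}), the nested graphs $\H(v,2\qGamma)\supseteq\cdots\supseteq\H(v,0)$ keep only high-degree edges, and the second branch is built from an edge at $s$ chosen \emph{outside} $\bigcup_j\F(v'_j)$, which forces its path back to $v$ to avoid the first branch. Moreover the cycle constructed there is in general not of your shape: for a vertex of color $i<k-1$ it uses a long alternating $W_0$--$S$ path of length $2(k-i)-1$ plus two branches of lengths $i$ and $i+1$; your single-$s^*$ configuration is only the degenerate case $i=k-1$, and the density bound must be propagated through \emph{all} colors with a threshold $2^{i-1}(k-1)|S|$ growing in $i$ so that the induction of Lemma~\ref{lem:color_survive} closes.

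Second, your difficulty in the last step is not color bookkeeping but a sign that the contradiction you seek is not available in the form you set up. The cycle $C'$ you construct meets $S$ but has two vertices colored $0$, so it is not well-colored and neither Lemma~\ref{lemma:success_S} nor an alternative identifier routed through $u_{k\pm1}$ applies in the current run. The paper sidesteps this entirely: the Density Lemma concludes only that \emph{some} $2k$-cycle intersects $S$, and Lemma~\ref{lemma:success_W} is proved, and invoked in Case~3 of the theorem, under the assumption that no $2k$-cycle intersects $S$ --- the complementary situation being covered by Case~2 with a fresh coloring. Restructuring your contrapositive that way (rather than trying to make the third \textsf{color-BFS} detect $C'$) is necessary, but even then the disjoint-paths step above remains the essential missing ingredient.
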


The proof of Lemma~\ref{lemma:success_W} is based on the following combinatorial property. If $C\cap S=\varnothing$ for every $2k$-cycle $C$ in $G$, then, for every $v\in V\smallsetminus S$, the set $W_0(v)$ defined below is small, which implies that $I_v$ is also small in $\textsf{color-BFS}(k,G[V\smallsetminus S],c,W,\tau)$. More specifically, the following holds. 

\begin{lemma}[Density lemma]\label{lem:combmain}
Let $S, W_0,V_1,\dots, V_{k-1}$ be non-empty disjoint subsets of vertices of a graph~$G$, and let us assume that every vertex $w_0 \in W_0$ has at least $k^2$ neighbors in $S$. 
For every $i\in\{1,\dots,k-1\}$, and every $v \in V_i$, let us define 
\[
W_0(v) = \{w \in W_0 \mid \mbox{$G$ contains a path $(w,v_1,\dots, v_i=v)$ s.t., for every $j\in\{1,\dots,i-1\}$, $v_j \in V_j$}\}.
\]
If $|W_0(v)| > 2^{i-1}(k-1) |S|$ for some $i\in\{1,\dots,k-1\}$  and some $v\in V_i$, then $G$ contains a $2k$-cycle intersecting $S$. 
\end{lemma}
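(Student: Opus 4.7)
I would proceed by strong induction on~$i$. For the base case $i=1$, $W_0(v)\subseteq N(v)$, so it is enough to find an alternating path of length $2(k-1)$ joining two vertices of $W_0(v)$ in the bipartite graph $B$ on parts $W_0(v)$ and $S$ whose edges are those $G$-edges between these parts. Since every $w\in W_0$ has $\geq k^2$ neighbors in $S$, $B$ has $\geq k^2|W_0(v)|$ edges, and combined with $|W_0(v)|>(k-1)|S|$, iteratively removing vertices of degree $<k$ on either side leaves a non-empty bipartite $(k,k)$-core $W^{\star}\cup S^{\star}$ with $|W^{\star}|,|S^{\star}|\geq k$. A standard longest-path argument in a bipartite graph with two-sided minimum degree $\geq k$ gives a simple path of length $\geq 2k-1$ starting in $W^{\star}$; dropping one $S^{\star}$-endpoint yields an alternating path of length exactly $2(k-1)$ with both endpoints in $W^{\star}\subseteq W_0(v)\subseteq N(v)$, and adding the two edges through $v$ produces the desired $2k$-cycle, which passes through~$S$.

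For the inductive step $i\geq 2$, I would fix for each $w\in W_0(v)$ a witness path $P_w=(w,v_1^w,\ldots,v_{i-1}^w,v)$. In Case~A, some $u\in V_j$ with $1\leq j\leq i-1$ satisfies $|W_0(u)|>2^{j-1}(k-1)|S|$, and the induction hypothesis at level~$j$ immediately gives a $2k$-cycle through~$S$. This case in particular absorbs any ``bottleneck'' situation: if all witnesses pass through some common $u\in V_j$, then $W_0(u)\supseteq W_0(v)$ and so $|W_0(u)|>2^{i-1}(k-1)|S|\geq 2^{j-1}(k-1)|S|$.

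In Case~B, $|W_0(u)|\leq 2^{j-1}(k-1)|S|$ at every level~$j$ and for every $u\in V_j$. Consequently, no single vertex separates $v$ from $W_0(v)$ in the witness DAG, so by Menger's theorem there exist two internally vertex-disjoint length-$i$ paths $P_A,P_B$ from $v$ to two distinct vertices $w_A,w_B\in W_0(v)$. I would close the cycle by constructing an alternating path $Q=(w_A,s_1,w_1',\ldots,s_{k-i},w_B)$ of length exactly $2(k-i)$ in the bipartite graph $(W_0,S)$ via a greedy extension: since every $w\in W_0$ has $k^2\gg k-i$ neighbors in~$S$, one can choose successive $s_t\in N(w_{t-1}')\cap S$ and $w_t'\in N(s_t)\cap W_0$ avoiding all previously used vertices, while steering the last step $s_{k-i}$ into $N(w_B)\cap S$. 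The concatenation $P_A\cdot Q\cdot P_B^{-1}$ is then a simple $2k$-cycle intersecting $S$, because the three pieces live in the pairwise-disjoint vertex sets $\{v\}\cup V_1\cup\cdots\cup V_{i-1}$ (for $P_A$ and $P_B$) and $W_0\cup S$ (for $Q$).

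The main obstacle is the greedy construction of $Q$ of \emph{exact} length $2(k-i)$ with the \emph{prescribed} endpoints $w_A,w_B$: since neither endpoint is of our choosing and the length is fixed, we must simultaneously avoid the already-used vertices and arrive at~$w_B$ at precisely the correct step. This is where the exponential factor $2^{i-1}$ in the density hypothesis is crucial: combined with a max-flow/min-cut analysis, it guarantees that Menger provides enough flexibility (a large candidate set of endpoints inside the $(k-i+1)$-core of $(W_0,S)$) to run the alternation without getting stuck and to pin the final step exactly at~$w_B$.
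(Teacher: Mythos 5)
Your base case matches the paper's warm-up, and your Case~A (absorbing a dense intermediate level into the induction hypothesis) is sound. The gap is in Case~B, and it is exactly the crux of the lemma. After extracting two internally disjoint layered paths $P_A,P_B$ from $v$ to \emph{prescribed} vertices $w_A,w_B\in W_0(v)$ via Menger, you must build an alternating $W_0$--$S$ path $Q$ of \emph{exact} length $2(k-i)$ with \emph{both endpoints prescribed}. The degree hypothesis ($\geq k^2$ neighbors in $S$ for every $w\in W_0$) gives long greedy paths from a free starting point, but it gives no control over where such a path ends: $w_A$ and $w_B$ could lie in different dense "blocks" of the bipartite graph $(W_0,S)$ (e.g., two disjoint copies of $K_{k^2,k^2}$ satisfy the degree hypothesis yet admit no $w_A$--$w_B$ path at all), and "steering the last step into $N(w_B)\cap S$" requires a common $S$-neighbor of $w_B$ and the current endpoint, which need not exist. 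Your own text flags this as "the main obstacle" and defers it to an unspecified "max-flow/min-cut analysis"; that is precisely the part that is not a proof. Moreover, the factor $2^{i-1}$ does not play the role you assign to it: in the paper it arises from summing degree caps $2^{j-1}(k-1)$ over levels $j<i$, not from any Menger-type flexibility.

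The paper resolves this by inverting your order of construction and adding a sparsification mechanism. Each vertex $v$ of level $i$ aggregates edge sets $\F(v')$ selected by its level-$(i-1)$ neighbors into a graph $\H(v)$, then peels it into a nested sequence $\H(v,0)\subseteq\cdots\subseteq\H(v,2q)$ by repeatedly deleting low-degree vertices; $\F(v)$ consists of (some of) the deleted edges and has bounded $S$-degree $2^{i-1}(k-1)$. If some $\H(v,0)$ is non-empty, one first grows the alternating path $P$ inside the peeled graphs with \emph{free} endpoints $w\in W_0$ and $s\in S$, then connects $w$ to $v$ by a path $P'$ tracked through the $\F$-sets, and finally connects $s$ to $v$ by a path $P''$ that can be forced disjoint from $P\cup P'$ precisely because $\deg_{\H(v)}(s)>2^{i-1}(k-1)\geq k-i+\sum_{j<i}2^{j-1}(k-1)$ bounds the edges at $s$ that must be avoided. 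If instead all $\H(\cdot,0)$ are empty, a counting argument shows $|W_0(v)|\leq 2^{i-1}(k-1)|S|$, contradicting the hypothesis. Without something playing the role of these bounded-degree selected edge sets, your Case~B cannot be completed.
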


Before proving Lemma~\ref{lem:combmain}, let us observe that it directly implies Lemma~\ref{lemma:success_W}.

\begin{proof}[Proof of Lemma~\ref{lemma:success_W}]
Let us apply  Lemma~\ref{lem:combmain} twice, once with 
\[
V_i=\{v\in V\smallsetminus S\mid c(v)=i\}
\]
for every $i\in\{1,\dots,k-1\}$, and once with 
\[
V_i=\{v\in V\smallsetminus S\mid c(v)=2k-i\}.
\]
We get that, for $\textsf{color-BFS}(k,G[V\smallsetminus S],c,W,\tau)$, if $C\cap S=\varnothing$ for every $2k$-cycle $C$ in $G$ then, for every node $v\in V\smallsetminus S$ colored~$i$ or~$2k-i$ with $i\in\{1,\dots,k-1\}$, we have $|W_0(v)|\leq 2^{i-1}(k-1)|S|$. Using the fact that $|S|\leq\tau/(k2^{k-1})$, we derive that $|W_0(v)| \leq \tau $. It infers by Fact~\ref{fact:link_I_X} that $I_{u_i} = W_0(u_i)$ for all $i\in\{1,\dots, k-1\}\cup\{2k-1, \dots, k+1\}$. In particular the identifier of $u_0$ is forwarded along the two branches of the cycle, and $u_k$ rejects.
\end{proof}

The rest of this subsection is devoted to the proof of the Density Lemma (Lemma~\ref{lem:combmain}).

As a warm-up, let us prove the Density Lemma for $i = 1$, in order to provide the reader with an intuition of the proof before getting into the (considerably more complicated) general case. The case $i=1$ will also illustrate why the lemma is called ``Density lemma''. 

\paragraph{Warm Up: The Case $i=1$.} 

 Let  $v \in V_1$, and let us consider the bipartite graph $H(v)$ with vertex partition $S$ and $W_0(v)$. (Recall that $W_0(v) \subseteq W_0$.) Suppose that $|W_0(v)|>(k-1)|S|$. Let us show that one can construct a path $P$, of length $2(k-1)$, with both end points in $W_0(v)$. This path together with $v$ forms a $2k$-cycle. Path~$P$ exists merely because the graph $H(v)$ is dense. More precisely, $H(v)$ has degeneracy\footnote{A graph is $d$-degenerate if each of its subgraphs has a vertex of degree at most~$d$; The degeneracy of a graph is the smallest value~$d$ for which it is $d$-degenerate.} at least~$k$. Indeed, since every $w \in W_0$ has at least $k^2$ neighbours in $S$, we get that 
\[
        |E(H(v))|
        \geq k^2|W_0(v)| 
        \geq 2k|W_0(v)| 
        \geq k(|W_0(v)|+|S|) 
        = k|V(H(v))|.
\]
Consequently, $H(v)$ contains a non-empty subgraph of minimum degree at least~$k$, obtained by repeatedly removing all vertices of degree less than~$k$. In a bipartite graph of minimum degree~$k$, one can greedily construct a path of $2k$ vertices, starting from any vertex, and extending it $2k-1$ times with a new vertex that has not been used before. The extension is possible as long as the path has no more than $k-1$ vertices in each of the partitions. In particular, one can construct a path with $2k-1$ edges with one extremity in $S$ and one in $W_0(v)$. Therefore there exists a path $P$ with $2(k-1)$ edges, starting and ending on vertices of $W_0(v)$, 
thus achieving the proof for $i=1$. \qed

\medbreak

We aim at extending the proof above to arbitrary values of~$i$. Let $i$ be the smallest index for which the condition of the lemma, i.e., $|W_0(v)|>2^{i-1}(k-1)|S|$ for some $v \in V_i$, holds. A naive approach consists in considering again the bipartite graph $H(v)$ with edge set $E(S,W_0(v))$. As in the case~$i=1$, one may argue that $H(v)$ is dense enough to contain a path $P$ of length $2(k-i)$, starting and ending on some vertices of $W_0(v)$. Both endpoints of path $P$ have some path to $v$, say $P'$ and $P''$ respectively, of length $i$. If paths $P'$ and $P''$ were disjoint (except for $v$), then they would form, together with path $P$, a $2k$-cycle. Unfortunately, with a greedy construction of path $P$ as in the case $i=1$, there are no reasons why paths $P'$ and $P''$ should be disjoint, hence the need to refine the analysis. 

\paragraph{Intuition of the proof.}(Also see Fig.~\ref{sparsification} \withappendix{in Appendix~\ref{app-cycle}}.) Let us  construct a  sequence of nested subgraphs of $H(v)$, denoted~by 
\[
\H(v,0), \H(v,1),  \dots, \H(v,2\qGamma),
\]
where $\qGamma = \lfloor \frac{k-1}{2}\rfloor$. The construction starts from $\H(v,2\qGamma)$, down to $\H(v,0)$, with $\H(v,\gamma-1)\subseteq \H(v,\gamma)$ for every $\gamma\in\{1,\dots,2\qGamma\}$. We prove (cf. Lemma~\ref{lem:level_cycle}) that, if $\H(v,0)$ is not empty, then $G$ contains a $2k$-cycle passing through~$S$. Then we conclude by showing (cf. Lemma~\ref{lem:color_survive}) that if the graph $\H(v_j,0)$ is  empty for every $j\in\{1,\dots, i\}$ and every $v_j \in V_j$, then $|W_0(v)|\leq 2^{i-1}(k-1)|S|$.

In order to construct the decreasing sequence of graphs $\H(v,\gamma)$, $\gamma=2\qGamma,\dots,0$, each vertex $v$ ``selects'' a subset of edges of~$H(v)$, denoted by~$\F(v)$. One can think of these edge-sets as edges selected by vertex~$v$ for its neighbors of color $c(v)+1$, and vertices of color $c(v)+1$ will only be allowed to choose among edges selected for them by their neighbors of color~$c(v)$.
More specifically, if $v$ is colored $0$, then $\F(v)$ is simply the set of edges incident to $v$ connecting $v$ to a vertex of $S$. Then, for any vertex $v$ of color $i\geq1$, $v$ constructs the sequence $\H(v,\gamma)$ starting from the union of $\F(v')$ for all neighbors $v'$ of $v$ with color $i-1$. The sequence $\H(v,\gamma)$ is constructed by repeatedly removing edges incident to vertices of ``small'' degree, and $\F(v)$ corresponds to some of the edges removed during the process. 
This construction will ensure that if, for some vertex $v$ of color $i$, $\H(v,0)\neq \emptyset$, then $G$ contains a $2k$-cycle (Lemma~\ref{lem:level_cycle}), and the reason for this is that one can construct the three following paths (see Fig.~\ref{sparsification}):

\begin{enumerate}
\item a path $P$ of length $2(k-i)-1$ in $\H(v,2\qGamma)$, starting from some $w \in W_0(v)$, and ending on some $s \in S$ of high degree;

\item a path $P'=(v'_0,v'_1,\dots, v'_i-1,v'_i)$ of length $i$ from the endpoint $w=v'_0$ of $P$ to $v=v'_i$, such that, for every $j <i$,  $v'_j$ is colored~$j$, and the edge incident to $w$ in $P$ is contained in the graph $\F(v'_j)$;

\item a path $P''=(s,v''_0,\dots, v''_i)$ of length $i+1$ from the other endpoint $s$ of $P$ to $v=v''_i$, where each $v''_j$ is colored $j$, such that $P''$ intersects $P$ only in $s$, and intersects $P'$ only in~$v$. For constructing this third path $P''$, we heavily rely on the maximum degrees of the graphs $\F(v_j)$ (which are upper bounded by a function of $j$), and on the fact that the edge of $P''$ incident to $s$ can be chosen to avoid all sets $\F(v_j)$, for all $j<i$. This will ensure that $P''$ and $P'$ are disjoint.

\end{enumerate}
The paths $P$, $P'$, and $P''$ together form the cycle of length~$2k$. Provided with this rough intuition, let us proceed with the formal construction of the graphs $\H$ and $\F$.  

\paragraph{Sparsification.}

Let $E(S,W_0)$ denote the set of edges of $G$ having one endpoint in $S$, and the other in~$W_0$. Let $\F(v),\H(v)\subseteq E(S,W_0)$ be defined inductively on $i=1,\dots,k-1$ for any node $v\in V_i$.  $\F(v)$ and $\H(v)$ are constructed as sets of edges, and we slightly abuse notation by also denoting $\F(v)$  and $\H(v)$ as the \emph{graphs} induced by these edge sets.

For every $w\in W_0$, $\F(w)$ is 
defined as the set of edges between $w$ and $S$ in $G=(V,E)$, that is, for every $w\in W_0$,
\begin{equation}\label{eq:F_col0}
\F(w)=E(\{w\},S)=\big \{\{w,s\}\in E \mid s\in S\big\}. 
\end{equation}
Let us set $V_0=W_0$. Then, for every $i\in\{1,\dots,k-1\}$, and every 
$v\in V_i$, we inductively define  
\begin{equation}
\label{eq:init_F}
\H(v)=\bigcup_{\{v,v'\}\in E \; : \; v'\in V_{i-1}} \F(v').
\end{equation}
Let $\qGamma=\big\lfloor\frac{k-i}{2}\big\rfloor$. We now inductively define 
an intermediate increasing sequence of subsets 
\[
\H(v,0) \subseteq \H(v,1)\subseteq \ldots\subseteq \H(v,2\qGamma-1) \subseteq \H(v,2\qGamma)\subseteq \H(v)
\]
as follows:
\begin{enumerate}
\item Initialization: 
\begin{equation}\label{init_inGamma}
\H(v,2\qGamma)=\big\{\{s,v\}\in \H(v) \mid (s\in S) \;\land\; \big(\deg_{\H(v)}(s)> 2^{i-1}(k-1)\big)\big\}.
\end{equation}
\item Induction from $2\gamma$ to $2\gamma-1$:
\begin{equation}\label{even_to_odd}
\H(v,2\gamma-1)=\big \{\{w,s\}\in \H(v,2\gamma) \mid (w\in W_0)\;\land\; \big(\deg_{\H(v,2\gamma)}(w)> 2\gamma\big)\big \}.
\end{equation}
\item Induction from $2\gamma-1$ to $2\gamma-2$: 
\begin{equation}\label{odd_to_even}
\H(v,2\gamma-2)= \big\{\{s,w\}\in \H(v,2\gamma-1) \mid (s\in S)\;\land\; \big(\deg_{\H(v,2\gamma-1)}(s)> 2\gamma-1\big)\big \}.
\end{equation}
\end{enumerate}
Finally, we define $\F(v)$ as:
\begin{equation}
\label{eq:init_out}
\begin{split}
    \F(v)
    =& \Big\{\{s,v\}\in \H(v) \mid (s\in S) \;\land\; \big(\deg_{\H(v)}(s)\leq 2^{i-1}(k-1)\big)\Big\}\\
    &\bigcup_{\gamma=1}^\qGamma \Big\{\{s,v\}\in \H(v,2\gamma-1) \mid (s\in S) \;\land\; \big(\deg_{\H(v,2\gamma-1)}(s)\leq 2\gamma -1 )\big)\Big\}.
\end{split}
\end{equation}

\withoutappendix{This construction is central because if $\H(v,0)\neq \varnothing$ then there is a $2k$-cycle in the graph, passing through $S$ (Lemma~\ref{lem:level_cycle}).}\withappendix{This construction has properties (cf. Appendix~\ref{app-prop}) leading to the proof of Lemma~\ref{lem:combmain} in two steps.}
\begin{toappendix}
\withappendix{\subsection{Properties of $\H(v)$ and $\F(v)$}\label{app-prop}}%
Let us first state two preliminary results on $\H(v)$ and $\F(v)$.

\begin{fact}\label{fact:degmax_OUT}
Let $i\in\{1,\dots,k-1\}$, and let $v\in V_i$. For any $s\in S\cap\H(v)$, we have $$\deg_{\F(v)}(s)\leq 2^{i-1}(k-1).$$
\end{fact}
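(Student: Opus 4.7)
The plan is to unpack the definition of $\F(v)$ given by~(\ref{eq:init_out}), and show that any fixed $s\in S\cap \H(v)$ has only bounded degree in each of its constituent pieces, and in fact contributes edges to at most one piece.

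First I would rewrite $\F(v)$ by comparing it directly with the inductive definitions~(\ref{init_inGamma}) and~(\ref{odd_to_even}): the first summand of~(\ref{eq:init_out}) is exactly the set of edges of $\H(v)$ discarded when passing to $\H(v,2\qGamma)$ (because their $S$-endpoint has degree at most $2^{i-1}(k-1)$ in $\H(v)$), and the $\gamma$-th summand is the set of edges discarded when passing from $\H(v,2\gamma-1)$ to $\H(v,2\gamma-2)$ (because their $S$-endpoint has degree at most $2\gamma-1$ in $\H(v,2\gamma-1)$). This makes the structure of $\F(v)$ transparent: it is a disjoint union over ``removal stages'' on the $S$-side of the sparsification process.

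Next I would establish an ``at most one contribution'' principle for~$s$. If $s$ contributes any edge to the first summand, then $\deg_{\H(v)}(s) \leq 2^{i-1}(k-1)$, so by~(\ref{init_inGamma}) no edge incident to $s$ survives in $\H(v,2\qGamma)$, and hence none in any $\H(v,2\gamma-1)\subseteq \H(v,2\qGamma)$; so $s$ contributes to no other summand. Symmetrically, if $s$ contributes to the summand indexed by some $\gamma_0$, then by~(\ref{odd_to_even}) all edges incident to $s$ in $\H(v,2\gamma_0-1)$ are removed when constructing $\H(v,2\gamma_0-2)$, so $s$ has no incident edge in $\H(v,2\gamma-1)\subseteq\H(v,2\gamma_0-2)$ for any $\gamma<\gamma_0$. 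Thus all edges of $\F(v)$ incident to $s$ lie in a single summand.

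Finally, I would bound $\deg_{\F(v)}(s)$ within that single summand directly from the defining inequality: in the first case, $\deg_{\F(v)}(s)\leq \deg_{\H(v)}(s)\leq 2^{i-1}(k-1)$; in the $\gamma$-th case, $\deg_{\F(v)}(s)\leq \deg_{\H(v,2\gamma-1)}(s)\leq 2\gamma-1 \leq 2\qGamma-1 \leq k-i-1$, which is at most $2^{i-1}(k-1)$ for every $i\geq 1$ (since $2^{i-1}(k-1)\geq k-1 > k-i-1$). The only real obstacle is the careful bookkeeping of the nested chain $\H(v,0)\subseteq\cdots\subseteq\H(v,2\qGamma)\subseteq\H(v)$ and the monotonicity needed to rule out multiple contributions; once that is in place, the stated degree bound is immediate from the defining degree thresholds.
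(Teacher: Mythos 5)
Your proof is correct and follows the same route as the paper, which disposes of this fact in one line (``by construction thanks to Eq.~\eqref{eq:init_out}, and by the fact that $2\qGamma-1\leq k-2\leq 2^{i-1}(k-1)$''); your write-up simply makes explicit the ``at most one summand contributes edges at $s$'' bookkeeping that the paper leaves implicit in ``by construction''. All the individual steps (identifying each summand of $\F(v)$ with the edges discarded at one sparsification stage, the monotonicity of the nested chain ruling out multiple contributions, and the final comparison $2\qGamma-1\leq k-i-1\leq 2^{i-1}(k-1)$) check out.
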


\begin{proof}
By construction thanks to Eq.~\eqref{eq:init_out}, and by the fact that $2\qGamma-1\leq k-2\leq2^{i-1}(k-1)$.
\end{proof}

\begin{lemma}\label{lem:path_F}
Let $i\in\{1,\dots,k-1\}$, and let $v\in V_i$.
For every edge $\{s,w\}\in \H(v)$, with $s\in S$ and $w\in  W_0$, there is a path $(w,v_1,\dots,v_{i-1},v)$ such that, for every $j\in\{1,\dots,i-1\}$, $v_j\in V_j$ and $\{s,w\}\in\F(v_j)$. 
In particular, $\H(v)\subseteq E(S,W_0(v))$.
\end{lemma}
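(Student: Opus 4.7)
The natural approach is induction on $i\in\{1,\dots,k-1\}$.

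\textbf{Base case $i=1$.} Let $v\in V_1$ and $\{s,w\}\in \H(v)$. By Eq.~\eqref{eq:init_F}, $\H(v)=\bigcup_{\{v,v'\}\in E,\, v'\in V_0}\F(v')$, so $\{s,w\}\in \F(w')$ for some $w'\in W_0=V_0$ adjacent to $v$. By Eq.~\eqref{eq:F_col0}, $\F(w')$ contains only edges incident to $w'$, hence $w'=w$. Thus $w$ is adjacent to $v$, and the (trivial) path $(w,v)$ satisfies the lemma: the set of internal indices is empty, and there is nothing more to check.

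\textbf{Inductive step.} Assume the statement for $i-1$, and let $v\in V_i$ with $\{s,w\}\in \H(v)$. By Eq.~\eqref{eq:init_F}, there exists $v_{i-1}\in V_{i-1}$ adjacent to $v$ in $G$ with $\{s,w\}\in \F(v_{i-1})$. A quick inspection of Eq.~\eqref{eq:init_out} shows that $\F(v_{i-1})\subseteq \H(v_{i-1})$ (every edge appearing in $\F(v_{i-1})$ comes either from $\H(v_{i-1})$ directly, or from some $\H(v_{i-1},2\gamma-1)\subseteq \H(v_{i-1})$). Thus $\{s,w\}\in \H(v_{i-1})$, and the induction hypothesis produces a path $(w,v_1,\dots,v_{i-2},v_{i-1})$ with $v_j\in V_j$ and $\{s,w\}\in \F(v_j)$ for every $j\in\{1,\dots,i-2\}$. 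Appending the edge $\{v_{i-1},v\}\in E$ yields the desired walk $(w,v_1,\dots,v_{i-1},v)$, with $\{s,w\}\in \F(v_{i-1})$ already secured. Distinctness of the vertices, so that this walk is indeed a path, follows from the hypothesis of Lemma~\ref{lem:combmain} that $S,W_0,V_1,\dots,V_{k-1}$ are pairwise disjoint: the vertices $w\in V_0, v_1\in V_1,\dots, v_i=v\in V_i$ lie in distinct classes.

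\textbf{The ``in particular'' clause.} Any edge of $\H(v)$ has, by the above induction, the form $\{s,w\}$ with $s\in S$ and $w\in W_0$ (trace it back through the construction to some $\F(w')\subseteq E(\{w'\},S)$ with $w'=w$). Moreover, the path just built witnesses exactly the defining condition of $W_0(v)$ in Lemma~\ref{lem:combmain}, so $w\in W_0(v)$. Hence $\{s,w\}\in E(S,W_0(v))$, proving $\H(v)\subseteq E(S,W_0(v))$.

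The proof is essentially bookkeeping; the only point that requires a moment of care is the inclusion $\F(v_{i-1})\subseteq \H(v_{i-1})$ used to chain the induction, which I expect to be the main (mild) obstacle. Everything else follows routinely from unfolding the definitions in Eqs.~\eqref{eq:F_col0}--\eqref{eq:init_out}.
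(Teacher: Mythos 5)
Your proof is correct and follows essentially the same route as the paper's: induction on $i$, with the base case resolved by noting that $\F(w')$ only contains edges incident to $w'$ (so $w'=w$), and the inductive step chained via the inclusion $\F(v_{i-1})\subseteq\H(v_{i-1})$ from Eq.~\eqref{eq:init_out}. Your additional remarks on vertex-distinctness and the derivation of the ``in particular'' clause are fine elaborations of points the paper treats as immediate.
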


\begin{proof}
The second claim, $\H(v)\subseteq E(S,W_0(v))$, is a direct consequence of the first. The first statement is proven by induction on $i\geq 1$. For $i=1$, i.e., for $v\in V\smallsetminus S$ with $c(v)=1$, and for $\{s,w\}\in\H(v)$, it follows from Eq.~\eqref{eq:init_F} that there exists $v'\in V\smallsetminus S$ such that $\{v,v'\}\in E$, $c(v')=0$, and $\{s,w\}\in \F(v')$. By construction, $\F(v')=E(\{v'\},S)$ when $c(v')=0$ (see Eq.~\eqref{eq:F_col0}), therefore $w=v'$, and the required path is just the edge $\{w,v\}$.

    Assume now that the lemma holds for color $i\geq 1$. Fix $v\in V\smallsetminus S$ such that $c(v)=i+1$. Again, by Eq.~\eqref{eq:init_F}, if $\{s,w\}\in\H(v)$, then there exists $v'\in V\smallsetminus S$ such that $\{v,v'\}\in E$, $c(v')=i$, and $\{s,w\}\in \F(v')$. By construction (see Eq.~\eqref{eq:init_out}), $\F(v')\subseteq\H(v')$. As a consequence,  we can apply the induction hypothesis on $\{s,w\}\in \H(v')$, which provides us with a path $(w,v_1,\dots,v_{i-1},v')$. We further extend this path using the edge $\{v,v'\}\in G[V\smallsetminus S]$, which concludes the induction.
\end{proof}
\end{toappendix}

\paragraph{Cycle construction.}

The following lemma \withappendix{(see proof in Appendix~\ref{app-cycle})} implies that if there exists $v\in V\smallsetminus S$ such that $\H(v,0)\neq\varnothing$, then there is a $2k$-cycle intersecting~$S$. Such a cycle is exhibited in Figure~\ref{sparsification} for $k=5$ and $i=2$. 
\withoutappendix{Associated with Lemma~\ref{lem:color_survive} that provides the existence of such a $v$, it will prove Lemma~\ref{lem:combmain}.}

\begin{lemma}
\label{lem:level_cycle}
Assume that $\H(v,0)\neq \varnothing$ for some $v\in V\smallsetminus S$ with $c(v)=i\in\{1,\dots,k-1\}$.
Then there is a $2k$-cycle $C$ in $G$ such that $C\cap S\neq \varnothing$.
\end{lemma}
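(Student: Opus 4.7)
The plan is to assemble the desired $2k$-cycle intersecting $S$ as the concatenation of the three paths $P,P',P''$ described in the intuition preceding this lemma: a bipartite path $P$ of length $2(k-i)-1$ in $\H(v,2\qGamma)$ going from some $w\in W_0(v)$ to some $s\in S$ with $\deg_{\H(v)}(s)>2^{i-1}(k-1)$; a path $P'=(w=v'_0,v'_1,\dots,v'_{i-1},v)$ of length $i$ witnessing $w\in W_0(v)$; and a path $P''=(s,v''_0,v''_1,\dots,v''_{i-1},v)$ of length $i+1$ through a fresh $v''_0\in W_0(v)$. Once these three paths are vertex-disjoint outside their common endpoints $w$, $s$, $v$, their union is a $2k$-cycle containing $s\in S$.

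For $P$, I would start from an arbitrary edge $\{s_0,w_0\}\in \H(v,0)$ and grow the path greedily on both sides, keeping every edge inside $\H(v,2\qGamma)$. The key observation is that an edge kept in $\H(v,\ell-1)$ forces the current endpoint to have more than $\ell-1$ neighbors in $\H(v,\ell)$ (by Eqs.~\eqref{even_to_odd} and~\eqref{odd_to_even}), while at step $\ell$ at most $\lfloor \ell/2\rfloor$ previously chosen vertices sit on the side we want to move to; hence a valid new vertex is always available. Using the sparsification exhausts $2\qGamma$ extension steps; the $S$-endpoints reached then satisfy $\deg_{\H(v)}>2^{i-1}(k-1)$ (by Eq.~\eqref{init_inGamma}), which easily accommodates one or two extra extensions inside $\H(v,2\qGamma)$. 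Splitting the extensions appropriately between the two sides---symmetric use when $k-i$ is even, one additional ``high-degree'' step on one side when $k-i$ is odd---produces a path of length exactly $2(k-i)-1$ with the required endpoint types, the $W_0(v)$-membership of the $w$-end following from Lemma~\ref{lem:path_F}.

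Given $P$, the path $P'$ is obtained by applying Lemma~\ref{lem:path_F} to the edge of $P$ incident to $w$. For $P''$, I would pick a first vertex $v''_0\in W_0(v)$ adjacent to $s$ in $\H(v)$, subject to two requirements: $v''_0$ lies outside the $k-i$ vertices of $P$ in $W_0$, and $\{s,v''_0\}\notin \F(v'_j)$ for every $j\in\{1,\dots,i-1\}$. Fact~\ref{fact:degmax_OUT} bounds the second class of forbidden neighbors of $s$ by $\sum_{j=1}^{i-1}2^{j-1}(k-1)=(2^{i-1}-1)(k-1)$, so the total number of forbidden neighbors is at most $(2^{i-1}-1)(k-1)+(k-i)<2^{i-1}(k-1)<\deg_{\H(v)}(s)$, and a valid $v''_0$ exists. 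A second application of Lemma~\ref{lem:path_F} to $\{s,v''_0\}$ then yields $P''$.

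The intermediate vertices of $P'$ and $P''$ all live in $V_1\cup\dots\cup V_{i-1}$, which is disjoint from $S\cup W_0$, so they cannot meet $P$. Moreover, the constraint $\{s,v''_0\}\notin \F(v'_j)$ forces the $V_j$-vertex of $P''$ to differ from $v'_j$---otherwise the edge would have to be in $\F(v'_j)$ by Lemma~\ref{lem:path_F}---so $P'\cap P''=\{v\}$, completing the cycle. The main technical obstacle is the careful bookkeeping in the construction of $P$: matching length exactly $2(k-i)-1$ with a $W_0(v)$-endpoint on one side and a high-$\H(v)$-degree $S$-endpoint on the other requires handling the parity of $k-i$ separately and invoking the $\H(v)$-degree bound from Eq.~\eqref{init_inGamma} precisely when the sparsification layers are exhausted.
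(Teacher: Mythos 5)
Your proposal is correct and follows essentially the same route as the paper's proof: the same decomposition into an alternating $S$--$W_0$ path $P$ grown greedily using the degree guarantees of the nested graphs $\H(v,\gamma)$ (with the parity of $k-i$ handled via the $\H(v)$-degree bound of Eq.~\eqref{init_inGamma}), the same use of Lemma~\ref{lem:path_F} for $P'$ and $P''$, and the same counting argument via Fact~\ref{fact:degmax_OUT} to pick the fresh neighbor of $s$ avoiding $P$ and the sets $\F(v'_j)$. The only differences are cosmetic (starting $P$ from an edge of $\H(v,0)$ rather than a single vertex, and leaving the induction bookkeeping slightly informal, which you rightly flag as the point requiring care).
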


\begin{toappendix}
\withappendix{\subsection{Proof of Lemma~\ref{lem:level_cycle}}\label{app-cycle}}
\begin{proof}
The existence of  a $2k$-cycle in $G$ going through~$S$ is shown by constructing three simple paths $P$ (Claim~\ref{claim:path_S_W}), $P'$  and $P''$ (Claim~\ref{claim:vertical_paths}), the union of which is a cycle. In Figure~\ref{sparsification}, path $P=(w,s_3,w_2,s_1,w'_2,s)$, path $P'=(w,v'_1,v)$ and path $P''=(s,w'',v''_1,v)$.

\begin{figure}[ht]
    \centering  \includegraphics[width=9cm]{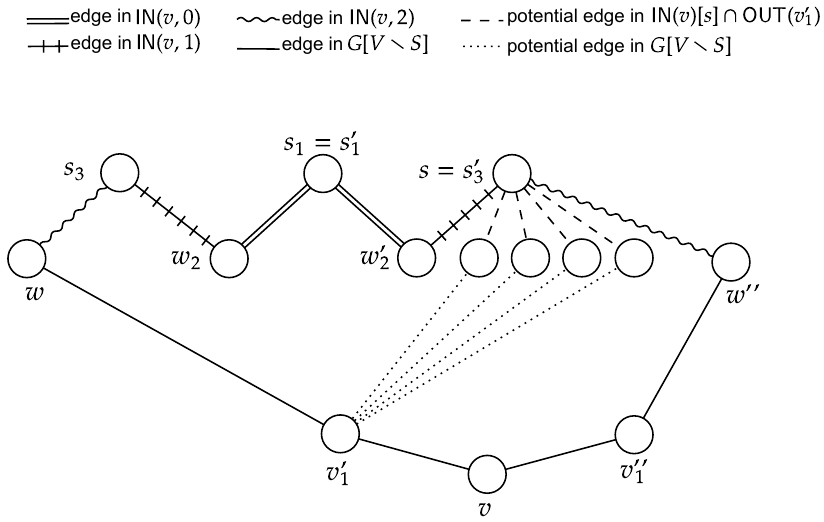}
    \caption{The case of a $10$-cycle (i.e., $k=5$). In the figure, $\H(v,0)\neq\varnothing$ as $v\in V_2$.
    Here we have $\qGamma=1$, and thus the considered graphs for the proof are  $\H(v,0)\subseteq\H(v,1)\subseteq\H(v,2)\subseteq\H(v)$. Regarding the proof of Claim~\ref{claim:path_S_W}, we have $\deg_{\H(v,1)}(s_1)>1$, and thus there exist vertices $w_2$ and $w'_2$ in $\H(v,1)$. Since $\deg_{\H(v,2)}(w_2)>2$, and since $\deg_{\H(v,2)}(w'_2)>2$, there are two vertices $s_3$ and $s'_3$ in $\H(v,2)$. And since $\deg_{\H(v)}(s_3)>8$, there exists a vertex $w$ in $\H(v)$.
    Regarding the proof of Claim~\ref{claim:vertical_paths}, we have $\deg_{\H(v)}(s)>8$ and $\deg_{\F(v'_1)}(s'_3)\leq 4$. Therefore, there exists a vertex $w''$ in $\H(v)[s]\smallsetminus\big(\{w,w_2,w'_2\}\cup\F(v'_1)\big)$.}
    \label{sparsification}
\end{figure}

\begin{claim}\label{claim:path_S_W}
There is a simple path $P$ with $2(k-i)$ nodes in $W_0\cup S$, alternating between nodes in $W_0$ and nodes in $S$, the edges of which are all in $\H(v,2\qGamma)$.
\end{claim}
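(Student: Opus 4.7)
I plan to build $P$ greedily, outward from any edge $e_0=\{s_\star,w_\star\}\in\H(v,0)$ (which exists by hypothesis), alternating extensions on the two sides of $e_0$. The construction rests on the degree lower bounds inherent in the recursive sparsification: if an edge lies in $\H(v,2\gamma-1)$ then its $W_0$-endpoint has degree $>2\gamma$ in $\H(v,2\gamma)$; if an edge lies in $\H(v,2\gamma-2)$ then its $S$-endpoint has degree $>2\gamma-1$ in $\H(v,2\gamma-1)$; and if an edge lies in $\H(v,2\qGamma)$ then its $S$-endpoint has degree $>2^{i-1}(k-1)$ in $\H(v)$. Recall also the chain $\H(v,0)\subseteq\H(v,1)\subseteq\cdots\subseteq\H(v,2\qGamma)\subseteq\H(v)$.

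\textbf{Extension loop.} Starting from the two-node path $s_\star-w_\star$, I alternate left-then-right extensions, each time appending one new vertex (of the opposite type to the current endpoint) at the far end. When the last-added edge on one side lies in some middle layer $\H(v,\ell)$ with $\ell<2\qGamma$, the recurrences above yield a degree bound strictly greater than $\ell+1$ at that endpoint in $\H(v,\ell+1)$; this lets me pick a fresh neighbor and append the corresponding edge, which lives in $\H(v,\ell+1)\subseteq\H(v,2\qGamma)$. Once $\ell$ reaches $2\qGamma$ on one side, any further extension from an $S$-endpoint $s$ there uses the outer bound $\deg_{\H(v)}(s)>2^{i-1}(k-1)$: such $s$ has more than $2^{i-1}(k-1)$ $W_0$-neighbors in $\H(v)$, and because $\deg_{\H(v)}(s)$ is this large, each such edge automatically belongs to $\H(v,2\qGamma)$. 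After $2(k-i)-2$ extensions in total, the path has $2(k-i)$ nodes alternating between $W_0$ and $S$, all of whose edges lie in $\H(v,2\qGamma)$.

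\textbf{Main obstacle.} The delicate point is verifying, at every extension step, that the applicable degree bound strictly exceeds the number of previously used vertices of the type being added, so that a fresh neighbor can always be chosen. On each side the level $\ell$ of the last-added edge grows by one per extension from that side, and after $t$ extensions in total the path has at most $\lceil(t+2)/2\rceil$ vertices of each type. Since the bound $\deg_{\H(v,\ell+1)}(\cdot)>\ell+1$ available at each middle-layer step exceeds the at most $\lceil t/2\rceil$ same-type vertices already present, a fresh neighbor is always found; and for the final extensions, the outer bound $2^{i-1}(k-1)\geq k-1\geq k-i$ absorbs the at most $k-i$ already-used $W_0$-vertices. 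The thresholds $2\gamma-1$, $2\gamma$, and $2^{i-1}(k-1)$ chosen in the sparsification are precisely calibrated to make this bookkeeping go through, and the alternation of sides is what allows us to reach $2(k-i)$ nodes using only levels up to $2\qGamma$ plus (when $k-i$ is odd) a single outer-layer extension.
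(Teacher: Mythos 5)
Your proposal is correct and takes essentially the same approach as the paper's proof: both grow the path outward from an element of the innermost layer $\H(v,0)$, use the degree thresholds built into the nested sparsification (Eqs.~\eqref{even_to_odd}, \eqref{odd_to_even}) to guarantee a fresh vertex at each extension, and invoke the outer bound $\deg_{\H(v)}(s)>2^{i-1}(k-1)$ from Eq.~\eqref{init_inGamma} for the final extension when the parity of $k-i$ requires going beyond level $2\qGamma$. The only differences are organizational (the paper extends symmetrically by two vertices on each side per induction step from a central $S$-vertex, whereas you alternate single-vertex extensions from a central edge, and your vertex counts have a minor slip of one that does not affect the tight cases), so the argument goes through.
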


\begin{claimproof}
The construction  is done by induction on $\gamma$. 
A path
 \[
 P_{\gamma}=(s_{2\gamma+1},\dots,w_2,s_1,w'_2,\dots,s'_{2\gamma+1})\subseteq \H(v,2\gamma)
 \]
is aimed at being extended to a path $P_{\gamma+1}\subseteq \H(v,2\gamma+2)$. 
 The extension uses the fact that both extremities of $P_{\gamma}$ have some incident edges in $\H(v,2\gamma)$. 
    
For the base case $\gamma=0$, let $s_1=s'_1\in S$ be any node with an  incident edge in $\H(v,0)\neq\varnothing$. Observe that, in this base case, the path is trivial, formed by a unique vertex of graph $\H(v,0)$.
    
For the induction step, we start from a path 
$P_{\gamma}=(s_{2\gamma+1},\dots,s_1,\dots,s'_{2\gamma+1})$ 
in $\H(v,2\gamma)$, from $S$ to $S$. Note that this path is also a path in $\H(v,2\gamma+2)\supseteq\H(v,2\gamma)$.
Moreover, since $s_{2\gamma+1}$ has some incident edge in $\H(v,2\gamma)$, and since $\H(v,2\gamma)\subseteq\H(v,2\gamma+1)$,
$s_{2\gamma+1}$ must have large degree in $\H(v,2\gamma+1)$. Specifically, thanks to Eq.~\eqref{odd_to_even}, we have
\[
\deg_{\H(v,2\gamma+1)}(s_{2\gamma+1})= \deg_{\H(v,2\gamma)}(s_{2\gamma})>2\gamma+1.
\]
Thus, there is a vertex $w_{2\gamma+2}\in W_0$ such that 
\[
w_{2\gamma+2}\not \in\{w_2,w'_2,\dots,w_{2\gamma},w'_{2\gamma}\}
\;\mbox{and}\; 
\{w_{2\gamma+2},s_{2\gamma+1}\}\in \H(v,2\gamma+1).
\]
Similarly, we can find  $w'_{2\gamma+2}$ adjacent to $s'_{2\gamma+1}$ in $\H(v,2\gamma+1)$ (see edges $(s_1,w_2)$ and $(s'_1,w'_2)$ in Figure~\ref{sparsification}). The degrees of $w_{2\gamma+2}$ and $ w'_{2\gamma+2}$ in $\H(v,2\gamma+2)$ are equal to their degrees in $\H(v,2\gamma+1)$, and thus they are both at least $2\gamma+2$ (by Eq~\eqref{even_to_odd}). 
This establishes the existence of two new vertices $s_{2\gamma+3}$ and $s'_{2\gamma+3}$ (see edges $(w_2,s_3)$ and $(w'_2,s'_3)$ in Figure~\ref{sparsification}).
The extended path is then
\[
P_{\gamma+1}=(s_{2\gamma+3},w_{2\gamma+2},s_{2\gamma+1},\dots,s_1,\dots,s'_{2\gamma+1},w'_{2\gamma+2},s'_{2\gamma+3}),
\]
which concludes the induction step. The path $P_\qGamma$ has $4\qGamma+1$ vertices, and has its end points in~$S$ (path $P_1=(s_3,w_2,s_1,w'_2,s'_3)$ in Figure~\ref{sparsification}). 

If $k-i$ is even, then $4\qGamma+1=2(k-i)+1$. So the path $P$ is merely $P_\qGamma$ without one of its two end points. 

If $k-i$ is odd (as in Figure~\ref{sparsification}), then the path $P_\qGamma$ has  $2(k-i)-1$ vertices, and we need one more step to reach the desired length $2(k-i)$. 
Since $\{s_{2\qGamma+1},w_{2\qGamma}\}\in\H(v,2\qGamma)$, by using Eq.~\eqref{init_inGamma}, we get $$\deg_{\H(v)}(s_{2\qGamma+1})=\deg_{\H(v,2\qGamma)}(s_{2\qGamma+1})>2^{i-1}(k-1)>2\qGamma=k-i-1.$$ 
Therefore we can select $w\notin\{w_2,w'_2,\dots,w_{2\qGamma},w'_{2\qGamma}\}$ such that $\{s_{2\qGamma+1},w\}\in\H(v,2\qGamma)$ (see edge $(s_3,w)$ in Figure~\ref{sparsification}). Then $P$ is  $P_\qGamma$ augmented with $w$ connected to its
end point~$s_{2\qGamma+1}$.
\end{claimproof}

Let us now connect the two extremities of $P$ to node $v$ by using two well-colored node-disjoint paths. 

\begin{claim}\label{claim:vertical_paths}
There are two simple paths:
\begin{enumerate}
\item  $P'=(w,v'_1,\dots,v'_{i-1},v)$, 
with $i+1$ nodes from the extremity $w\in W_0$ of~$P$ to $v$,
such that, for every $j\in\{1,\dots,i-1\}$, $v'_j\not\in P$, and $c(v'_j)=j$.
\item  $P''=(s,w'',v''_1,\dots,v''_{i-1},v)$, with $i+2$ nodes from the extremity $s\in S$ 
of $P$ to $v$, where $w''\in\H(v)\smallsetminus\big( P\cup P'\big)$, and, for every $j\in\{1,\dots,i-1\}$,
    $v''_j\not\in  P\cup P'$, and $c(v''_j)=j$.
\end{enumerate}
\end{claim}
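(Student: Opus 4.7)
The plan is to build both paths via Lemma~\ref{lem:path_F}, which ``lifts'' any edge $\{s,w''\}\in\H(v)$ to a path from $w''$ to $v$ passing through $V_1,\dots,V_{i-1}$ in color order, with the crucial side-guarantee that $\{s,w''\}\in\F(v_j)$ for each intermediate vertex $v_j$. The main difficulty will be forcing $P'$ and $P''$ to be vertex-disjoint (except at $v$), which I intend to resolve by a degree-counting argument controlled by Fact~\ref{fact:degmax_OUT}.

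The construction of $P'$ is immediate: the last edge $\{s_{2\qGamma+1},w\}$ of the path $P$ built in Claim~\ref{claim:path_S_W} lies in $\H(v,2\qGamma)\subseteq\H(v)$, so Lemma~\ref{lem:path_F} applied to it directly yields $P'=(w,v'_1,\dots,v'_{i-1},v)$ with $v'_j\in V_j$ for every $j\in\{1,\dots,i-1\}$. Since the sets $V_1,\dots,V_{i-1}$ are pairwise disjoint and disjoint from $W_0\cup S$, whereas $P$ is contained in $W_0\cup S$, the internal vertices of $P'$ are pairwise distinct and none of them lies on $P$, as required.

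For $P''$, I plan to pick a neighbor $w''$ of $s$ in $\H(v)$ subject to two constraints: (i)~$w''\notin P\cap W_0$, and (ii)~$\{s,w''\}\notin\F(v'_j)$ for every $j\in\{1,\dots,i-1\}$. Condition~(ii) is the crux of the argument: if it holds, then applying Lemma~\ref{lem:path_F} to $\{s,w''\}$ produces a path $(w'',v''_1,\dots,v''_{i-1},v)$ in which $\{s,w''\}\in\F(v''_j)$, so~(ii) forces $v''_j\neq v'_j$ for every $j$. Prepending the edge $\{s,w''\}$ then gives $P''=(s,w'',v''_1,\dots,v''_{i-1},v)$; thanks to the disjointness of the color classes, $P''$ meets $P'$ only at $v$, and thanks to~(i) it meets $P$ only at $s$, so all required disjointness properties are in place.

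The existence of such a $w''$ will follow from a short count. By Eq.~\eqref{init_inGamma} applied to the edge of $P$ incident to $s$, we have $\deg_{\H(v)}(s)>2^{i-1}(k-1)$; Fact~\ref{fact:degmax_OUT} gives $\deg_{\F(v'_j)}(s)\leq 2^{j-1}(k-1)$, so condition~(ii) rules out at most $\sum_{j=1}^{i-1}2^{j-1}(k-1)=(k-1)(2^{i-1}-1)$ neighbors of $s$ in $W_0$, and condition~(i) removes at most $|P\cap W_0|=k-i$ further vertices. Subtracting from $\deg_{\H(v)}(s)\geq 2^{i-1}(k-1)+1$ leaves at least $i\geq 1$ admissible choice of $w''$, which completes the construction and produces the desired $2k$-cycle $P\cup P'\cup P''$.
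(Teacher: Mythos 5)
Your proposal is correct and follows essentially the same route as the paper's proof: both paths are lifted from edges of $\H(v)$ via Lemma~\ref{lem:path_F}, and the choice of $w''$ is made by the same degree count combining Eq.~\eqref{init_inGamma} with Fact~\ref{fact:degmax_OUT}, using the dichotomy $\{s,w''\}\in\F(v''_j)$ versus $\{s,w''\}\notin\F(v'_j)$ to force $P'$ and $P''$ apart. Your count is even slightly sharper (leaving $i$ admissible choices rather than just one), but the argument is the same.
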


\begin{claimproof}
We first construct the path $P'$. Since the edge in $P$ incident to $w$ belongs to $\H(v,2\qGamma)\subseteq\H(v)$, Lemma~\ref{lem:path_F} directly gives us a path $(w,v'_1,\dots,v'_{i-1},v)$ in $G[V\smallsetminus S]$ such that $v'_j \in V_j$ for all $j\in\{1,\dots,i-1\}$ ($P'=(w,v'_1,v)$ in Figure~\ref{sparsification}). This path $P'$  intersects $P$ only in $w$ because the nodes of $P$ are contained in $W_0 \cup S$, and the only intersection of $P'$ with this set is~$w$. 

Let us now construct the second path $P''$. Let $s$ be the extremity of $P$ in $S$. Let us denote by $\H(v)[s]$ the set of edges incident to $s$ in $\H(v)$. On the one hand, by Claim~\ref{claim:path_S_W}, the edge of $P$ incident to $s$ is in $\H(v,2\qGamma)$. By Eq.~\eqref{init_inGamma}, the degree of $s$ in $\H(v,2\qGamma)$ is larger than $2^{i-1}(k-1)$. Since $\H(v,2\qGamma) \subseteq \H(v)$, we have 
\[
\big|\H(v)[s]\big|=\deg_{\H(v)}(s)>2^{i-1}(k-1).
\]   
    On the other hand, 
\[
\Big |\H(v)[s] \cap \Big (P\cup \big(\cup_{j=1}^{i-1} \F(v'_j)\big) \Big) \Big|
\leq \big|W_0\cap P\big|+ \sum_{j=1}^{i-1}\deg_{\F(v'_j)}(s).
\]    
Moreover, thanks to Fact~\ref{fact:degmax_OUT}, $\deg_{\F(v'_j)}(s)\leq 2^{j-1}(k-1)$ for every $j\in\{1,\dots,i-1\}$. As a consequence,  
\[
\Big |\H(v)[s] \cap \Big (P\cup \big(\cup_{j=1}^{i-1} \F(v'_j)\big) \Big) \Big|
\leq  \ k-i+\sum_{j=1}^{i-1}2^{j-1}(k-1)
 \leq 2^{i-1}(k-1).
\]
It follows that 
\[
\Big |\H(v)[s] \cap \Big (P \cup \big(\cup_{j=1}^{i-1} \F(v'_j)\big) \Big) \Big|  <\big|\H(v)[s]\big|.
\]
Therefore, there exists an edge $e=\{s,w''\}\in \H(v)$ which is neither in~$P$, nor in any $\F(v'_j)$ for every $j\in\{1,\dots,i-1\}$ (see Figure~\ref{sparsification}).
Finally, by Lemma~\ref{lem:path_F}, there exists a path $(w'',v''_1,\dots,v''_{i-1},v)$ in $G[V\smallsetminus S]$ such that, for every $j\in\{1,\dots,i-1\}$, $c(v''_j)=j$, and $e\in\F(v''_j)$. Since $e\notin \F(v'_j)$ for any $j\in\{1,\dots,i-1\}$, and since nodes in $P$ are either in $S$, or colored~$0$, none of the nodes $v''_1,\dots,v''_{i-1}$ are in $P\cup P'$, which completes the proof of the claim. 
\end{claimproof}

By construction, the paths $P, P'$, and $P''$ from Claims~\ref{claim:path_S_W}-\ref{claim:vertical_paths} satisfy that $P\cup P'\cup P''$ is a cycle. By denoting $|Q|$ the number of vertices of a path~$Q$, the  length of $P\cup P'\cup P''$ is $$(|P|-1)+(|P'|-1)+(|P''|-1)=(2(k-i)-1)+((i+1)-1)+((i+2)-1)=2k.$$
which completes the proof of Lemma~\ref{lem:level_cycle}. 
\end{proof}
\end{toappendix}

\paragraph{Congestion.} 
\withoutappendix{To complete the proof of Lemma~\ref{lem:combmain}, 
we establish a lemma 
that}\withappendix{The following lemma (see proof in Appendix~\ref{app-color})} allows us to bound the size of $W_0(v)$
as a function of the size of $\F(v)$ whenever all sets $\H(v)$ are empty. 
\withoutappendix{Its contrapositive provides the hypothesis needed to apply Lemma~\ref{lem:level_cycle}, and prove Lemma~\ref{lem:combmain}.}

\begin{lemma}
\label{lem:color_survive}
Let $i\in \{1,\dots,k-1\}$. Assume that, for every $j\in\{1,\dots,i\}$, every $v_j\in V_j$ satisfies $\H(v,0)= \varnothing$. Then, for every
$v\in V_i$, 
$|W_0(v)|\leq 2^{i-1}(k-1)|S|$.
\end{lemma}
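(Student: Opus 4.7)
The plan is to establish the set inclusion $W_0(v)\subseteq W_\F(v)$, where I set $W_\F(v):=\{w\in W_0 \mid w \text{ is incident to some edge of } \F(v)\}$. Combined with the trivial observation $|W_\F(v)|\leq |\F(v)|$ and with the bound $|\F(v)|\leq 2^{i-1}(k-1)|S|$ (an immediate consequence of Fact~\ref{fact:degmax_OUT} by summing $\deg_{\F(v)}(s)$ over $s\in S$), this yields $|W_0(v)|\leq 2^{i-1}(k-1)|S|$, as required.

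To prove $W_0(v)\subseteq W_\F(v)$, I will induct on $j\in\{0,\dots,i\}$ and establish the stronger quantitative claim that, for every $v_j\in V_j$ and every $w\in W_0(v_j)$, the number $d_j(w)$ of edges of $w$ in $\F(v_j)$ satisfies $d_j(w)\geq k^2-\sum_{l=1}^{j}(k-l)=k^2-jk+j(j+1)/2$. The base case $j=0$ is immediate since, for $w\in W_0=V_0$, we have $\F(w)=E(\{w\},S)$ of cardinality at least $k^2$. For the inductive step, I pick a color-consistent path $(w,v_1,\dots,v_{j-1},v_j)$ witnessing $w\in W_0(v_j)$, apply induction to the prefix ending at $v_{j-1}$, and use the inclusion $\F(v_{j-1})\subseteq\H(v_j)$ given by Eq.~\eqref{eq:init_F} to deduce that $w$ has at least $d_{j-1}(w)$ edges in $\H(v_j)$. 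Under the hypothesis $\H(v_j,0)=\varnothing$, every such edge is eventually removed during the pruning of $\H(v_j)$: either via an $s$-side removal (in which case it is placed in $\F(v_j)$, by Eq.~\eqref{eq:init_out}) or via a $w$-side removal (in which case it is lost).

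The main obstacle is controlling the $w$-side losses, since a naive summation over the levels $\gamma=1,\dots,\qGamma_j$ would give a bound of order $\qGamma_j^2$, which is too weak. The saving observation is that, by Eq.~\eqref{even_to_odd}, the $w$-side step $\H(v_j,2\gamma)\to \H(v_j,2\gamma-1)$ either leaves $w$ entirely untouched (when $\deg_{\H(v_j,2\gamma)}(w)>2\gamma$) or wipes out all of $w$'s remaining edges at once. Hence the $w$-side removal can only affect $w$ at a single level, and it destroys at most $\max_{1\leq\gamma\leq \qGamma_j}2\gamma=2\qGamma_j\leq k-j$ edges of $w$ in total. This gives the recurrence $d_j(w)\geq d_{j-1}(w)-(k-j)$, and telescoping from $d_0(w)\geq k^2$ yields $d_i(w)\geq k^2-ik+i(i+1)/2\geq k(k+1)/2\geq 1$ for every $i\leq k-1$. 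Therefore $w\in W_\F(v)$, which completes the proof.
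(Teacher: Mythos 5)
Your proof is correct and follows essentially the same route as the paper's: an induction on the color index showing $\deg_{\F(v_j)}(w)>0$ for every $w\in W_0(v_j)$, hinging on the same key observation that the $w$-side pruning step either leaves $w$ untouched or wipes out all of its remaining edges at a single level (so it costs at most $2\qGamma\leq k-j$ edges once), followed by $|W_0(v)|\leq|\F(v)|$ and the degree bound of Fact~\ref{fact:degmax_OUT}. Your bookkeeping of the losses as $\sum_{l=1}^{j}(k-l)$ is a slightly sharper telescoping than the paper's $k^2-2i\qGamma$, but the argument is the same.
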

\begin{toappendix}
\withappendix{\subsection{Proof of Lemma~\ref{lem:color_survive}}\label{app-color}}
\begin{proof}
We first show that, for every $w\in W_0(v)$, there is an edge incident to~$w$ in $\F(v)$. In other words, we show that $w$ is of degree at least~1  in $\F(v)$. We first prove the more general inequality 
\[
\deg_{\F(v)}(w)\geq k^2-2i\,\qGamma,
\]
by induction on the index $i\in\{0,\dots,k-1\}$ of set $V_i$ containing node~$v$. In fact, this inequality also holds for the case $v\in W_0$, which constitutes the base case “$i=0$" of our induction, i.e., $W_0(v)=\{v\}$. In this case, the only scenario is $w=v$. Recall that  $\F(w)=E(\{w\},S)$ (see Eq.~\eqref{eq:F_col0}). Since $w\in W_0$ has at least $k^2$ neighbors in $S$, we have $\deg_{\F(w)}(w)\geq k^2$, as desired. 

Suppose now that the inequality holds for vertices colored $i-1$, and let $v\in V_i$ and $w\in W_0(v)$. By definition of $W_0(v)$, there exists a neighbor~$v'$ of~$v$ colored~$i-1$ such that $w\in W_0(v')$. By induction, 
\[
\deg_{\F(v')}(w)\geq k^2-2(i-1)\,\qGamma.
\]
As $\F(v')\subseteq \H(v)$, it also holds that 
\[
\deg_{\H(v)}(w)\geq k^2-2(i-1)\,\qGamma.
\]
We are going to estimate the maximal number of edges incident to $w$ that can be removed, i.e., that appear in $\H(v)$ but not in $\F(v)$. By construction of $\F(v)$ and $\H(v,0)$, every edge $\{s,w\}\in \H(v)$ satisfies exactly one of the three following statements:
        \begin{enumerate}
            \item $\{s,w\}\in \H(v,0)$;
            \item There exists $\gamma\in\{1,\dots,\qGamma\}$ such that $\{s,w\}\in \H(v,2\gamma)\smallsetminus\H(v,2\gamma-1)$; 
            \item $\{s,w\}\in\F(v)$.
        \end{enumerate}
By our assumption, $\H(v,0)= \varnothing$, excluding case~1.       Let us now consider case~2, if it occurs.
        Let $\gamma \in \{1,\dots,\qGamma\}$ be the largest integer
        such that $\H(v,2\gamma)\smallsetminus\H(v,2\gamma-1)$ contains some edge incident to $w$.
        Fix any $\{s,w\}\in \H(v,2\gamma)\smallsetminus\H(v,2\gamma-1)$.
Since $\{s,w\}\not\in\H(v,2\gamma-1)$, we have  
        $
        \deg_{\H(v,2\gamma)}(w)\leq 2\gamma
        $ (cf. Eq.~\ref{even_to_odd}),
        and none of the edges incident to $w$ in $\H(v,2\gamma)$
        are kept in $\H(v,2\gamma-1)$. 
        In other words, vertex~$w$ does not appear in graph $\H(v,2\gamma-1)$, and therefore it does not appear in any of the graphs $\H(v,\alpha)$, for any $\alpha \leq 2\gamma-1$. That is, all edges incident to $w$ that are in case~2 have been suppressed simultaneously, they all belong to $\H(v,2\gamma)\smallsetminus\H(v,2\gamma-1)$. 
        Therefore, we conclude the induction step as follows:
        \[
        \begin{split}
            \deg_{\F(v)}(w)
            &=\deg_{\H(v)}(w)- \deg_{\H(v,2\gamma)}(w)\\
            &\geq \big [k^2-2(i-1)\qGamma \big] - 2\gamma\\
            &\geq k^2-2i\qGamma.
        \end{split}
       \]
    Using this inequality we deduce that  
    \[
    \deg_{\F(v)}(w)\geq k^2-2i\,\qGamma \geq k^2-i(k-i)\geq k^2-(k-1)^2>0.
    \] 
Thus every vertex $w \in W_0$ has at least one incident edge in $\F(v)$, which implies that 
 $|W_0(v)|\leq |\F(v)|$. By Fact~\ref{fact:degmax_OUT}, in graph $\F(v)$, each vertex  contained in $S$ has degree at most $2^{k-2}(k-1)$, hence $|\F(v)| \leq 2^{k-2}(k-1) |S|$,
    which proves our lemma.
\end{proof}
\end{toappendix}

\begin{proof}[Proof of Lemma~\ref{lem:combmain}]
Applying first the contrapositive of Lemma~\ref{lem:color_survive}, Lemma~\ref{lem:combmain} directly follows from Lemma~\ref{lem:level_cycle}.
\end{proof}
\withappendix{
The correctness of Algorithm~\ref{algo-C2k-freeness}, and thus of Theorem~\ref{classical_theo} as well, is a consequence of Lemmas~\ref{lemma:success_U}, \ref{lemma:success_S}, and~\ref{lemma:success_W}, which are establishing that the three calls to $\textsf{color-BFS}$ decide $C_{2k}$-freeness  with small error probability, in $O(n^{1-1/k})$ rounds in \CONGEST\ (see Appendix~\ref{app-thm1} for full details). }
\begin{toappendix}
\withappendix{\subsection{Complete analysis of Algorithm~\ref{algo-C2k-freeness} (Theorem~\ref{classical_theo})}\label{app-thm1}}
\begin{proof}[Proof of Theorem~\ref{classical_theo}]
Let us first analyze the round complexity of Algorithm~\ref{algo-C2k-freeness}, and then prove that this algorithm accepts and rejects with the specified success probabilities. 

\paragraph{Complexity.}

By construction, the threshold $\tau$ ensures that the complexities of the three calls $\textsf{color-BFS}(k,G[U],c,U,\tau)$, $\textsf{color-BFS}(k,G,c,S,\tau)$, and $\textsf{color-BFS}(k,G[V\smallsetminus S,c,W,\tau)$ are each of at most $k\tau$ rounds. 
The for-loop at Instruction~\ref{inst:nb_colorings} is performed $K$ times. So, overall, the total number of rounds performed by Algorithm~\ref{algo-C2k-freeness} is $Kk\tau=O( \log^2(1/\eps) \cdot 2^{3k} k^{2k+3} n^{1-1/k})$ rounds.

\paragraph{Acceptance without error.}

First note that a node~$u$ may reject in Algorithm~\ref{algo-C2k-freeness} only while performing one of the calls to color-BFS. The rejection by~$u$ is thus caused by some identifier $\id(u_0)$  that has traversed two paths, on the one hand a path $u_0, u_1,\dots,u_k$, and on the other hand a path $u_0,u_{2k-1},u_{2k-2},\dots,u_k$, where $u_i$ is colored~$i$, for every $i\in\{0,\dots,2k-1\}$, and has reached~$u=u_k$ from both neighbors $u_{k-1}$ and $u_{k+1}$. These paths form together a $2k$-cycle $u_0,\dots,u_{k-1},u_k,u_{k+1},\dots, u_{2k-1}$. Therefore, any node that rejects does so rightfully. In other words, if $G$ contains no $2k$-cycles, then the probability that all nodes accept is~1, as desired.

\paragraph{Rejection probability.}

We now prove that when there is a $2k$-cycle $C$ in $G$, Algorithm~\ref{algo-C2k-freeness} rejects with probability $1-\eps$.
We do the analysis by considering three cases, not necessarily disjoint but covering every possible scenarios. Each of them will reject with probability at least $1-\eps$, leading to the claimed global rejection. The probability events will be analyzed using Facts~\ref{fact:good_colors}, \ref{fact:size_S}, \ref{fact:deg_S_u_0}.
For any considered $2k$-cycle $C=(u_0,\dots,u_{2k-1})$, 
we will assume that $c(u_i)=i$ for every $i\in\{0,\dots,2k-1\}$.
This is indeed the case, with probability at least $1-\nicefrac{\eps}{3}$,
for at least one coloring of the loop (Fact~\ref{fact:good_colors}
with $\alpha=\hat\eps$).
Moreover, we will also assume that $|S|\leq 4k^2\hat\eps n^{1-1/k}$, since this occurs with probability at least $1-\nicefrac{\eps}{3}$ (Fact~\ref{fact:size_S} with $\alpha=2k^2\hat\eps$, using that $k\geq 2$).
Lastly, when considering $u_0\in C$ with $\deg(u_0)\geq n^{\nicefrac{1}{k}}$, we will assume that $|N_G(u_0)\cap S| \geq k^2$, since this occurs with probability at least $1-\nicefrac{\eps}{3}$ (Fact~\ref{fact:deg_S_u_0} with $\alpha=2k^2 \hat\eps$).

\begin{description}
    
\item[Case 1:] $C\subseteq G[U]$. Let $C=(u_0,\dots,u_{2k-1})$ with $u_i\in U$ for every $i\in\{0,\dots,2k-1\}$. By Lemma~\ref{lemma:success_U}, when a coloring of the for-loop satisfies $c(u_i)=i$ for every $i\in\{0,\dots,2k-1\}$, node $u_k$ rejects in $\textsf{color-BFS}(k,G[U],c,U,\tau)$.
Thus Algorithm~\ref{algo-C2k-freeness} rejects with probability at least $1-\nicefrac{\eps}{3}$.
    
\item[Case 2:] $C\cap S\neq\varnothing$. Let  $C=(u_0,\dots,u_{2k-1})$ with $u_0\in S$. 
If a coloring of the for-loop satisfies $c(u_i)=i$ for every $i\in\{0,\dots,2k-1\}$,
and if $|S|\leq4k^2\hat\eps n^{1-1/k}$, then, by Lemma~\ref{lemma:success_S}, $u_k$ rejects in $\textsf{color-BFS}(k,G,c,S,\tau)$. 
Thus Algorithm~\ref{algo-C2k-freeness}  rejects with probability at least $1-\nicefrac{2\eps}{3}$, using union bound.

\item[Case 3:] $C\cap S=\varnothing$ and $C\cap (V\smallsetminus U)\neq\varnothing$. Let $C=(u_0,\dots,u_{2k-1})$ with  $u_0\in V\smallsetminus(S\cup U)$, in particular $\deg(u_0)\geq n^{\nicefrac{1}{k}}$. 
Using union bound, with probability at least $1-\eps$, 
we get (1)~a coloring of the for-loop satisfying $c(u_i)=i$ for every $i\in\{0,\dots,2k-1\}$,
(2)~$|S|\leq 4k^2\hat\eps n^{1-1/k}$,
and (3)~$|N_G(u_0)\cap S|\geq k^2$.
Thanks to Lemma~\ref{lemma:success_W}, since $u_0\in W_0$, 
$u_k$ rejects in $\textsf{color-BFS}(k,G[V\smallsetminus S],c,W,\tau)$.
\end{description}
The analysis of these three cases completes the proof. 
\end{proof}
\end{toappendix}

\section{Quantum complexity}\label{sec5}

The objective of this section is to prove that by using an
adaptation of Grover search to the distributed setting,
Algorithm~\ref{algo-C2k-freeness} can be sped up to $\tilde O(n^{\nicefrac{1}{2}-\nicefrac{1}{2k}})$ rounds in a quantum setting, as claimed in Theorem~\ref{quantum_theo_simplified}. The whole section is devoted to the proof of Theorem~\ref{quantum_theo_simplified}\withappendix{, along with the lower bounds and the case of odd cycles in Sections~\ref{subsec:lowerboundproof} and~\ref{subsec:upperboundodd} in Appendix}. 

%

\subsection{Preliminaries}

\subsubsection{Quantum amplification}

\withoutappendix{We first review the framework for distributed quantum search~\cite{GallM18} in the simple case of a search problem (instead of a general optimization problem),
which can be seen as a distributed implementation of Grover's search~\cite{Grover96}, and its generalizations such as amplitude amplification~\cite{BrassardHT98}.
Then we show how this can be used 
as in the sequential setting (see \cite[Theorem~2.2]{Amb04}) for boosting the success probability of any one-sided error distributed algorithm.}\withappendix{We first adapt the framework of distributed quantum search~\cite{GallM18} for boosting the success probability of any one-sided error distributed algorithm, as in the sequential setting (see~\cite[Theorem~2.2]{Amb04}).}
Those procedures require a centralized control, thus a specific node $\leader$ will have to play that role in distributed environments. 

\begin{toappendix}
\withappendix{\subsection{From Grover's search to Monte-Carlo amplification}\label{app-mca}}
\withappendix{We first review the setting of Grover's search.}
Let $X$ be a finite set, and let $f\colon X\to \{0,1\}$. 
Let $\leader$ be any fixed vertex of a network (e.g., an elected leader), whose purpose is to find $x\in X$ such that $f(x)=1$ (assuming such elements exist). Vertex $\leader$ has  two (randomized or quantum) distributed procedures at its disposal:
\begin{itemize}
\item $\setup$: Sample $x\in X$ (or create a superposition of elements $x\in X$) such that $f(x)=1$ with probability $p_\find$ (when measuring $x$ in the quantum setting). We let $T_{\setup}$ denote the round-complexity of $\setup$. 
\item $\checking$: Compute $f(x)$, given input $x\in X$. We let $T_{\checking}$ denote the round-complexity of $\checking$.
\end{itemize}

Assume one wants to amplify the success probability of $\setup$ as follows:
(1) vertex $\leader$ has to sample $x\in X$ in the support of $\setup$,
and (2) $x$ should satisfy $f(x)=1$ with high probability whenever $p_\find\geq \eps$.
A randomized strategy consists in executing $\setup$ $\Theta(1/\eps)$ times,
and returning any of the sampled values of $x$ satisfying $f(x)=1$, using $\checking$.
The success probability can be made arbitrarily high assuming that $p_\find\geq \eps$.
This procedure has round complexity  $O((T_{\setup}+T_{\checking})/\eps)$.
But quantumly we can do quadratically better in $\eps$.

\begin{lemma}[Distributed quantum search~\mbox{\cite[Theorem~7]{GallM18}}]\label{lem:distGrovGene}
Let $f,\leader$, $\setup,\checking$, $T_\setup, T_\checking$ and $p_\find$ be defined as above.
For any $\delta>0$, there is a quantum distributed algorithm 
with round complexity $$O\left(\log(1/\delta) \cdot \frac1{\sqrt{\eps}}\left(T_{\setup}+T_{\checking}\right)\right),$$ such that (1)~node $\leader$ returns $x$ in the support of $\setup$,
and (2)~$f(x)=1$ with probability at least $1-\delta$ whenever $p_\find\geq\eps$.
\end{lemma}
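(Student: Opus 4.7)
The plan is to invoke the standard amplitude amplification procedure of Brassard-Hoyer-Mosca-Tapp~\cite{BrassardHT98}, executed coherently in the distributed setting along the lines of~\cite{GallM18}. The Grover iterate is $Q = -A\,S_0\,A^{-1}\,S_f$, where $A$ is the unitary realizing $\setup$, $A^{-1}$ is its inverse, $S_0$ is the reflection around $\leader$'s initial work state, and $S_f$ is the reflection flipping the sign of every $x\in X$ with $f(x)=1$. In the distributed model, each of $A$ and $A^{-1}$ takes $T_{\setup}$ rounds (reversing a unitary distributed procedure preserves its round complexity), $S_f$ takes $T_{\checking}$ rounds (one coherent call to $\checking$), and $S_0$ is a local operation at $\leader$. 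Starting from the state $A|0\rangle$, applying $Q$ about $t = \Theta(1/\sqrt{p_\find})$ times rotates the amplitude onto marked states, so that a measurement returns some $x$ with $f(x)=1$ with constant probability.

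Because $\leader$ only knows a lower bound $\eps \leq p_\find$ (not the exact amplitude), I would use the standard fix: draw $t$ uniformly at random from $\{0,1,\dots,\lceil c/\sqrt{\eps}\rceil\}$ for a suitable constant $c>0$. A direct calculation then shows that whenever $p_\find \geq \eps$ the sampled element $x$ is marked with probability at least some absolute constant $p_0>0$. The leader verifies the outcome with one additional call to $\checking$; if $f(x)=1$, it outputs $x$, otherwise it retries. Iterating the whole amplification $O(\log(1/\delta))$ times then boosts the overall success probability above $1-\delta$. If every attempt fails to produce a verified witness (e.g.\ because no marked element exists), the leader makes one final call to $\setup$ and outputs the sampled element; this guarantees that the returned $x$ always lies in the support of $\setup$, as required by condition~(1).

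Adding up, the total round complexity is $O\bigl(\log(1/\delta)\cdot(T_{\setup}+T_{\checking})/\sqrt{\eps}\bigr)$, matching the claim. The principal obstacle is not mathematical but structural: amplitude amplification requires that $\setup$ and $\checking$ be usable as \emph{coherent}, leader-controlled distributed unitaries, and in particular that $A^{-1}$ run within the same $T_{\setup}$ rounds as $A$ without collapsing the global superposition. This is exactly the interface offered by the quantum \CONGEST\ model as formalized in~\cite{GallM18}, where $O(\log n)$ qubits of bandwidth per edge per round suffice to thread the leader's quantum control information through the network and to invert every elementary gate of $\setup$ round by round.
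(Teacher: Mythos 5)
Your argument is correct, but it is worth noting that the paper does not actually prove this lemma: it is imported wholesale as Theorem~7 of Le~Gall and Magniez, and the only accompanying text is a remark listing the cosmetic differences between the statement used here and the cited one (restriction from optimization to search, omission of the initialization procedure, allowing $\setup$ and $\checking$ to be classical randomized procedures, and the observation that point~(1) is a well-known property of amplitude amplification). What you have written is a self-contained reconstruction of the underlying result via the standard Brassard--H{\o}yer--Mosca--Tapp machinery: the iterate $Q=-AS_0A^{-1}S_f$ costed at $2T_{\setup}+T_{\checking}$ rounds per application, the uniformly random iteration count $t\in\{0,\dots,\lceil c/\sqrt{\eps}\rceil\}$ to handle the fact that only a lower bound on $p_{\find}$ is known (yielding a constant success probability by the usual averaging calculation), $O(\log(1/\delta))$ verified repetitions to reach $1-\delta$, and a fallback sample to guarantee condition~(1). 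All of these steps are sound, and your closing caveat correctly identifies the one genuinely model-dependent ingredient, namely that $\setup$ and $\checking$ must be runnable (and invertible) coherently under the leader's control within the same round budget --- which is precisely what the distributed framework of Le~Gall and Magniez provides, and what the paper's remark about converting randomized procedures into quantum ones is alluding to. In short, your proof buys self-containedness where the paper buys brevity by citation; neither has a gap.
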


\paragraph{Remark.}

 Note that, actually,  the statement above slightly differs from Theorem~7 in~\cite{GallM18} in several aspects.
First, it is restricted to the simplest case of search problems (instead of optimization ones).
Second, there is no mention of any initialization procedure that we let outside our framework for the sake of simplifying the presentation. Third, procedures $\setup$ and $\checking$  may be  deterministic or randomized. Indeed, in the distributing setting, they can be converted into quantum procedures 
using standard techniques similar to those used for the sequential setting (see~\cite{GallM18}).
Last, point~(1) was not explicit in~\cite{GallM18}, but this is a well-known property of Grover's search and amplitude amplification.
\end{toappendix}

\withoutappendix{As an application, and similarly to the sequential case (see for instance \cite[Theorem~2.2]{Amb04}), Lemma~\ref{lem:distGrovGene} can be used to amplify the \emph{success} probability of any one-sided error Monte-Carlo distributed algorithm.}
This amplification result is of independent interest, and may be used for other purposes, beyond the design of $H$-freeness decision algorithms. We call it \emph{Distributed quantum Monte-Carlo amplification}
\withappendix{(see proof in Appendix~\ref{app-mca})}.

\begin{theorem}[Distributed quantum Monte-Carlo amplification]\label{theo:distampli}
Let $\mathcal{P}$ be a Boolean predicate on graphs.
Assume that there exists a (randomized or quantum) distributed algorithm $\mathcal{A}$ that decides $\mathcal{P}$ with one-sided \emph{success} probability $\eps$ on input graph $G$, i.e., 
\begin{itemize}
    \item If $G$ satisfies $\mathcal{P}$, then, with probability $1$, $\mathcal{A}$ accepts at all nodes;
    \item If $G$ does not satisfy $\mathcal{P}$, then, with probability at least $\eps$, $\mathcal{A}$ rejects in at least one node.
\end{itemize}
Assume further that $\mathcal{A}$  has round-complexity $T(n,D)$ 
for $n$-node graphs of diameter at most $D$.
Then, for any $\delta>0$, there exists a quantum  distributed algorithm  $\mathcal{B}$ 
that decides $\mathcal{P}$ with one-sided \emph{error} probability~$\delta$,
and round-complexity $\mathrm{polylog}(1/\delta)\cdot \frac{1}{\sqrt{\eps}}(D+T(n,D))$.
\end{theorem}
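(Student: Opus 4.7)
My plan is to apply Lemma~\ref{lem:distGrovGene} with an appropriate choice of search space, $\setup$, and $\checking$ procedures derived from $\mathcal{A}$. First, in a preprocessing phase of $O(D)$ rounds, the network elects a leader $\leader$ and constructs a BFS tree rooted at $\leader$, which will be reused throughout. Let $X$ denote the set of random coin sequences used by an execution of $\mathcal{A}$, and define $f\colon X\to\{0,1\}$ by $f(x)=1$ iff executing $\mathcal{A}$ with global randomness $x$ causes at least one node of $G$ to reject. Then $p_\find=\Pr_{x}[f(x)=1]$ equals the rejection probability of $\mathcal{A}$ on the input graph.

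Next, I would implement the two procedures. $\setup$ consists in simulating a single coherent execution of $\mathcal{A}$, producing at each node its reject/accept flag; by the remark following Lemma~\ref{lem:distGrovGene}, a randomized (or quantum) distributed algorithm can be converted into a coherent quantum procedure with the same round complexity $T_\setup=T(n,D)$. $\checking$ aggregates the disjunction of the rejection flags along the BFS tree up to $\leader$ in $T_\checking=O(D)$ rounds, so the leader learns $f(x)$ as a single bit.

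The correctness of $\mathcal{B}$ follows from the one-sided semantics of $\mathcal{A}$. If $G$ satisfies $\mathcal{P}$, every run of $\mathcal{A}$ accepts everywhere, so $f\equiv 0$ and $p_\find=0$; conclusion~(1) of Lemma~\ref{lem:distGrovGene} still returns some $x$ in the support of $\setup$, and a single call to $\checking$ confirms $f(x)=0$, so $\leader$ decides to accept. If $G$ does not satisfy $\mathcal{P}$, then $p_\find\geq \eps$, and conclusion~(2) of the lemma ensures that, with probability at least $1-\delta$, $\leader$ finds $x$ with $f(x)=1$, and decides to reject. The decision is then broadcast down the BFS tree in $O(D)$ rounds so that every node outputs the same verdict. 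The total round count is $O\bigl(\log(1/\delta)\cdot\tfrac{1}{\sqrt{\eps}}(T(n,D)+D)\bigr)+O(D)$, fitting the claimed $\mathrm{polylog}(1/\delta)\cdot\tfrac{1}{\sqrt{\eps}}(D+T(n,D))$ bound.

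The main obstacle is ensuring that $\setup$ can genuinely be run coherently, since Grover-style amplification requires the ability to uncompute $\setup$ between checking calls. The remark accompanying Lemma~\ref{lem:distGrovGene} handles this via standard quantum simulation of randomized protocols (principle of deferred measurement); however, one must be careful to fix all shared classical structure, namely the BFS tree and the identity of $\leader$, before entering the coherent phase, so that the same unitary is applied in every Grover iteration. A minor additional point is that $D$ (or an upper bound on it) should be known to $\leader$ so that $\checking$ runs for a fixed number of rounds; this can be computed in the preprocessing phase or, alternatively, guessed by doubling, absorbing at most a $\mathrm{polylog}$ factor into the final complexity.
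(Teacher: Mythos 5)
Your proposal is correct and follows essentially the same route as the paper: both reduce the statement to Lemma~\ref{lem:distGrovGene} by running $\mathcal{A}$, aggregating the global OR of rejection flags at an elected leader over a BFS tree, and invoking distributed amplitude amplification, with the one-sided guarantee following from the fact that the amplified algorithm samples only from the support of the setup procedure. The only (immaterial) differences are bookkeeping: the paper takes the search space to be $\{\mathit{accept},\mathit{reject}\}$ and folds the $O(D)$ aggregation into $\setup$ (making $\checking$ trivial), whereas you take $X$ to be the coin sequences and place the aggregation in $\checking$ — either way the round count is $\frac{1}{\sqrt{\eps}}(T(n,D)+O(D))$ per iteration.
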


\begin{toappendix}
\withappendix{\medskip\\ \indent We can now prove Theorem~\ref{theo:distampli}.}
\begin{proof}
Let $\mathcal{A}$ be the given distributed algorithm.
First we recast the problem in the framework of Lemma~\ref{lem:distGrovGene}.
Let $X=\{\mathit{accept},\mathit{reject}\}$
and $f\colon X\to \{0,1\}$ be such that $f(\mathit{accept})=0$ and $f(\mathit{reject})=1$.
Now define $\setup$ as an algorithm which (1) selects a leader node $\leader$, (2) runs $\mathcal{A}$, (3) broadcasts the existence of a rejecting node to $\leader$, (4) outputs $\mathit{accept}$ when all nodes accepts, and
$\mathit{reject}$ otherwise. Thus $T_{\setup}=T(n,D)+O(D)$.
The procedure $\checking$ is trivial since $\leader$ simply transforms  $\mathit{accept}$ in $0$ and $\mathit{reject}$ in $1$, which requires no rounds, thus $T_{\checking}=0$.
Let us call $\mathcal{B}$ the resulting algorithm after applying Lemma~\ref{lem:distGrovGene} with these procedures, and with the same parameter~$\eps$.
By construction, $\mathcal{B}$ has the claimed round-complexity. For the analysis of correctness, we consider two cases. 
\begin{itemize}
    \item Assume that $\mathcal{A}$ accepts with probability~$1$, that is, $\mathcal{A}$ samples $\mathit{reject}$  with probability $0$. Then $\mathcal{B}$ also samples $\mathit{reject}$ with probability $0$ since it samples in the support of $\mathcal{A}$.

    \item Assume now that $\mathcal{A}$ rejects with probability at least $\eps$. Then $\mathcal{A}$ samples $\mathit{reject}$ with the same probability, that is $p_\find\geq \eps$.
Thus $\mathcal{B}$ will sample $\mathit{reject}$ with probability at least $1-\delta$.
\end{itemize}
We conclude that $\mathcal{B}$ is an algorithm that solves $\mathcal{P}$ with one-sided error probability $\delta$.
\end{proof}
\end{toappendix}

\subsubsection{Diameter reduction}

As demonstrated in \cite{eden2022sublinear}, one can assume that the network has small diameter when looking for a forbidden connected subgraph.
As observed in~\cite{Censor-HillelFG22,ApeldoornV22}, this assumption is valid for both quantum and randomized algorithms in the CONGEST model \withappendix{(see Appendix~\ref{app-dr} for more details)}.

\begin{lemma}[\mbox{\cite[Theorem~15]{eden2022sublinear}}]\label{theo:diam_red}
Let $H$ be any fixed, connected $k$-vertex graph. Let $\mathcal{A}$ be a randomized (resp., quantum) algorithm that decides $H$-freeness, with round-complexity $T(n,D)$ in $n$-node graphs of diameter at most~$D$, and error probability $\rho(n) = o(1/(n\log n))$. Then there is a randomized (resp., quantum) algorithm $\mathcal{A}'$ that decides $H$-freeness with round-complexity $\mathrm{polylog}(n)(T(n,O(k\log n)) + k)$, and error probability at most $(c\rho(n)\cdot n\log n+1/\mathrm{poly}(n))$, for some constant $c>0$.
\end{lemma}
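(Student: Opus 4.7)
The plan is to exploit the fact that $H$, being connected on $k$ vertices, has diameter at most $k-1$, so any copy of $H$ containing a vertex $v$ lies inside the $k$-ball $B_G(v,k)$. I would therefore reduce $H$-freeness on $G$ to $H$-freeness on a collection of induced subgraphs of small diameter that jointly cover every $k$-ball, and then invoke $\mathcal{A}$ on each of them.

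First, compute a randomized low-diameter decomposition of $G$ --- for instance, Miller--Peng--Xu shifted-Voronoi clustering with exponential shifts of rate $\Theta(1/k)$ --- producing a partition of $V$ into clusters of weak diameter $O(k\log n)$ with the guarantee that, for any fixed $v$, the ball $B_G(v,k)$ lies entirely inside a single cluster with constant probability. Repeating the decomposition $\Theta(\log n)$ times with fresh randomness and taking a union bound over the $n$ vertices ensures that, with probability $1-1/\poly(n)$, every $v$ has at least one iteration in which its $k$-ball is fully captured by one cluster. Second, for each iteration and each cluster $C$, simulate $\mathcal{A}$ on the extended induced subgraph $G[C\cup N^k(C)]$, which is still of diameter $O(k\log n)$. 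Any copy of $H$ meeting $C$ in $G$ lies entirely inside this extended subgraph and is therefore found by $\mathcal{A}$ with probability $1-\rho(n)$; the new algorithm $\mathcal{A}'$ rejects iff some invocation rejects.

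For the complexity, the clustering costs $\polylog(n)$ rounds per iteration, the $k$-expansion costs $O(k)$ extra rounds, and each invocation of $\mathcal{A}$ takes $T(n,O(k\log n))$ rounds. Iterations are run sequentially, and within an iteration the clusters must be scheduled so as not to blow up congestion: a proper coloring of the cluster-overlap graph allows processing all clusters in $\polylog(n)$ batches. Summing gives the claimed $\polylog(n)\cdot(T(n,O(k\log n))+k)$ rounds. The error bound then comes from union-bounding the failure of $\mathcal{A}$ over the at most $n\log n$ relevant invocations (each fixed copy of $H$ is witnessed in only $O(\log n)$ iterations) plus the $1/\poly(n)$ probability that the clustering misses some $k$-ball.

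The main obstacle is the simultaneous control of three properties of the decomposition: preservation of every $k$-ball with high probability, small diameter of each cluster, and a low-chromatic overlap structure that permits parallel simulations without exceeding bandwidth. Tuning the shift rate to $\Theta(1/k)$ calibrates the first two, but making the scheduling step tight is delicate; in the quantum case one must additionally ensure that entangled resources are confined to the clusters of the currently-running batch, so as not to violate the $O(\log n)$-qubits-per-edge-per-round constraint of quantum \CONGEST{}.
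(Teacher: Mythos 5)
There is a genuine gap in your scheduling step, and it is exactly the point where the paper's choice of decomposition does the work. After you expand each cluster $C$ to $G[C\cup N^k(C)]$, the expanded subgraphs of \emph{adjacent} clusters overlap, and the "cluster-overlap graph" you propose to color can have chromatic number $\Omega(n)$: take a star whose center and leaves land in many different clusters --- every expanded cluster contains the center, so the overlap graph is a clique on up to $\Omega(n)$ clusters, and no $\polylog(n)$-batch schedule exists. An MPX-style partition gives you no control over this, so the claimed round complexity does not follow. The paper instead invokes the network decomposition of \cite[Theorem~17]{eden2022sublinear} (building on~\cite{ELKIN2022150}): a \emph{cover} by clusters of diameter $O(k\log n)$ colored with $O(\log n)$ colors such that same-color clusters are at mutual distance at least $2k+1$. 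After the $k$-expansion, same-color clusters land in distinct connected components of $G(i,k)$, so the parallel simulations within one color class are automatically congestion-free (and, in the quantum case, trivially confine all quantum communication to disjoint components); the $O(\log n)$ colors are processed sequentially. That distance separation between same-color clusters is the missing ingredient in your argument.

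Ironically, your own padding property would let you repair the proof by a different route: if, in some iteration, every ball $B_G(v,k)$ is contained in a single cluster, then every copy of $H$ (which has diameter $\le k-1$) lies entirely inside one cluster of that iteration, so you can drop the $k$-expansion altogether and run $\mathcal{A}$ on the induced subgraphs $G[C]$ of the \emph{disjoint} clusters of each partition. Vertex-disjoint induced subgraphs can be simulated in parallel with no congestion, the $\Theta(\log n)$ iterations are sequential, and the union bound over at most $n\log n$ invocations plus the $1/\poly(n)$ clustering failure gives exactly the claimed error. Either fix works; as written, with the expansion and the unsubstantiated $\polylog(n)$-coloring claim, the proof is incomplete.
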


\begin{toappendix}
\withappendix{\subsection{Diameter reduction (Lemma~\ref{theo:diam_red})\label{app-dr}}}
This result is based on a powerful technique of network decomposition from~\cite{ELKIN2022150}, which leads to the following (classical) preprocessing step, where nodes get one or more colors.

\begin{lemma}[\mbox{\cite[Theorem~17]{eden2022sublinear}}]\label{theo:pre_diam_red}
Let $G = (V , E)$ be an $n$-node graph and let $k \geq 2$ be an integer. There is a randomized algorithm  with round complexity $ k \polylog( n)$ and error probability at most $1/\poly(n)$ that constructs a set of clusters of diameter $O(k \log n)$ such that (1)~each node is in at least one cluster, (2)~the clusters are colored with $\gamma=O(\log n)$ colors, and (3)~clusters of the same color are at distance at least~$k$ from each other in~$G$.
\end{lemma}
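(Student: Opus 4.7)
The plan is to build the $\gamma = O(\log n)$ color classes iteratively via randomized ball-carving, following the Miller--Peng--Xu paradigm and its refinements in~\cite{ELKIN2022150}. In each of $\gamma$ iterations, I would cover a constant fraction of the vertices still uncovered using a family of pairwise well-separated clusters; after $\gamma = O(\log n)$ iterations, every vertex is covered with high probability.

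In iteration $c$, each still-uncovered vertex $v$ samples a ``delay'' $\delta_v$ from an exponential distribution with rate $\beta = \Theta(1/(k \log n))$, truncated to $[0, R]$ for $R = \Theta(k \log n)$. Vertex $v$ is then assigned to the cluster of the vertex $u$ that maximizes $\delta_u - d(v,u)$, provided this value is positive and $d(v,u) \leq R$; otherwise $v$ is left uncovered for this iteration. Each vertex can compute its assignment using a pipelined BFS of depth $R$: every $u$ with positive $\delta_u$ broadcasts the pair $(\id(u), \delta_u)$, and messages propagate with a decreasing ``budget'' $\delta_u - d$, so all broadcasts complete in $O(R) = k\polylog(n)$ rounds in \CONGEST.

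Three properties then need to be checked. First, each cluster has strong diameter $O(R) = O(k \log n)$, immediately from the truncation of $\delta_v$. Second, each vertex is finalized with constant probability per iteration, because $\Pr[\delta_v \geq 1]$ is a positive constant and, conditioned on that, $v$ is assigned to some cluster; so after $\gamma = O(\log n)$ iterations a union bound gives full coverage with probability $1 - 1/\poly(n)$. Third, clusters of the same iteration (``color'') must be pairwise at distance at least~$k$ in~$G$. To enforce this I would apply a ``peeling'' step at the end of each iteration: any vertex within distance $k$ of its cluster boundary (i.e., for which some other center gives an assignment value within $k$ of the winner's) is re-marked uncovered and deferred. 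By the memoryless property of the exponential distribution, only an $O(\beta k) = O(1/\log n)$ fraction of otherwise-covered vertices are peeled, so the finalized fraction per iteration remains constant.

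The main obstacle will be the probabilistic analysis underlying property~(iii): showing that the delay construction produces clusters whose boundary regions are sparse enough that after peeling a constant fraction of vertices per iteration survives, and that every surviving pair of same-color clusters is indeed at graph-distance at least~$k$. This is the core technical content of~\cite{ELKIN2022150}, and the argument rests on the memorylessness of the exponential and careful union bounds. Once this is in place, the round complexity $k\polylog(n)$ and error probability $1/\poly(n)$ follow from the $O(\log n)$ iterations, each running for $O(R)$ rounds with failure absorbed by a union bound over $n$ vertices.
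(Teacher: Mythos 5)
The paper does not prove this lemma at all: it is imported verbatim as Theorem~17 of the cited work on sublinear-round subgraph detection, which in turn rests on the network-decomposition machinery of the Elkin et al.\ reference. So there is no in-paper argument to compare against; the paper's ``proof'' is the citation. Your proposal attempts an actual reconstruction via exponential-delay ball carving, which is indeed the right family of techniques, and the skeleton (truncated exponential shifts of magnitude $\Theta(k\log n)$, pipelined broadcast of $(\id(u),\delta_u)$ with decreasing budget, $O(\log n)$ iterations each covering a constant fraction) is the standard one.

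However, your write-up has a genuine gap, and you name it yourself: the entire content of property~(3) --- that the finalized clusters of a single iteration are pairwise at distance at least $k$ in $G$, while still finalizing a constant fraction of vertices --- is deferred to ``the core technical content'' of the cited reference. This is not a routine step. The peeling condition you state (peel $v$ if some losing center's value $\delta_u - d(v,u)$ comes within $k$ of the winner's) bounds the peeled fraction by $O(\beta k)$ via memorylessness, but it does not immediately yield the separation claim: one must show that every surviving vertex has its entire radius-$\lceil k/2\rceil$ (or radius-$k$, depending on how the constants are set) neighborhood assigned to the same center, and then that two surviving vertices in distinct clusters are forced to be at distance more than $k$; the constants in the peeling threshold and in the separation conclusion must be matched, and the event must be controlled simultaneously for all vertices. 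In addition, correctness of the pipelined broadcast (each node forwarding only the current best shifted distance, with real-valued delays discretized to $O(\log n)$ bits) needs a word to justify the $O(1)$ congestion. As it stands, your proposal is a correct high-level sketch that bottoms out in exactly the citation the paper already makes, so it does not add a self-contained proof; to be complete it would need the padded-decomposition argument for same-color separation spelled out with explicit constants.
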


We now briefly review the reduction in Lemma~\ref{theo:diam_red} to convince the reader that this reduction applies to quantum protocols too. The following can be viewed as a sketch of proof of Lemma~\ref{theo:diam_red}.

\paragraph{Construction in the proof of Lemma~\ref{theo:diam_red}.} 

We define $\mathcal{A'}$ as follows.
First $\mathcal{A'}$ computes the network decomposition of Lemma~\ref{theo:pre_diam_red} with parameter $2k+1$. Therefore, the clusters of a same color are at pairwise distance at least $2k+1$. Define $G(i,k)$ as the graph induced by all vertices of color $i\in [\gamma]$, and their $k$-neighborhood. Observe that each connected component of $G(i,k)$ is of diameter $O(k \log n)$. Indeed when we enlarge the clusters of color $i$ as in $G(i,k)$ by adding their $k$-neighborhood, the subgraphs of $G(i,k)$ resulting from different clusters are disjoint. Moreover, they are not connected in $G(i,k)$. Therefore each connected component of $G(i,k)$ is obtained by the enlargement of one cluster, with its $k$-neighborhood. Thus the diameter of the component is at most the diameter of the cluster, plus $2k$.

Then, sequentially, for each color $i\in [\gamma]$, $\mathcal{A'}$ runs $\mathcal{A}$ in parallel on each connected component of $G(i,k)$. For the analysis, we first assume that the network decomposition has succeeded, and satisfies the conclusions of Lemma~\ref{theo:pre_diam_red}.
%
%
Since connected components of $G(i,k)$ have diameter $O(k \log n)$, and since we used $O(\log n)$ different colors, the round-complexity of $\mathcal{A'}$ is as claimed. 
The correctness of $\mathcal{A'}$ is a direct consequence of the following observation. $G$ contains a copy of $H$ as a subgraph if and only if there exists a color $i\in[c]$ such that $G(i,k)$ contains a copy of $H$ as a subgraph.
For the overall error probability, simply take the union bound of all the potential failure events.
\end{toappendix}

\subsection{Quantum algorithm}

\subsubsection{Congestion reduction}

Our classical algorithm has a constant success probability (cf. Theorem~\ref{classical_theo}), and its three calls to the subroutine $\textsf{color-BFS}$ have complexities upper bounded by $O(n^{1-1/k})$. 
In order to speed up this algorithm, we first
reduce its success probability to $\eps=\Theta(1/n^{1-1/k})$
such that the $\textsf{color-BFS}$ subroutines have $O(1)$ round-complexities.
Classically, one would have to repeat this process ${1}/{\eps}$ times in order to get constant success probability,
whereas ${1}/{\sqrt{\eps}}$ quantum rounds suffices thanks to Theorem~\ref{theo:distampli}.

Recall that the complexity $O(n^{1-1/k})$ of $\textsf{color-BFS}$ is a consequence of the setting of the threshold $\tau= O(n^{1-1/k})$.
Indeed, every node only needs to send information that comes from at most $\tau$ nodes $x_0$, leading to a congestion of $\tau$, and thus a round complexity of $\tau$.
By activating each node $x_0$ with probability $\eps=O(\nicefrac{1}{\tau})$ uniformly at random, one can expect to reduce the congestion to $ O(\eps\tau)= O(1)$, and to reduce the success probability to~$\eps$.
Then, we apply Theorem~\ref{theo:distampli} to get constant success probability
within $O(D\times \sqrt{\tau})$ rounds, where $D$ is the graph diameter.
Note that, later on, we will eliminate the diameter dependence by employing the standard diameter reduction offered by Lemma~\ref{theo:diam_red}.
This technique has already been used for detecting $C_4$ with round complexity $\tilde O(n^{1/4})$ in an unpublished work~\cite{censor2023discussion}, using directly Lemma~\ref{lem:distGrovGene},
and not our new Theorem~\ref{theo:distampli}. The implementation is then a bit more involved.
Each node initially samples a private token in $[{1}/{\eps}]$, activated nodes are then decided by the leader choosing a token and broadcasting it. While the token sampling is decentralized, the activation of nodes of a given token must be centralized before applying Lemma~\ref{lem:distGrovGene}, which yields also a multiplicative term~$D$.
However, for our purpose, we used a novel, and more general randomized distributed algorithm (Theorem~\ref{classical_theo}), together with a more encapsulated quantum framework (Theorem~\ref{theo:distampli}) that eases its application.

\subsubsection{Application to \textsf{color-BFS}}

To speedup Algorithm~\ref{algo-C2k-freeness}, 
we replace $\textsf{color-BFS}(k,H,c,X,\tau)$
with a randomized protocol called $\textsf{randomized-color-BFS}(k,H,c,X,\tau)$, described in Algorithm~\ref{algo_random_color} \withappendix{in Appendix~\ref{app-rbvs}}.
This randomized protocol  has smaller round complexity, and thus has smaller success probability.
Compared to $\textsf{color-BFS}(k,H,c,X,\tau)$, the modifications are the following: 
\begin{itemize}
    \item A node colored~0 does not systematically initiate a search, but does so with probability~$1/\tau$ (cf. Instruction~\ref{rdm_ins:BFS-initialization}). 
    \item Instead of using the threshold $\tau$, the randomized algorithm uses a much smaller (constant) threshold, merely equal to~4 (cf. Instruction~\ref{rdm_ins:I_size}). 
\end{itemize}

\begin{toappendix}
\withappendix{\subsection{Randomized version of \textsf{color-BFS}\label{app-rbvs}}}
\begin{algorithm}[ht] 
\caption{$\textsf{randomized-color-BFS}(k,H,c,X,\tau)$}
\label{algo_random_color}
\begin{algorithmic}[1]
\State every node $x\in X$ with $c(x)=0$ 
sends its ID to its neighbors in~$H$ with probability $1/\tau$\label{rdm_ins:BFS-initialization}
\For{$i=1$ to $k-1$}
\For{every node $v \in V(H)$ with $c(v)=i$ (resp., $c(v)=2k-i$) }
    \State $I_v\gets \{\mbox{IDs received from neighbors in~$H$ colored $i-1$ (resp. $2k-i-1)$}\}$\label{rdm_ins:Iv}
    \If{$|I_v|\leq 4$}\label{rdm_ins:I_size}
        \State $v$ forwards $I_v$ to its neighbors in~$H$ colored~$i+1$ (resp. $2k-i-1)$
    \EndIf
\EndFor
\EndFor
\For{every node $v\in V(H)$ colored $k$} 
    \If {$v$ receives a same ID from two neighbors respectively colored~$k-1$ and~$k+1$}\label{rdm_ins:reject}
        \State $v$ outputs \emph{reject}
    \EndIf
\EndFor
\end{algorithmic}
\end{algorithm}
\end{toappendix}


The protocol $\textsf{randomized-color-BFS}$ has round-complexity $O(1)$ since no node ever forwards more than $4$ identifiers. Let us analyse its success probability.
In order to take into account the fact that the first instruction of $\textsf{randomized-color-BFS}(k,H,c,X,\tau)$ is randomized,
we define the random set
$$X'_0=\{x\in X_0 \mid x\text{ sends $\id(x)$  at Instruction~\ref{rdm_ins:BFS-initialization} of Algorithm~\ref{algo_random_color}}\}.$$
For $X'_0(v)$, we proceed accordingly to the definition of $X_0(v)$ in Eq.~\eqref{eq:X_0_small_color} and~\eqref{eq:X_0_big_color}
for every $v\in H$ colored $i$, or $2k-i$, for every $i\in\{1,\dots,k-1\}$.
More formally, 
$$X'_0(v)=\{x\in X_0(v) \cap X_0'\}.$$
Then $X'_0(v)$ satisfies a fact similar to Fact~\ref{fact:link_I_X} for $X_0(v)$ \withappendix{(see Fact~\ref{rdm_fact:link_I_X} in Appendix~\ref{app-rbvs})}.

\begin{toappendix}
\begin{fact} \label{rdm_fact:link_I_X} ~
\begin{itemize}
    \item Let $(v_1,v_2,\ldots,v_{k-1})$ be a path in $H$ such that $c(v_i)=i$ for evey $i\in \{1,\dots,k-1\}$.
    If $|X'_0(v_{k-1})|\leq 4$
    then, for every $i\in\{1,\dots,k-1\}$,  $|I_{v_i}|\leq 4$ and $I_{v_i} = X'_0(v_i)$. 
    \item Let $(v_{2k-1},v_{2k-2},\ldots,v_{k+1})$ be a path in $H$ such that $c(v_i)=i$ for every  $i\in\{k+1,\dots,2k-1\}$.
If $|X'_0(v_{k+1})|\leq 4$
then, for every $i\in\{k+1,\dots,2k-1\}$, $|I_{v_i}|\leq 4$ and $I_{v_i} = X'_0(v_i)$.
\end{itemize}
\end{fact}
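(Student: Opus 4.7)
The plan is to follow the proof of Fact~\ref{fact:link_I_X} essentially verbatim, with $X_0$ replaced everywhere by the subsample $X'_0$, and the threshold~$\tau$ replaced by the constant~$4$. Only the first bullet needs to be proved; the second is obtained by the same argument applied to the path $(v_{2k-1},\dots,v_{k+1})$.

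The first step is to absorb the randomization at Instruction~\ref{rdm_ins:BFS-initialization} of Algorithm~\ref{algo_random_color} into the definition of $X'_0$: an identifier $\id(x)$ can appear in any set $I_v$ during the execution only if $x$ actually broadcast, i.e., only if $x\in X'_0$. Combined with the coloring and adjacency constraints inherited from $X_0(v)$, this gives that, before any threshold filtering is applied, the identifiers appearing in $I_{v_i}$ are exactly those of the nodes in $X'_0(v_i)$.

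Next, I would proceed by induction on $i\in\{1,\dots,k-1\}$ along the path $(v_1,\dots,v_{k-1})$ to establish both $|I_{v_i}|\leq 4$ and $I_{v_i} = X'_0(v_i)$. The inductive step mirrors that of Fact~\ref{fact:link_I_X}: if $I_{v_{i-1}}=X'_0(v_{i-1})$ and this set has size at most~$4$, then the test at Instruction~\ref{rdm_ins:I_size} passes, $v_{i-1}$ forwards its set, and aggregating over all colored-$(i-1)$ neighbors of $v_i$ on the relevant paths yields $I_{v_i} = X'_0(v_i)$.

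The only place where the hypothesis $|X'_0(v_{k-1})|\leq 4$ enters is to ensure the monotonicity $X'_0(v_i)\subseteq X'_0(v_{k-1})$ for every $i\in\{1,\dots,k-1\}$, which follows by concatenating any path witnessing $x\in X'_0(v_i)$ with the well-colored subpath $(v_i,v_{i+1},\dots,v_{k-1})$. This bounds $|X'_0(v_i)|\leq 4$ throughout and guarantees that no set is discarded along the path. I do not anticipate any real obstacle: once the random initial broadcast is absorbed into the definition of $X'_0$, the combinatorial skeleton of the argument is unchanged.
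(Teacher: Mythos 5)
Your proposal is correct and matches the paper's proof essentially verbatim: absorb the random initial broadcast into $X'_0$ so that $I_v\subseteq X'_0(v)$ always holds, then use the monotonicity $X'_0(v_i)\subseteq X'_0(v_{k-1})$ to see that the threshold $4$ at Instruction~\ref{rdm_ins:I_size} is never exceeded along the path, so nothing is discarded and $I_{v_i}=X'_0(v_i)$. The paper phrases this as a direct argument rather than an explicit induction, but the content is identical.
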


\begin{proof}
   We prove the result only for nodes colored between 1 and $k-1$ as, by symmetry, the same holds for nodes colored between $2k-1$ and $k+1$.
    It follows from Instructions~\ref{rdm_ins:BFS-initialization} and~\ref{rdm_ins:Iv} that, for every  $v\in H$, each identifier in $I_v$ corresponds to a node in $X'_0(v)$. Now, along a path $(v_1,\dots,v_{k-1})$, as long as $|X'_0(v_i)|\leq 4$ for every $i\in\{1,\dots,k-1\}$, the condition of Instruction~\ref{rdm_ins:I_size} is always satisfied for $I_{v_i}$, and every node in $X'_0(v_i)$ belongs to $I_{v_i}$. The result follows thanks to the fact that $X'_0(v_i)\subseteq X'_0(v_{k-1})$ for every $i\in\{1,\dots,k-1\}$.
\end{proof}
\end{toappendix}

We can now state a first statement on the success probability of \textsf{randomized-color-BFS}, 
which is not yet suitable for applying Theorem~\ref{theo:distampli}. Note that the two cases considered in the lemma below do not cover all scenarios. 

\begin{lemma}\label{lem:rdm_color-BFS}
Protocol $\textsf{randomized-color-BFS}(k,H,c,X,\tau)$ satisfies the following. 
\begin{itemize}
\item If there are no $2k$-cycles in $H$, then,  with probability~$1$, all nodes accept.

\item If there is a $2k$-cycle $C=(u_0,\dots,u_{2k-1})$ in $H$, with $u_0\in X$, $c(u_i)=i$ for every ${i\in\{0,\dots,2k-1\}}$, 
$|X_0(u_{k-1})|\leq\tau$, and $|X_0(u_{k+1})|\leq\tau$,
then, with probability at least $1/(2\tau)$, at least one node rejects. 
   \end{itemize}
\end{lemma}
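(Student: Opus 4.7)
The plan is to treat the two bullet points separately. For the first, I would observe that the structure of $\textsf{randomized-color-BFS}$ is identical to $\textsf{color-BFS}$ except for Instructions~\ref{rdm_ins:BFS-initialization} and~\ref{rdm_ins:I_size}, so the very same reasoning used to argue the ``acceptance without error'' part of Theorem~\ref{classical_theo} applies: if $v\in V(H)$ rejects at Instruction~\ref{rdm_ins:reject}, then some identifier $\id(x)$ with $x\in X$ reached $v$ via two length-$k$ color-monotone paths in $H$ with distinct internal vertices, and these paths together form a $2k$-cycle of $H$. Contrapositively, if $H$ is $C_{2k}$-free, no node ever rejects.

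For the second bullet, let $C=(u_0,\dots,u_{2k-1})$ be the well-colored $2k$-cycle given in the hypothesis, and let $A$ denote the event ``$u_0\in X'_0$'', i.e., $u_0$ actually sends its identifier at Instruction~\ref{rdm_ins:BFS-initialization}. By construction $\Pr[A]=1/\tau$. Let $B_1=\{|X'_0(u_{k-1})|\leq 4\}$ and $B_2=\{|X'_0(u_{k+1})|\leq 4\}$. By Fact~\ref{rdm_fact:link_I_X} applied to the two halves of $C$, whenever $A\cap B_1\cap B_2$ occurs the identifier $\id(u_0)$ appears both in $I_{u_{k-1}}$ and in $I_{u_{k+1}}$, so $u_k$ rejects at Instruction~\ref{rdm_ins:reject}. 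It therefore suffices to show that $\Pr[A\cap B_1\cap B_2]\geq 1/(2\tau)$.

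To bound $\Pr[B_1\cap B_2\mid A]$, I would use that each element of $X_0$ is placed in $X'_0$ by an independent Bernoulli$(1/\tau)$ coin, so conditioning on $u_0\in X'_0$ leaves the coins of the other elements of $X_0$ unchanged. Writing $Y_1=|X'_0(u_{k-1})\setminus\{u_0\}|$, $Y_1$ is a sum of independent Bernoulli$(1/\tau)$ variables indexed by $X_0(u_{k-1})\setminus\{u_0\}$, hence $\E[Y_1\mid A]\leq |X_0(u_{k-1})|/\tau\leq 1$ by hypothesis. Markov's inequality gives $\Pr[Y_1\geq 4\mid A]\leq 1/4$, i.e., $\Pr[B_1\mid A]\geq 3/4$; symmetrically $\Pr[B_2\mid A]\geq 3/4$. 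A union bound yields $\Pr[B_1\cap B_2\mid A]\geq 1/2$, and multiplying by $\Pr[A]=1/\tau$ finishes the proof.

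I do not foresee a serious obstacle: the only delicate point is the independence argument allowing the Markov bound to be applied conditionally on $A$, which is immediate because the activation coins at distinct nodes are independent. Care must be taken to use the hypothesis $|X_0(u_{k\pm 1})|\leq\tau$ (not $|X'_0|$) when bounding the conditional expectations, and to recall that $u_0$ itself belongs to both $X_0(u_{k-1})$ and $X_0(u_{k+1})$ so that, under $A$, the threshold $4$ on $|X'_0|$ is equivalent to $Y_i\leq 3$.
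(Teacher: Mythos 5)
Your proposal is correct and follows essentially the same route as the paper's proof: condition on $u_0$ activating (probability $1/\tau$), bound the conditional expectations of $|X'_0(u_{k\pm1})\setminus\{u_0\}|$ by $1$ using the hypothesis $|X_0(u_{k\pm1})|\leq\tau$, apply Markov and a union bound to get conditional probability at least $1/2$, and conclude via Fact~\ref{rdm_fact:link_I_X}. Your treatment of the first bullet (rejection implies a $2k$-cycle, exactly as for \textsf{color-BFS}) is also the intended, if unstated, argument.
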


\begin{proof}
The first claim is straightforward. 
Let us prove the second claim. As each element of $X_0$ belongs to $X_0'$ with probability $\nicefrac{1}{\tau}$, we get $u_0\in X_0'$ with probability $\nicefrac{1}{\tau}$. Assume that $u_0\in X_0'$.  Then  $u_0\in I_{u_1}\cap I_{u_{2k-1}}$. Conditioned to $u_0\in X_0'$, the expected sizes of $X'_0(u_{k-1})\smallsetminus\{u_0\}$ and $X'_0(u_{k+1})\smallsetminus\{u_0\}$ are at most~$1$. Thus, by Markov's inequality, each of them is at least~$4$ with probability at most~$1/4$. Therefore by the union bound, 
$$\Pr\Big[\big(|X'_0(u_{k-1})\smallsetminus\{u_0\}|\leq 3\big)\land \big(|X'_0(u_{k+1})\smallsetminus\{u_0\}|\leq 3\big)\Big]\geq\frac{1}{2} 
.$$     
It follows that, with probability at least $\nicefrac{1}{2\tau}$, we have $u_0\in I_{u_1}\cap I_{u_{2k-1}}$, $|X'_0(u_{k-1})|\leq 4$, and $|X'_0(u_{k+1})|\leq 4$.
Using Fact~\ref{rdm_fact:link_I_X}, we derive  that, for every $i\in\{1,\dots,k-1\}$, $|I_{u_i}|\leq 4$ and $|I_{u_{2k-i}}|\leq 4$, meaning that $u_k$ rejects after receiving $u_0$'s identifier along path $(u_1,\dots,u_{k-1})$, and along path $(u_{2k-1},\dots,u_{k+1})$. 
\end{proof}


\subsubsection{Application to cycle detection} 

We now have all the ingredients to prove the upper bound in Theorem~\ref{quantum_theo_simplified} regarding deciding $C_{2k}$-freeness in quantum \CONGEST. 


\begin{lemma}\label{lem:quantum-c2k}
For every $k\geq 2$, there is a randomized distributed algorithm $\mathcal{A}$ solving $C_{2k}$-freeness with one-sided success probability $1/(3\tau)$, and running in $k^{O(k)}$ rounds. 
\end{lemma}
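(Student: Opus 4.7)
The plan is to define $\mathcal{A}$ as a randomized variant of Algorithm~\ref{algo-C2k-freeness} in which each call to $\textsf{color-BFS}(k,H,c,X,\tau)$ is replaced by the corresponding call to $\textsf{randomized-color-BFS}(k,H,c,X,\tau)$, while leaving the set constructions $U$, $S$, $W$, the selection probability $p$, the threshold $\tau$, and the number of iterations $K$ unchanged. Since the constant internal threshold of $\textsf{randomized-color-BFS}$ bounds the congestion by $4$, each call completes in $O(k)$ rounds; with three calls per iteration and $K=\Theta((2k)^{2k})$ iterations, the total is $O(Kk)=k^{O(k)}$. The ``no $2k$-cycle'' direction is unchanged: any rejection produced by $\textsf{randomized-color-BFS}$ still exhibits a genuine $2k$-cycle along the two half-paths that transported a common identifier.

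For the lower bound on the rejection probability, I will revisit the three-case partition used in the proof of Theorem~\ref{classical_theo}, applied to an arbitrary $2k$-cycle $C=(u_0,\dots,u_{2k-1})$ in $G$. In Case~1 ($C\subseteq G[U]$), the degree bound $n^{1/k}$ in $G[U]$ yields $|U_0(u_{k-1})|,|U_0(u_{k+1})|\leq n^{(k-1)/k}<\tau$ deterministically. In Case~2 ($C\cap S\neq\varnothing$, say $u_0\in S$), the event $|S|\leq\tau$---true with constant probability by Fact~\ref{fact:size_S}---gives $|S_0(u_{k\pm1})|\leq\tau$. In Case~3 ($C\cap S=\varnothing$ with some heavy $u_0\in C$), Fact~\ref{fact:deg_S_u_0} makes $u_0\in W$ with constant probability and, combined with $|S|\leq\tau/(k2^{k-1})$, the density lemma (Lemma~\ref{lem:combmain}) gives $|W_0(u_{k\pm1})|\leq\tau$. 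In each case, the hypotheses of Lemma~\ref{lem:rdm_color-BFS} hold for the corresponding randomized BFS, so whenever the coloring satisfies $c(u_i)=i$ for all $i$, that call rejects with probability at least $1/(2\tau)$.

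After conditioning on the constant-probability good-$S$ event relevant to the applicable case, a single iteration of the outer loop rejects with probability at least $(2k)^{-2k}\cdot 1/(2\tau)$, since the good-coloring event is independent across iterations and has probability $(2k)^{-2k}$. Amplifying over $K=\Theta((2k)^{2k})$ iterations via $1-(1-x)^K\geq Kx/2$ for small $x$ brings this to $\Omega(1/\tau)$; tuning the hidden constants in $K$ and $\hat\eps$ absorbs the remaining losses and delivers the claimed bound $1/(3\tau)$. The main technical subtlety I anticipate is the interleaved randomness---the set $S$ is drawn once and shared across iterations, the coloring is resampled each iteration, and the inner Bernoulli layer in $\textsf{randomized-color-BFS}$ is already folded into the $1/(2\tau)$ of Lemma~\ref{lem:rdm_color-BFS}---so the conditioning must be done in the correct order (first on $S$, then on the per-iteration coloring, then applying the lemma) to avoid double counting.
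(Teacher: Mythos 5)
Your proposal is correct and follows essentially the same route as the paper: replace each $\textsf{color-BFS}$ call by $\textsf{randomized-color-BFS}$, note that the constant internal threshold makes each call $O(k)$ rounds, and rerun the three-case analysis of Theorem~\ref{classical_theo}, observing that Lemmas~\ref{lemma:success_U}, \ref{lemma:success_S}, and~\ref{lemma:success_W} each reduce to the bounds $|X_0(u_{k-1})|,|X_0(u_{k+1})|\leq\tau$ so that Lemma~\ref{lem:rdm_color-BFS} substitutes for them and yields a $1/(2\tau)$ rejection probability per well-colored iteration. Your probability bookkeeping (per-iteration bound amplified over $K$ independent iterations) differs only cosmetically from the paper's (conditioning on at least one good coloring occurring, with probability $1-\eps$, then applying Lemma~\ref{lem:rdm_color-BFS} once to get $(1-\eps)/(2\tau)=1/(3\tau)$).
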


\begin{proof}
    Let $\mathcal{A}$ be Algorithm~\ref{algo-C2k-freeness} 
    where $\textsf{color-BFS}$ is replaced by $\textsf{randomized-color-BFS}$.
    To analyze $\mathcal{A}$ we refer to the analysis of Algorithm~\ref{algo-C2k-freeness} with $\epsilon=\nicefrac{1}{3}$ in Theorem~\ref{classical_theo} \withappendix{(cf. Appendix~\ref{app-thm1})}, and only highlight the differences. 
    \begin{itemize}
        \item Complexity: By construction, the new threshold $4$ in \textsf{randomized-color-BFS}, instead of $\tau$ in \textsf{color-BFS}, ensures that the overall complexity is $4kK$, that is $O(k(2k)^{2k})=k^{O(k)}$ rounds.
  
        \item Acceptation probability: As a node can only reject in $\textsf{randomized-color-BFS}(k,H,c,X,\tau)$, then by Lemma~\ref{lem:rdm_color-BFS}, Algorithm~$\mathcal{A}$ always accepts when there are no $2k$-cycle in $G$.

        \item Rejection probability: We prove that if there is $2k$-cycle $C$ in $G$ then Algorithm~$\mathcal{A}$ rejects with probability $\Omega(\nicefrac{1}{\tau})$.  Observe that the proofs of Lemmas~\ref{lemma:success_U},~\ref{lemma:success_S} and~\ref{lemma:success_W} consist in upper bounding by $\tau$ the size of the sets $X_0(u_{k-1})$ and $X_0(u_{k+1})$, where $C=(u_0,\dots,u_{2k-1})$ is the considered cycle. In particular, whenever those lemmas are applied, we can use Lemma~\ref{lem:rdm_color-BFS} instead, on the same considered cycle, in order to lower bound the rejection probability by~$1/(2\tau)$. With this modification in mind, the rest of the proof of Theorem~\ref{classical_theo} applies, leading to a rejection probability of at least $(1-\eps)/(2\tau)=1/(3\tau)$.
    \end{itemize}
This completes the proof. 
\end{proof}

The upper bound for even cycles in Theorem~\ref{quantum_theo_simplified} is now proved in the following lemma. 

\begin{lemma}\label{lem:quantum_res}
There is a quantum distributed algorithm $\mathcal{A}$ solving $C_{2k}$-freeness with one sided error probability $\nicefrac{1}{\text{poly}(n)}$, and running in $k^{O(k)}\cdot \mathrm{polylog}(n) \cdot n^{\nicefrac{1}{2}-\nicefrac{1}{2k}} $ rounds. 
\end{lemma}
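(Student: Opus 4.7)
The plan is to combine three ingredients already at our disposal: the low-success-probability randomized algorithm of Lemma~\ref{lem:quantum-c2k}, the distributed quantum Monte-Carlo amplification of Theorem~\ref{theo:distampli}, and the diameter reduction of Lemma~\ref{theo:diam_red}. Concretely, let $\mathcal{A}_0$ denote the randomized algorithm from Lemma~\ref{lem:quantum-c2k}. By construction it has one-sided success probability $\eps = 1/(3\tau)$, where $\tau = \Theta(n^{1-1/k})$, and round-complexity $T_0 = k^{O(k)}$, independent of $n$ and $D$.

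Next, I would apply Theorem~\ref{theo:distampli} with predicate $\mathcal{P} = $ ``$G$ is $C_{2k}$-free'', algorithm $\mathcal{A}_0$, parameter $\eps = 1/(3\tau)$, and target error probability $\delta = 1/\mathrm{poly}(n)$ chosen small enough to satisfy the hypothesis $\rho(n) = o(1/(n\log n))$ of Lemma~\ref{theo:diam_red}. This produces a quantum algorithm $\mathcal{A}_1$ deciding $C_{2k}$-freeness with one-sided error $\delta$ and round-complexity
\[
\mathrm{polylog}(n) \cdot \sqrt{3\tau} \cdot \bigl(D + T_0\bigr)
\;=\; k^{O(k)} \cdot \mathrm{polylog}(n) \cdot (D+1)\cdot n^{\nicefrac12-\nicefrac{1}{2k}},
\]
using that $\sqrt{\tau} = \Theta(n^{\nicefrac12-\nicefrac{1}{2k}})$.

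Finally, I would apply Lemma~\ref{theo:diam_red} with $H = C_{2k}$ (so $k' = 2k$ vertices) to $\mathcal{A}_1$. This yields a quantum algorithm $\mathcal{A}$ deciding $C_{2k}$-freeness whose effective running diameter is reduced to $O(k \log n)$, giving round-complexity
\[
\mathrm{polylog}(n) \cdot \Bigl(k^{O(k)} \cdot \mathrm{polylog}(n) \cdot (k\log n)\cdot n^{\nicefrac12-\nicefrac{1}{2k}} + k\Bigr)
\;=\; k^{O(k)} \cdot \mathrm{polylog}(n) \cdot n^{\nicefrac12-\nicefrac{1}{2k}},
\]
while the error probability remains $c\,\delta\cdot n\log n + 1/\mathrm{poly}(n) = 1/\mathrm{poly}(n)$, as required.

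There is essentially no conceptual obstacle, but the only subtle point is the ordering: diameter reduction must be applied \emph{after} quantum amplification, because it needs the input algorithm to already have error probability $o(1/(n\log n))$; applying it before amplification would not preserve the desired final error bound. Equivalently, one must first use Theorem~\ref{theo:distampli} to reach inverse-polynomial error (which the amplification allows for free, up to a $\mathrm{polylog}(n)$ factor in the complexity via the $\mathrm{polylog}(1/\delta)$ term), and only then invoke Lemma~\ref{theo:diam_red} to eliminate the lingering factor $D$ coming from the broadcast inside $\setup$ in the proof of Theorem~\ref{theo:distampli}.
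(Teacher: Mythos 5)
Your proposal is correct and follows exactly the paper's own route: start from the low-success-probability randomized algorithm of Lemma~\ref{lem:quantum-c2k}, amplify it via Theorem~\ref{theo:distampli} to inverse-polynomial error, and only then apply the diameter reduction of Lemma~\ref{theo:diam_red} to remove the factor $D$. Your added remark about the necessity of this ordering is a correct observation that the paper leaves implicit.
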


\begin{proof}
First, by Lemma~\ref{lem:quantum-c2k}, we get a randomized algorithm with one-sided success probability~$\nicefrac{1}{3\tau}$, and round-complexity~$k^{O(k)}$.
Then, we apply Theorem~\ref{theo:distampli} to amplify this algorithm to get a one-sided error probability $\nicefrac{1}{\text{poly}(n)}$, in $\mathrm{polylog}(n)\cdot \sqrt\tau\cdot(k^{O(k)}+D)$
rounds.
Finally, we get rid of the diameter factor by applying Lemma~\ref{theo:diam_red}, which yields one-sided error probability $\nicefrac{1}{\text{poly}(n)}$, and round-complexity 
\[
\mathrm{polylog}(n)[(k^{O(k)}+O(k\log n))\sqrt{\tau}+2k]= k^{O(k)}\mathrm{polylog}(n)\cdot\sqrt{\tau}=k^{O(k)}\mathrm{polylog}(n) \cdot n^{\frac{1}{2}-\frac{1}{2k}},
\]
which completes the proof.
\end{proof} 

\withappendix{For the upper bound for odd cycles and the lower bounds in Theorem~\ref{quantum_theo_simplified}, we refer to the Appendix (cf. Sections~\ref{sec:qlb}, \ref{subsec:upperboundodd}, and \ref{subsec:collec_of_cycles}).}

\begin{toappendix}
\subsection{Quantum Lower bounds}\label{sec:qlb}
\label{subsec:lowerboundproof}

In this section, we prove the lower bounds stated in Theorem~\ref{quantum_theo_simplified}. Let us first consider even length cycles. 

\subsubsection{Even-Length Cycles}

We show that, for any $k\geq 2$, deciding $C_{2k}$-freeness requires $\Omega(n^{1/4}/\log n)$ rounds in the quantum $\CONGEST$ model. 
The proof relies on a reduction from the Set-Disjointness problem  in the two-party quantum communication framework,  to the $C_{2k}$-freeness problem in the quantum $\CONGEST$ model. The reduction is the same as in the classical case.
For $k=2$, it is based on the construction of~\cite{DruckerKO13},
and for $k\geq 3$, on  the one of~\cite{KorhonenR17}.
Recall that, in the Set-Disjointness problem, each of two players, typically referred to as Alice and Bob,  gets a subset of a universe of size $N$, say $[N]$, and they must decide whether there is an element in common in their subsets, while exchanging as few qubits (in the quantum setting) as possible. The total number of exchanged qubits is called the quantum communication complexity. The number of rounds of interactions between the two players is the round complexity. Note that the input of Alice (resp., Bob) can be represented as a binary string $x\in\{0,1\}^N$ (resp., $y\in\{0,1\}^N$), where $x_i=1$ (resp., $y_i=1$) whenever element~$i$ is in the input set of Alice (resp., Bob).

The reduction in \cite{DruckerKO13} and \cite{KorhonenR17} are based on picking a gadget graph~$G$ with $N$ edges, $e_1,e_2,\dots,e_N$, and constructing a graph $H$ composed on two subgraphs $G_A$ and $G_B$ of~$G$ connected by a perfect matching. These subgraphs are obtained from the inputs $x$ and $y$ of Alice and Bob, as follows. For every $i\in [N]$, Alice (resp., Bob) keeps edge $e_i$ if $x_i=1$ (resp., $y_i=1$), and discards it otherwise. The gadget graphs in \cite{DruckerKO13} and \cite{KorhonenR17} are different, but the construction is the same once the gadget graph is fixed. The gadget graph in \cite{DruckerKO13} has $N=\Theta(n^{3/2})$ edges, while the gadget graph in \cite{KorhonenR17} has $N=\Theta(n)$ edges. Let us denote by $H'$ the graph resulting from the former, and by $H''$ the graph resulting from the latter. 
\begin{itemize}
    \item It was shown in \cite{DruckerKO13} that if there is a $T(n)$-round \CONGEST\/ algorithm deciding $C_4$-freeness in the $n$-node graphs $H'$ then there is a $T(n)$-round two-party communication protocol for Set-Disjointness of size $N=\Theta(n^{3/2})$ with communication complexity $O(T(n) \cdot n\cdot \log n)$ bits.
    
    \item Similarly, it was shown in \cite{KorhonenR17} that  if there is a $T(n)$-round \CONGEST\/ algorithm deciding  $C_{2k}$-freeness for $k\geq 3$ in the $n$-node graphs $H''$ then there is a $T(n)$-round two-party communication protocol for Set-Disjointness of size $N=\Theta(n)$ with communication complexity $O(T(n)\cdot \sqrt{n}\cdot \log n)$ bits.
\end{itemize}
By exactly the same arguments, the same two properties hold for quantum \CONGEST\/ algorithms. Now, it is known~\cite{braverman18} that, in the quantum two-party communication model, for every $r\geq 1$, any $r$-round communication protocol solving Set Disjointness for sets of size $N$ has communication complexity $\Omega(r+\frac{N}{r})$ qubits. As a consequence, we get the following. 
\begin{itemize}
    \item For $k=2$, any quantum algorithm solving $C_4$-freeness in $T(n)$ rounds must satisfy $$T(n)\cdot n \cdot \log n = \Omega(N/T(n)),$$ with $N=\Theta(n^{3/2})$, and therefore $T(n) = \Omega(n^{{1}/{4}}/\sqrt{\log n})$.

    \item For $k\geq 3$, any quantum algorithm solving $C_{2k}$-freeness in $T(n)$ rounds must satisfy $$T(n)\cdot\sqrt{n}\cdot\log n = \Omega (N/T(n)),$$ with $N=\Theta(n)$. Therefore $T(n)= {\Omega}(n^{1/4}/\sqrt{\log n})$, as claimed.
\end{itemize}

\subsubsection{Odd-Length Cycles}

Let us now move on with establishing the lower bound stated in Theorem~\ref{quantum_theo_simplified} for cycles of odd lengths. That is, we show that, for any $k\geq 2$, the round complexity of $C_{2k+1}$-freeness is $\tilde\Theta(\sqrt{n})$ in quantum $\CONGEST$.  Once again, we use reduction from set disjointness. A gadget graph $H''$ with $\Theta(n^2)$ edges has been constructed in \cite{DruckerKO13}, for which it was proved that if there is a $T(n)$-round \CONGEST\/ algorithm deciding $C_{2k+1}$-freeness in the $n$-node graphs $H''$ then there is a $T(n)$-round two-party communication protocol for Set-Disjointness of size $N=\Theta(n^2)$ with communication complexity $O(T(n) \cdot n\cdot \log n)$ bits. The same holds for quantum \CONGEST. Therefore, using again the lower bound for set disjointness in~\cite{braverman18}, we get that any quantum algorithm solving $C_{2k+1}$-freeness in $T(n)$ rounds must satisfy $$T(n)\cdot n\cdot \log n = \Omega (N/T(n)),$$ with $N=\Omega(n^2)$. Therefore $T(n) = \Omega(\sqrt{n}/\log n)$, as claimed.

\subsection{Quantum Upper Bound for Deciding Odd Cycles}
\label{subsec:upperboundodd}

We finally show that the lower bound for odd cycles $C_{2k+1}$ with $k\geq 2$ established in the previous section is tight, which completes the proof of Theorem~\ref{quantum_theo_simplified}. Specifically, we show that there is a simple one-sided error randomized algorithm with success probability $\Omega(1/n)$ for deciding $C_{2k+1}$-freeness. As for the even case, the algorithm consists of looking for a well colored cycle, using a repetition of $K=O(1)$ random coloring~$c$ of the vertices of the graph, but with colors taken in $\{0,\dots,2k\}$ (instead of $\{0,\dots,2k-1\}$). For every coloring~$c$, we apply a procedure similar to $\textsf{randomized-color-BFS}(k,G,c,V,4)$ described in Algorithm~\ref{algo_random_color}. Indeed, for odd cycles the procedure only differs by the fact that we look for a cycle $(u_0,\dots,u_{2k})$ instead of $(u_0,\dots,u_{2k-1})$. That is, a node colored~$k$ checks reception of a same identifier transmitted along a path colored $0,1,\dots,k-1,k$ (of length~$k$), and a path colored $0,2k,\dots,k+1,k$ (of length $k+1)$. For any node~$u$,  and for any coloring~$c$, we have $|V_0(u)|\leq |V|\leq n$. Therefore, by using the same arguments as in Lemma~\ref{lem:rdm_color-BFS}, we get an algorithm with one-sided success probability $\Omega(\nicefrac{1}{n})$, and constant round-complexity. Indeed, whenever a $(2k+1)$-cycle~$C$ is well colored, the node $u_0$ with color~0 in $C$ has probability $\nicefrac{1}{n}$ of sending its identifier. Moreover, if this happens then, with constant probability, no node of the cycle~$C$ will receive more than a constant number of identifiers. One can apply our quantum boosting technique (cf. Theorem~\ref{theo:distampli}), combined with the diameter reduction technique  (cf. Lemma~\ref{theo:diam_red}). This results in an algorithm with round-complexity $\tilde O(\sqrt{n})$, and error probability $1-\nicefrac{1}{\text{poly}(n)}$.

\subsection{Quantum Upper bound for Deciding Cycles of Bounded Length}
\label{subsec:collec_of_cycles}

For every $k\geq 2$, let $F_{2k}=\{C_\ell \mid 3\leq \ell \leq 2k\}$. In this section, we show how to use our quantum algorithm deciding $C_{2k}$-freeness for deciding $F_{2k}$-freeness, in $\tilde O(n^{\nicefrac{1}{2}-\nicefrac{1}{2k}})$ rounds. In a nutshell, the (quantum) algorithm in~\cite{ApeldoornV22} uses a different bound $d_{max}$ compared to \cite{Censor-HillelFG20}, and quantize only the search for heavy cycles. Instead, we keep the same bound $d_{max}=n^{1/k}$, but we quantize the search of both light and heavy cycles.  Our quantum algorithm is rejecting with probability $1-\nicefrac{1}{\mathrm{poly}(n)}$ if there is a cycle of length $\ell\in\{3,\dots,2k\}$, and is accepting otherwise. 

Our quantum algorithm results from quantizing the classical algorithm for $F_{2k}$-freeness in~\cite{Censor-HillelFG20}, in the same way we quantized our classical algorithm for $C_{2k}$-freeness (see Algorithm~\ref{algo-C2k-freeness}). More specifically, we sequentially check the existence of cycles for pairs of lengths,  by deciding $C_{2\ell-1}$- and $C_{2\ell}$-freeness conjointly, for every $\ell\in \{2,\dots,k\}$. For every~$\ell$, the decision algorithm works under the assumption that there are no cycles of length at most $2(\ell-1)$, as if there were such cycles, they would have been detected when testing a smaller pair of length. For a fixed~$\ell$, the algorithm is quasi-identical to Algorithm~\ref{algo-C2k-freeness} with just four differences, listed below (we refer to the instructions in Algorithm~\ref{algo-C2k-freeness}). 
\begin{itemize}
    \item Instruction~\ref{ins:Set-W}: We set $W$ as the set of \emph{all} neighbors of the set $S$, with no restrictions on the degrees of the nodes.
    \item Instruction~\ref{ins:set-threshold}: the threshold is now set to $\tau=2np$.
    \item Instructions~\ref{inst:colorbfs-S} and~\ref{inst:colorbfs-W} are merged into a single $\textsf{color-BFS}(m,G,c,W,\tau)$.
\end{itemize}
In addition, in any $\textsf{color-BFS}$ aiming at detecting $(2\ell-1)$-cycles, nodes colored $\ell+1$ also forwards the received identifiers to neighbors colored $\ell-1$, which reject if one of those identifiers is equal to one received from neighbors colored $\ell-2$.

The detection of light cycles by $\textsf{color-BFS}(m,G[U],c,U,\tau)$ performs in $O(n^{1-\nicefrac{1}{\ell}})$ rounds. On the other hand, regarding the simplified detection of heavy cycles using $\textsf{color-BFS}(m,G,c,W,\tau)$, if a node $v\in V$ receives more than $|S|$ identifiers of nodes in $W$, then two of them must be neighbors of a same node $s\in S$. The two sequentially colored paths from $s$ to $v$ induced by the forwarding of those two identifiers then form a cycle of length $\leq 2\ell$.
\end{toappendix}

\section{Conclusion}\label{sec:conclusion}

Thanks to our work, which complements previous work on the matter, the complexity landscape of deciding $C_k$-freeness in \CONGEST\/ is roughly as follows. In the classical setting, for $k>3$ odd, deciding $C_k$-freeness takes $\tilde\Theta(n)$ rounds, and, for $k\geq 4$ even, deciding $C_k$-freeness takes $\tilde O(n^{1-2/k})$ rounds. In the quantum setting, for $k>3$ odd, deciding $C_k$-freeness takes $\tilde\Theta(\sqrt{n})$ rounds, and, for $k\geq 4$ even, deciding $C_k$-freeness takes $\tilde O(n^{\nicefrac{1}{2}-\nicefrac{1}{k}})$ rounds. That is, quantum effects allowed us to design quantum algorithms with quadratic speedup compared to the best classical algorithms. Interestingly, in order to get quantum speed-up for even cycles, we first establish a trade-off between congestion, and round complexity. Such trade-offs might be exhibited for other distributed tasks, which would automatically lead  to a quantum speed-up for these tasks as well.

We have already mentioned the difficulty of designing lower bounds for triangle-freeness, as well as for $C_{2k}$-freeness for $k\geq 3$. Yet, it seems that for $C_{2k}$-freeness and $k\geq 3$, the design of classical algorithms with complexity $O(n^{1-1/k-\alpha})$, or quantum algorithms with complexity $O(n^{\nicefrac{1}{2}-\nicefrac{1}{2k}-\alpha})$, with $\alpha>0$, would require entirely new techniques.

Finally, it is worth mentioning that, as far as randomized algorithms for cycle detection are concerned, the randomized color-coding phases can often be replaced by deterministic protocols based on~\cite{ErdosHM64} (see, e.g., \cite{FraigniaudO19,KorhonenR17} for its application). However, as for most distributed algorithms detecting subgraphs, our algorithm needs to pick a set~$S$ of vertices at random. For 4-cycles, randomization is not necessary, but we do not known whether randomization is necessary or not for detecting larger cycles of even length in a sublinear number of rounds.


\bibliography{biblio-cycle}

\newpage 
\pagenumbering{roman}
\setcounter{page}{1}
\end{document}